\definecolor{systemcolor}{HTML}{A9A9A9}
\definecolor{usercolor}{HTML}{2980B9}
\definecolor{assistantcolor}{HTML}{27AE60}
\newtcolorbox{systemmessagebox}{
  colback=systemcolor!10, colframe=systemcolor,
  sharp corners, boxrule=1pt, left=5pt, right=5pt, top=5pt, bottom=5pt,
  fonttitle=\bfseries, title=System
}
\newtcolorbox{usermessagebox}{
  colback=usercolor!10, colframe=usercolor,
  sharp corners, boxrule=1pt, left=5pt, right=5pt, top=5pt, bottom=5pt,
  fonttitle=\bfseries, title=User
}
\newtcolorbox{assistantmessagebox}{
  colback=assistantcolor!10, colframe=assistantcolor,
  sharp corners, boxrule=1pt, left=5pt, right=5pt, top=5pt, bottom=5pt,
  fonttitle=\bfseries, title=Assistant
}
\newtheorem{theorem}{Theorem}[section]
\newtheorem{lemma}[theorem]{Lemma}
\newtheorem{remark}[theorem]{Remark}
\newtheorem{proposition}[theorem]{Proposition}
\theoremstyle{definition}
\Crefname{assumption}{Assumption}{Assumptions}
\title{The Backfiring Effect of Weak AI Safety Regulation}
\author{Benjamin Laufer\thanks{Cornell Tech, \texttt{bdl56@cornell.edu}.},\  \ Jon Kleinberg\thanks{Cornell University.
}\textsuperscript{\ ,\S},\  \ Hoda Heidari\thanks{Carnegie Mellon University.
}\textsuperscript{\ ,}\thanks{Equal contribution.}}
\date{}
\newenvironment{acks}
{
  \section*{Acknowledgements.}
}
{
The authors thank the AI, Policy and Practice (AIPP) group at Cornell University and the Digital Life Initiative (DLI) at Cornell Tech. We thank James Grimmelmann, Aleksandra Korolova, Karen Levy, Hamidah Orderinwale, Helen Nissenbaum, Tori Qiu, and Manish Raghavan for helpful conversations and comments.

This work is supported by a grant from the John D. and Catherine T. MacArthur Foundation. Ben Laufer is also supported by a doctoral fellowship from DLI, a LinkedIn-Bowers CIS PhD Fellowship,  and a SaTC NSF grant CNA-1704527. Jon Kleinberg is also supported by a Vannevar Bush Faculty Fellowship, AFOSR aware FA9550-19-1-0183, and a grant from the Simons Foundation. Hoda Heidari acknowledges support from NSF (IIS-2040929 and IIS-2229881) and PwC (through the Digital Transformation and Innovation Center at CMU). Any opinions, findings, conclusions, or recommendations expressed in this material do not reflect the views of NSF or other funding agencies.
}
\begin{document}
\maketitle
\begin{abstract}
Recent policy proposals aim to improve the safety of general-purpose AI, but there is little understanding of the efficacy of different regulatory approaches to AI safety. We present a strategic model that explores the interactions between safety regulation, the general-purpose AI technology creators, and domain specialists--those who adapt the technology for specific applications. Our analysis examines how different regulatory measures, targeting different parts of the AI development chain, affect the outcome of this game. In particular, we assume AI technology is characterized by two key attributes: \emph{safety} and \emph{performance}. The regulator first sets a minimum safety standard that applies to one or both players, with strict penalties for non-compliance. The general-purpose creator then invests in the technology, establishing its initial safety and performance levels. Next, domain specialists refine the AI for their specific use cases, updating the safety and performance levels and taking the product to market. The resulting revenue is then distributed between the specialist and generalist through a revenue-sharing parameter. Our analysis reveals two key insights: First, weak safety regulation imposed predominantly on domain specialists can backfire. While it might seem logical to regulate AI use cases, our analysis shows that weak regulations targeting domain specialists alone can unintentionally reduce safety. This effect persists across a wide range of settings. Second, in sharp contrast to the previous finding, we observe that stronger, well-placed regulation can in fact mutually benefit \emph{all} players subjected to it. When regulators impose appropriate safety standards on both general-purpose AI creators and domain specialists, the regulation functions as a commitment device, leading to safety and performance gains, surpassing what is achieved under no regulation or regulating one player alone. 
\end{abstract}

\section{Introduction}

 As Generative Artificial Intelligence (AI) and related technologies gain traction, there is an increasing number of proposals for regulation to improve safety. Many of these proposals must at some level grapple with the following question:
 Who should be targeted with AI regulation--the producers of general-purpose AI models\footnote{Such AI models are at times referred to as ``foundation'' or ``frontier'' models~\citep{bommasani2021opportunities,anderljung2023frontier}. Throughout this paper, we will use the technology of general-purpose AI to refer to large-scale models that can be adapted to a wide range of tasks and domains.} or the domain-specialists who adapt the technology for specific use cases? There are seemingly reasonable positions that favor regulating one entity, the other, both, or neither. For example, the downstream domain specialists and deployers are some of the last entities to exert influence on the technology before it interacts with consumers directly, so it is perhaps reasonable that regulation for consumer safety might target requirements at these entities. In contrast, the upstream entities developing general-purpose models exert impact on these models earlier in their development trajectories, facilitating or hindering downstream adoption, which might justify certain regulatory requirements including disclosure mandates \cite{longpre2025house} and liability standards. Of course, even regulations that solely target one of these actors might impact the other, because their incentives and decisions are intertwined.

We have seen variants of these debates play out as different jurisdictions and policymakers have proposed various regulatory approaches to AI. A number of existing regulation proposals leverage the observation that AI is developed by multiple, interacting actors. Examples include Colorado's AI Act, California's Senate Bill 1047, and the EU AI Act. These frameworks attempt to define the relevant actors, such as base developers and downstream deployers, in order to design conditions and stipulations for determining whether and to whom liability standards, disclosure requirements, or other interventions apply. These conditions and stipulations vary across proposals and policies, with possibly significant implications for the incentives of the players involved in the development of AI technologies and applications. 

\textbf{Modeling the impact of regulatory regimes on AI performance and safety.} Given that there are a range of different possible approaches to targeting AI regulation and assessing the impact of each alternative empirically is prohibitive, formal models can enable reasoning about the various regulatory impacts. This paper puts forward a strategic model of the interactions between a general-purpose technology producer ($G$) and a domain specialist ($D$), building on the ``fine-tuning games'' model proposed by \citet{laufer2024fine}. As the two actors develop an AI technology, they each decide whether and how to invest in two key attributes of technology: \textit{performance}, denoted by $\alpha$, and \textit{safety}, denoted by $\beta$. We assume these actors are operating in a market; each actor experiences some cost for their investment in safety and performance, and obtains a share of the revenue out of the deployment of the AI product/service in the market. 

To provide some intuition for what this investment pattern might look like, imagine a firm, $G$, producing a general-purpose language model that may be used in three domains -- say, by healthcare providers ($D_1$), law firms ($D_2$), and financial services ($D_3$). The general-purpose developer moves first, and in light of the particular costs she faces and the anticipated responses from the downstream players, she chooses a certain strategy, represented by a pairing of performance and safety investments $(\alpha_0,\beta_0)$. Once this investment has been made, the attributes of the technology at this stage can be thought of as akin to a `base camp,' from which domain specialists may choose to climb further by investing their own effort toward improving the technology's safety and/or performance in their respective domains. Of course, each domain faces their own delicate balance of safety risks and performance costs, so the ultimate safety and performance pairs $(\alpha_i,\beta_i)$ ($i=1,2,3$) differ across the three domains. See Figure \ref{fig:teaser} (a) for a visualization of the investment decisions make by $G$, $D_1$, $D_2$, and $D_3$.

Equipped with this intuition about how these actors behave in an unregulated market, we now turn to our notion of regulation. We conceive of regulation as shaping the game in which players choose their strategies. In particular, this paper will focus specifically on safety regulation. We assume regulation imposes a constraint in the form of a lower bound on the players' choice of safety investment (i.e., $\beta_i$'s). If a player does not meet the regulatory lower bound on safety, they will be penalized. 
This regulatory regime can be described using two parameters $(\theta_G, \theta_D)$, representing the set of thresholds constraining the strategy space of $G$ and $D$, respectively. The regulation can target the domain-specialist only ($\theta_G=0, \theta_D>0$), the generalist only ($\theta_G=\theta_D>0$), both players ($\theta_D>\theta_G>0$), or neither ($\theta_G=\theta_D=0$). In addition to the decision of who to target, of course, the regulation encodes a decision about what level to set the safety standards. Smaller values of $\theta$ are less costly to comply with, and hence capture weaker safety requirements. 

\begin{figure}
    \centering
    \includegraphics[width=0.6\linewidth]{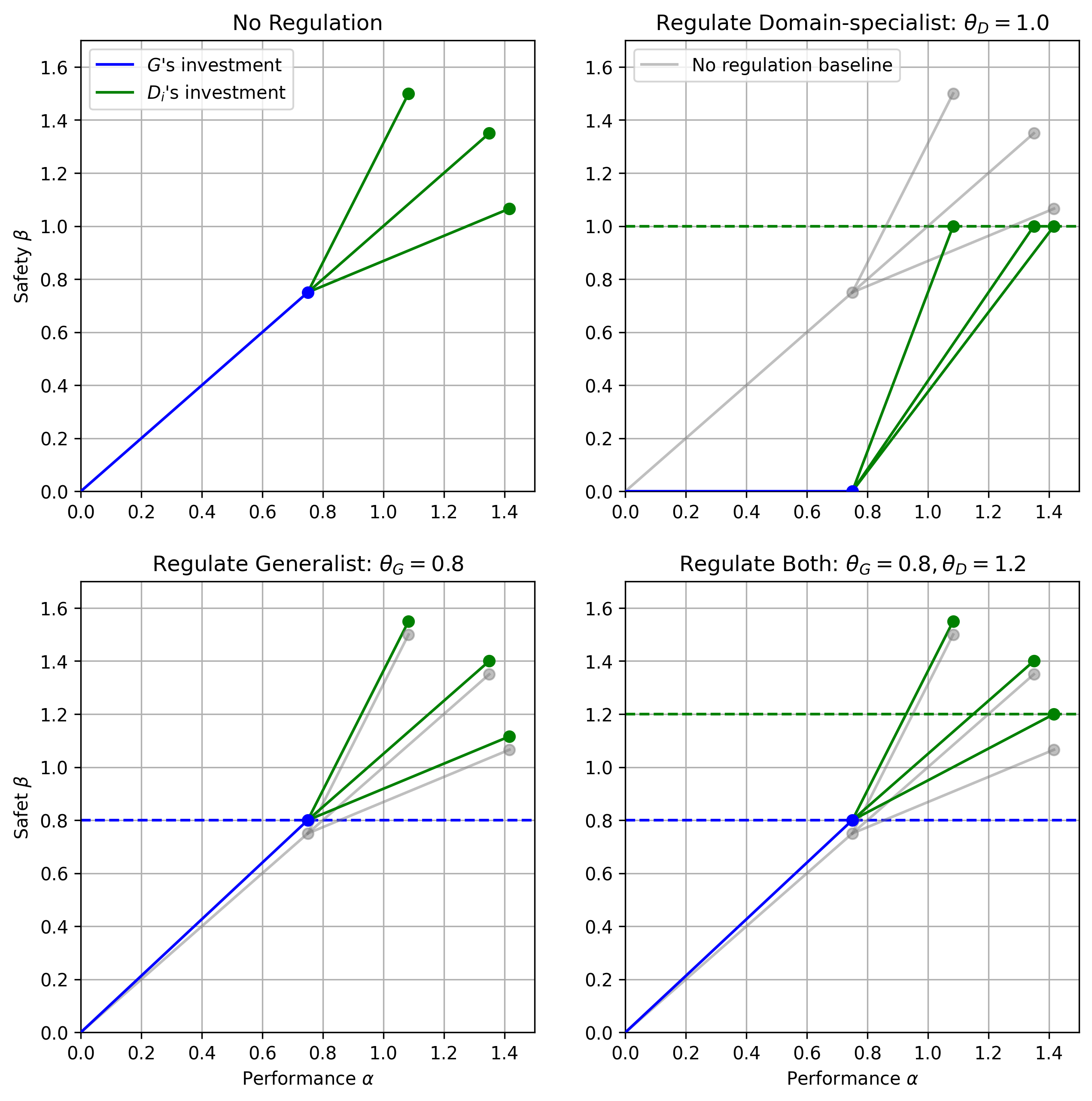}
    \caption{A depiction of a particular instance of our game-theoretic model. This instance of the game consists of one general-purpose producer and three domain-specialists. Each player has a different utility in performance-safety space which dictates the path of development. The no-regulation game (upper left) reveals the players' investment efforts when no floor is imposed on safety. Regulating the domain-specialist alone (upper right) exhibits \textit{backfiring} for all three domains, meaning the regulated safety level is lower than it would be without regulation. In this particular example, the same floor is assumed for all three domain-specialists. 
    Regulating the generalist alone (lower left) improves the safety level slightly across all three domains, compared to no-regulation. Finally, a regime that targets both generalist and specialists with regulation (lower right) is able to 1) retain the improved safety performance from regulating the generalist, 2) improve the safety level of least-safe domain-specialist, while 3) avoiding backfiring.
    }
    \label{fig:teaser}
\end{figure}

\textbf{First insight: Weak safety regulation can backfire.} Turning back to our example in Figure \ref{fig:teaser}, we observe that something striking happens in the second panel, which depicts a scenario where regulation is targeted at the domain-specialist. In this scenario, the safety investment has gotten worse. How could safety regulation -- a simple floor dictating a minimum investment level -- lead to a less safe product? The mechanism leading to this phenomenon arises because the generalist $G$ is aware of the regulatory safety requirements imposed on domain-specialists, and can use it to her advantage. When the regulator requires that a technology meets a certain level of safety investment by the time it reaches the market, the generalist has an opportunity to engage in a sort of \textit{free-riding} behavior. The generalist is comfortable setting up the base camp at lower altitude, because she knows that the domain-specialist nonetheless has to climb to a level of investment that complies with regulation.\footnote{Of course, there may be scenarios where the regulation is too high, or the generalist's investment is too low, such that the domain specialist gives up entirely and abstains from any involvement in the technology's production. We handle these scenarios in our analysis of the model, but our point here is to demonstrate that there are circumstances in which regulation gives the generalist leverage to free-ride by lowering the initial investment in safety.} 

The scenario described above depicts a single instance of a more general phenomenon, which we describe as regulatory \textit{backfiring}. A safety regulation backfires if it yields a total investment in safety lower than the safety investment achieved with no regulation. We identify a number of properties of this phenomenon -- for example, backfiring only occurs when the regulation is \textit{weak}, meaning the floor on safety is at or below the level reached in the absence of regulation. Backfiring can occur when $D$ is targeted with regulation or when both $G$ and $D$ are targeted with regulation, but does not occur when only $G$ is targeted. Our results suggest that this non-monotonic effect of regulation occurs for a broad set of games with different cost and revenue functions. 
Analytically, we prove that backfiring occurs for all 
quadratic-cost games in which the players invest any non-zero amount in both performance and safety without regulation (these sets of games are defined more formally in Section \ref{subsec:multi-attribute-with-regulation-soln}). 

\textbf{Second insight: Properly-placed safety regulation can improve the technology and the players' utilities.} While weak regulation targeted predominantly at the domain-specialist can backfire, our results suggest that other regulatory regimes fare better. When safety standards are directed at both $G$ and $D$ with appropriate strength, regulation can improve not just safety, but the utilities of both players, defined as their revenue share minus their investment cost. This result might seem unintuitive: Regulation only reduces the set of choices available to each actor in our model, so how can regulation lead to choices that mutually benefit both generalist and specialist? What is stopping the players from choosing utility-optimal strategies in the absence of regulation? The reason this phenomenon occurs is a Prisoner’s Dilemma-style result: The players’ unregulated strategies, which are chosen to maximize their individual utility, fail to yield the strategies that that are globally optimal for both players. By constraining the actors away from the strategies that enable this kind of selfish behavior, regulation can act as a commitment device. The generalist can increase her investments in safety with the assurance that the domain specialist will contribute, too, rather than free-ride off of $G$’s efforts.

\begin{figure}
    \centering
    \includegraphics[width=\linewidth]{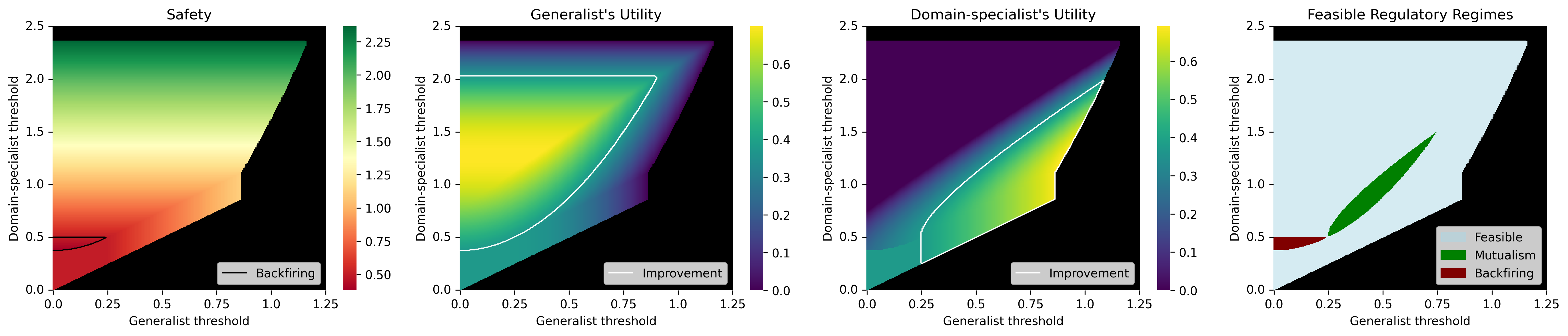}
    \caption{Simulated results for an example of two-player AI regulation model with quadratic costs. Players make costly investments in performance and safety (visualized on left), and then receive some share of revenue that comes from the total investment levels. The players' utilities -- defined as their share of the revenue minus the cost of their investment -- is visualized for the Generalist (second from left) and the Domain-Specialist (third from left). Colors represent different utility outcomes depending on different combinations of regulatory constraints $(\theta_G,\theta_D)$ which constrain the players' safety investments. The game is solved over a grid of plausible regulations: $\theta_G \in [0,1.25]$, $\theta_D \in [\theta_G,2.5]$ using increments of $0.005$, with a total of 105,651 
    simulated regulation games. Regulations that lead the players to \texttt{abstain} are depicted in black. There exists a region where non-zero regulation yields lower safety than no regulation (highlighted on leftmost plot). There also exists a region where regulation yields improvements to each players' utility (highlighted on two center plots). The rightmost plot summarizes our results by showing the backfiring and mutualism outcomes in the $\theta_G, \theta_D$ space. Parameter values for producing the plot: $C_0 = C_1 = I_2$, $r_\alpha = r_\beta = 1$, and $\delta=0.5$.}
    \label{fig:fourpanel-separable}
\end{figure}

Games can exhibit both backfiring and mutualistic regulations, depending on who is targeted and at what threshold. For example, Figure \ref{fig:fourpanel-separable} depicts a particular instance of our game setting with one generalist and one domain-specialist. For the particular cost and revenue functions depicted, backfiring regulations and Pareto-improving regulations are possible, and the regulations yielding these effects are visualized. This figure represents a systematic sweep of all pairs of thresholds directed at the generalist, the domain specialist, or both. The pair of thresholds $(0,0)$ corresponds to the case of no regulation. The safety implications of various regulations are depicted using a red-yellow-green color scale in the leftmost plot, while the utility implications for the generalist and specialist are depicted using a purple-green-yellow color scale in the center plots. Compared to the safety and utility values at the origin points, the backfiring and Pareto-improving regions are regulations which lead to lower safety and higher utilities for both players, respectively. Although this figure depicts an example of a single game, our analysis proves that these backfiring and Pareto-improving regulations exist for a broad class of games with quadratic costs. Namely, we find that backfiring occurs in all games in which the market incentivizes some non-zero investment in both performance and safety without regulation. Our characterization of when this phenomenon occurs includes \textit{separable} scenarios (where the cost of investing in performance is independent of the cost of investing in safety), \textit{complementary} scenarios (where investing in one makes the other cheaper), and weakly \textit{interfering} scenarios (where the cost of investing in one makes the other more expensive) up to a certain bound, which we specify. We provide similar bounds for the mutualism results. 

\section{Related work}

\textbf{AI Safety Regulation.} The rise of AI-related incidents have motivated several AI incident repositories to keep track of common risks \citep{aiid,aiaaic_taxonomy}. Scholars have attempted to taxonomize AI harms to make sense of the growing array of incidents~\cite{weidinger_taxonomy_2022, shelby_sociotechnical_2023}. Some existing AI risk taxonomies organize risks primarily by \textit{domains}. These include risks to the \textit{physical} or \textit{psychological well-being} of people, \textit{human rights and civil liberties}, \textit{political and economic structures}, \textit{society and culture}, and \textit{the environment} \cite{abercrombie_collaborative_2024}. Others categorize these risks based on how they arise, including \textit{malicious use}, \textit{malfunctions}, or \textit{systemic effects} from wide adoption \cite{bengio2025international}. In our stylized model, we capture all such considerations using a single scalar that can be toggled by players through investments in safety. Common themes in policy drafts and recommendations stress the importance of balancing the goals of innovation and risk reduction, appropriately defining and targeting thresholds, and the impacts on incentives \cite{chayes2025draft,gaske2023regulation}.

\textbf{Game-theoretic models of AI development.} A line of work uses formal models to reason about the strategic and social implications of machine learning (e.g., \cite{hardt2016strategic,liu2022strategic,blum2021one,harris2021stateful,donahue2021model}). More recently, there have been proposals for using modeling approaches to understand the social and safety implications of generative AI \cite{dean2024accounting,sun2025game}. Attempts to model the development process of generative AI often make use of the observation that development is \textit{sequential} and involves \textit{multiple interacting actors} \cite{cen2023ai}. Many existing works explore different strategic aspects of the market for AI using a stackelberg game. For example, \citet{taitler2025data} use a sequential game to explore incentives for data-sharing. Further time-steps, players and decisions have been added to explore particular topics, including the level of openness and market entry dynamics \cite{xu2024economics,wu2025navigating}. \citet{taitler2025selective} introduce a particular notion of regulation in a related game-theoretic setting, and similar to our paper, they conceive of regulation as a restriction on the strategy space for developers of generative AI. Though work explicitly examining the interaction between performance and safety attributes in this setting is limited, \citet{jagadeesan2024safety} explores the interaction between these attributes in a linear regression setting in order to understand firms' market entry decisions.

\textbf{The fine-tuning games model.} 
Our work builds on and extends the \textit{fine-tuning games} model proposed in \citet{laufer2024fine}. That model builds a one-dimensional game in which players must bargain over a revenue-sharing contract before investing in performance in sequence. We extend this model in two ways: First, the players' strategy space is two-dimensional in our model, to capture the dynamic that often arises where a regulator wants to steer the technology in a direction (e.g., safety) other than that which is most-profitable (e.g., a baseline combination of performance and safety, dictated by the unregulated market). Second, we introduce the regulation, which can be seen as a \textit{floor} constraining the feasible strategy space of each player. This allows us to explore when targeting generalists, specialists, both or neither is preferable for achieving desiderata like safety.

\textbf{Economic theory and contracts.} Our work leverages pre-existing approaches that are common in the theoretical economics and game theory literatures to reason about the set of possible impacts of AI safety regulation. In particular, we draw inspiration from canonical works in contract theory \cite{grossman1992analysis, ross1973economic} and the coordination of supply chains \cite{cachon2003supply}. Our model is a variant of a Principal-Agent problem in which the strategy space is defined by two real-valued attributes, and the cost and revenue are functions of these attributes. In this way, our model draws inspiration from \citet{viscusi1993product} analyzing the possible effects of products liability schemes on innovation and safety. That model --- a one-player model with no order-of-play effects --- demonstrates that liability does not, necessarily, hamper innovation. 
We assume that innovation is sequential, meaning that an entity's investment in safety or performance builds on the contributions of past investments  \cite{bessen2009sequential, green1995division}.\footnote{However, some have observed that safety investments can \textit{degrade} as the result of fine-tuning performance investments especially when model weights are open \cite{qi2023fine, qi2024evaluating}. This scenario, and especially the interaction effects with model openness, are ripe areas for further analysis.} 
In what we call the `no-regulation' game, we assume the players revenue-share via a linear contract \cite{dutting2019simple}, a common assumption in the literature (e.g., \cite{dutting2025multi, dutting2023multi, alon2022bayesian,carroll2015robustness}). However, one way to interpret our mutualism results (Sections \ref{subsec:commitment} and \ref{subsec:mutulaism-proof}) is as a demonstration that linear contracts are sub-optimal in our setting. Our notion of regulation can be viewed as a set of non-linear contracts defined by a set of strategy constraints, and our results suggest these more expressive contracts can yield higher utility. Of course, still other forms of contracts are possible and may yield different utility implications. We leave these directions to future work.

\section{A Model of Regulating AI Safety}
\label{sec:model}

Here we offer a formal model for analyzing the effects of regulation on the development of AI applications. Our model is a sequence of sub-games between two players. Each player will choose whether and how to contribute to the technology at a certain point in the development of the technology, and some revenue is received depending on the ultimate attributes of the technology. The players are constrained by regulatory floors on safety, which will be set exogenously by a regulator. 

\textbf{Players.} A general-purpose producer, referred to as $G$, invests in a technology that may be adapted by domain-specialist(s), referred to as $D_i$. The generalist is the first to invest in the technology, meaning that before $G$ moves, the technology's attributes begin at value $0$. Each specialist $D_i$ makes an investment after the generalist has moved.

\textbf{Technology.} We say a technology is described by one or more non-negative attributes $\gamma \in \mathbb{R}^d$. In this paper, we are interested in two attributes in particular: \textit{performance} and \textit{safety}.\footnote{We note, however, that the results and inferences we draw may hold for other attributes that relate to one another with similar structure.} Unless otherwise specified, we assume $d=2$ and that $\gamma = \left[\alpha, \beta\right]$ where $\alpha$ refers to performance and $\beta$ refers to safety.

\textbf{Economic interests.} Each player, acting in a way that maximizes their self-interest, invests some non-zero amount in the technology. $G$ invests to $\gamma_0$ and each $D_i$ further invests to $\gamma_i$. Accordingly, each must pay a cost for their investment, $\phi_0(\gamma_0)$ and $\phi_{i}(\gamma_i;\gamma_0)$, respectively. After both players invest, they share a revenue that is brought in as a function of the ultimate attributes of the technology in domain $i$, $r_i(\gamma_i)$. We assume that, for some $\delta_i \in [0,1]$, $G$ gets $\delta_i r_i(\gamma_i)$ in revenue and $D_i$ gets $(1-\delta_i)r_i(\gamma_i)$. $\delta_i$ could either be exogenously fixed and given ahead of the game play, or it can be the result of bargaining between $G$ and $D_i$. When we analyze a game with only one specialist, we will drop the subscript and use $\delta$. 

\textbf{Regulation} We model regulation as imposed exogenously on the environment. Regulation is a \textit{minimum constraint} on the safety investment that the players make. A regulation that targets $G$'s investment is characterized by a value $\theta_G \in \mathbb{R}^{+}$. A non-zero regulation would constrain $G's$ strategy such that $\gamma_0[1]\geq \theta_G$. A regulation targeted at the domain-specialist, similarly, would take the form $\theta_{D}$ and lead the domain-specialist to be constrained in their strategy so $\gamma_{i}[1]\geq \theta_{D}$. 

\textbf{Gameplay.} The game proceeds as a sequence of subgames:

\begin{itemize}
    \item Regulation $\{\theta_G, \theta_{D}\}$ is announced.
    \item $G$ chooses to either \texttt{abstain} or invest in the technology, bringing it to $$\gamma_0 = \left[ \begin{array}{c}
        \alpha_0 \\
        \beta_0
    \end{array} \right].$$
    \item $D_i$ chooses to either \texttt{abstain} or invest in the technology, bringing it to $$\gamma_i = \left[ \begin{array}{c}
        \alpha_i \\
        \beta_i
    \end{array} \right].$$\\
    \item The technology brings in revenue $r_i(\gamma_i)$, which will be shared such that $G$ receives $\delta_i r_i(\gamma_i)$ and $D$ receives $(1-\delta_i)r_i(\gamma_i)$.
\end{itemize}

The utilities of the players are given below: 
$$U_G := \sum_i\delta_i r_i(\gamma_i) - \phi_0(\gamma_0); \ U_{D_i} := (1-\delta_i) r_i(\gamma_i) - \phi_i(\gamma_i;\gamma_0)$$

The best-response sub-game perfect equilibrium strategy for the generalist and specialist, respectively, can be expressed as the following optimization problems:

$$\gamma_0^* := {\arg\max}_{\gamma_0} U_G \ s.t. \beta_0\geq \theta_G; \ \ \gamma_i^* := {\arg\max}_{\gamma_i} U_{D_i} \ s.t. \beta_i\geq \theta_G.$$

Finally, the players will opt to \texttt{abstain}, if they prefer $0$ utility to any other feasible strategy. If either player chooses to \texttt{abstain}, then \textit{both} players receive $0$ utility.

\section{Closed-Form Solutions}

In this section, we analyze our regulation game where players' cost functions can be expressed as a two-degree quadratic equation. 
Specifying a quadratic function over two attributes requires defining a matrix of cost coefficients. 
The cross-terms in this matrix represent how investments the attributes interact with one another. 
For the technical portions of the paper, we use the case of one domain specialist ($D$) as our focus.
We therefore have the following cost and revenue functions:
$$\phi_0(\gamma_0) = \gamma_0^T C_0\gamma_0,$$
$$\phi_1(\gamma_1;\gamma_0) = (\gamma_1-\gamma_0)^T C_1 (\gamma_1 - \gamma_0),$$
$$r(\gamma_1) = r^T \gamma_1,
$$

$$\text{where}\ \ \ \ C_0 = \left[\begin{array}{cc}
    c_{0,\alpha\alpha} & c_{0,\alpha\beta} \\
    c_{0,\alpha\beta} & c_{0,\beta\beta}
\end{array}\right]; \ \ \ C_1 = \left[\begin{array}{cc}
    c_{1,\alpha\alpha} & c_{1,\alpha\beta} \\
    c_{1,\alpha\beta} & c_{1,\beta\beta}
\end{array}\right]; \ \ \ r= \left[\begin{array}{c}
    r_\alpha  \\
    r_\beta
\end{array}\right].$$

The players' utilities can thus be expressed as:
$$U_G:= \delta r^T \gamma_1 - \gamma_0^T C_0\gamma_0,$$
$$U_D := (1-\delta) r^T \gamma_1 - (\gamma_1-\gamma_0)^T C_1 (\gamma_1 - \gamma_0).$$
It should be noted that not all values for the above parameters correspond to realistic or interesting scenarios. For example, we assume that the diagonal entries of both cost matrices $c_{0,\alpha\alpha},c_{0,\beta\beta},c_{1,\alpha\alpha},c_{1,\beta\beta}$ are non-negative, to capture that investments in goods like safety and performance should have non-zero increasing cost. Although the cross-terms of the cost matrices can be negative, we require that $c_{0,\alpha\beta}>-\sqrt{c_{0,\alpha\alpha}c_{0,\beta\beta}}$ and $c_{1,\alpha\beta}>-\sqrt{c_{1,\alpha\alpha}c_{1,\beta\beta}}$, since it should not be that some combination of investments in $\alpha,\beta$ come at negative cost. 
Each players' choices over $\alpha$ and $\beta$ should be considered as simultaneous across the two attributes, representing a joint optimization over performance and safety. 

In this section, we start by providing sub-game perfect equilibria strategies in the case with no regulation, and then provide solutions for the regulated game. The form of problem we are dealing with is a continuous, not-necessarily-convex optimization problem with a constant number of constant-degree polynomials in a constant number of variables. Broadly, the strategy is to put forward a small number of candidate points that must be checked using a limited number of steps. These checks can be implemented numerically. 
After stating the solved subgame perfect equilibria strategies, we will move to a slate of numerical results and findings analyzing the effects of regulation. 

\subsection{Subgame perfect equilibria strategies without regulation}
\label{subsec:multi-attribute-no-regulation-soln}

In this section, we state the subgame perfect equilibrium strategies to the game under \textit{no regulation}. These can provide intuition about the behaviors in the game, before we add the additional complexity of regulation. These can be seen as a strict generalization of the Fine-Tuning Games solutions \cite{laufer2024fine} to games with two attributes that can interact. 
\begin{proposition}
    Given an AI regulation game with quadratic costs, no regulation, 
    and revenue-sharing parameter $\delta$, domain specialist $D$'s subgame perfect equilibrium strategy is one of the values in the following set:
    $$\gamma_1^* \in \left\{
    \gamma_0 + \frac{(1-\delta)}{2}C_1^{-1} r,
    \left[\begin{array}{c}
         \alpha_0 \\
         \beta_0+\frac{(1-\delta)r_\beta}{2c_{1\beta\beta}}
    \end{array}\right],
    \left[\begin{array}{c}
         \alpha_0 +\frac{(1-\delta)r_\alpha}{2c_{1\alpha\alpha}}\\
         \beta_0
    \end{array}\right],\left[\begin{array}{c}
         \alpha_0\\
         \beta_0
    \end{array}\right]
    \right\}$$
The strategy is the feasible candidate which maximizes $U_D$, subject to $U_D\geq 0, \alpha_1 \geq \alpha_0, \beta_1\geq \beta_0$.
\label{prop:D-strategy-noreg}
\end{proposition}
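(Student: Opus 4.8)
The plan is to recognize $D$'s problem as a quadratic program over a translated nonnegative orthant and to enumerate its boundary/stationarity candidates. Since $D$ moves last, subgame perfection only requires that $\gamma_1^*$ be a best response to the already-fixed $\gamma_0$, so I would treat $\gamma_0$ as a constant and maximize $U_D$ over the feasible set $\{\gamma_1 : \alpha_1 \geq \alpha_0,\ \beta_1 \geq \beta_0\}$, with abstaining available as a fallback yielding $0$. The first move is the change of variables $u := \gamma_1 - \gamma_0$, which turns the feasible set into the nonnegative orthant $u \geq 0$ and writes the objective as $U_D = (1-\delta) r^T \gamma_0 + g(u)$, where $g(u) := (1-\delta) r^T u - u^T C_1 u$. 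The constant $(1-\delta) r^T \gamma_0$ is irrelevant to the $\arg\max$, so it suffices to maximize the quadratic $g$ over $u \geq 0$.

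Next I would show a maximizer exists and must sit at one of four structured points. The key observation is that the assumption $c_{1,\alpha\beta} > -\sqrt{c_{1,\alpha\alpha} c_{1,\beta\beta}}$ together with nonnegative diagonal entries forces $u^T C_1 u > 0$ for every nonzero $u \geq 0$ (apply AM--GM to bound the cross term $2 c_{1,\alpha\beta} u_\alpha u_\beta$ from below), so $g \to -\infty$ along every ray of the orthant and its supremum is attained at a finite point, even though $C_1$ need not be globally positive definite. I would then split on which of $u_\alpha \geq 0$, $u_\beta \geq 0$ binds at the maximizer. If neither binds, the maximizer is the interior stationary point $\nabla g = 0$, i.e. $u = \tfrac{1-\delta}{2} C_1^{-1} r$ (candidate~1). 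If only $u_\alpha = 0$ binds, the restriction $g(0,u_\beta)$ is a strictly concave one-dimensional quadratic maximized at $u_\beta = \tfrac{(1-\delta) r_\beta}{2 c_{1\beta\beta}}$ (candidate~2); the symmetric case gives candidate~3; and if both bind we obtain the corner $u = 0$ (candidate~4). Translating back by $+\gamma_0$ reproduces exactly the stated set.

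Finally I would close the loop by arguing the true optimum is the best feasible member of this set. Since each candidate is derived under a hypothesis about which constraints are active, I would (i) discard any candidate whose $u$ has a negative component, as it violates $\gamma_1 \geq \gamma_0$, and (ii) compare the survivors by their $U_D$ values, which is legitimate because the global maximizer over the orthant necessarily coincides with one of these points. The participation constraint $U_D \geq 0$ is handled by noting that the corner $\gamma_1 = \gamma_0$ always gives $U_D = (1-\delta) r^T \gamma_0 \geq 0$ (using nonnegativity of attributes and revenue coefficients), so abstaining is never strictly preferred and the ``$\arg\max$ subject to $U_D \geq 0$'' is well defined.

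I expect the main obstacle to be the case where $C_1$ is not positive definite: there candidate~1 can be a saddle rather than a maximizer, so I cannot simply invoke global concavity to crown the interior stationary point. The resolution is the coercivity argument above (guaranteeing a finite maximizer on the orthant) combined with the ``best feasible candidate'' selection rule, which is robust to candidate~1 being spurious, since a saddle is harmlessly outscored by the correct boundary candidate in the final comparison. A secondary point to verify is that the face restrictions are genuinely concave (leading coefficients $-c_{1\alpha\alpha}, -c_{1\beta\beta} < 0$), which is what makes candidates~2 and~3 face \emph{maximizers} rather than minimizers.
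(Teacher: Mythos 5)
Your proof is correct and follows essentially the same route as the paper's: an active-set (complementary-slackness) enumeration of which of the constraints $\alpha_1 \geq \alpha_0$, $\beta_1 \geq \beta_0$ bind at the optimum, yielding the interior stationary point $\gamma_0 + \frac{1-\delta}{2}C_1^{-1}r$, the two face maximizers, and the corner $\gamma_1=\gamma_0$, with abstention dismissed because the corner is cost-free and yields $U_D=(1-\delta)r^T\gamma_0\geq 0$. Your change of variables $u=\gamma_1-\gamma_0$ is cosmetic, but your coercivity argument (that $u^T C_1 u>0$ on the nonzero nonnegative orthant under $c_{1,\alpha\beta}>-\sqrt{c_{1,\alpha\alpha}c_{1,\beta\beta}}$, so a finite maximizer exists even when $C_1$ is indefinite and the interior critical point is a saddle) is a genuine refinement of a point the paper's Lagrangian derivation leaves implicit.
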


\begin{proposition}
    Given a two-player AI regulation game with quadratic costs, no regulation, and revenue-sharing parameter $\delta$, $G$'s best-response is one of the following candidates:
    $$\gamma_0^* \in \left\{
    \frac{\delta}{2}C_0^{-1} r,
    \left[\begin{array}{c}
         0 \\
         \frac{\delta r_\beta}{2c_{0\beta\beta}}
    \end{array}\right],
    \left[\begin{array}{c}
         \frac{\delta r_\alpha}{2c_{0\alpha\alpha}}\\
         0
    \end{array}\right],\left[\begin{array}{c}
         0\\
         0
    \end{array}\right]
    \right\}.$$
The strategy is the candidate which maximizes $U_G$, subject to $U_G\geq0, U_D\geq 0, \alpha_1 \geq 0, \beta_1\geq 0$.
\label{G-strategy-noreg}
\end{proposition}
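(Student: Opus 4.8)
The plan is to exploit the fact that, by Proposition~\ref{prop:D-strategy-noreg}, the generalist can anticipate the specialist's response in closed form, and that this response depends on $\gamma_0$ in an especially simple way. The first and most important step is to argue that for every feasible base investment $\gamma_0$ with $\alpha_0,\beta_0 \geq 0$, the specialist's best response is an affine shift $\gamma_1^*(\gamma_0) = \gamma_0 + \Delta$, where the increment $\Delta \geq 0$ is a fixed vector determined solely by $C_1$, $r$, and $\delta$, and in particular does not depend on $\gamma_0$. This follows from inspecting the four candidates in Proposition~\ref{prop:D-strategy-noreg}: each has the form $\gamma_0$ plus a constant offset ($\tfrac{1-\delta}{2}C_1^{-1}r$, one of its single-coordinate projections, or $0$), and which candidate is selected is governed only by the signs of the coordinates of $\tfrac{1-\delta}{2}C_1^{-1}r$ together with the concavity of $U_D$ in $\gamma_1$ --- quantities independent of $\gamma_0$. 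I would also check here that $U_D \geq 0$ at each candidate whenever $\gamma_0 \geq 0$ and $r \geq 0$, so that the specialist never abstains and this classification is genuinely $\gamma_0$-independent.

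Given this, the second step collapses the generalist's problem to a single optimization. Substituting $\gamma_1^* = \gamma_0 + \Delta$ into $U_G = \delta r^T\gamma_1 - \gamma_0^T C_0 \gamma_0$ gives
\[
U_G(\gamma_0) = \big(\delta r^T\gamma_0 - \gamma_0^T C_0\gamma_0\big) + \delta\, r^T\Delta,
\]
in which the final term $\delta\, r^T\Delta$ is a constant that does not affect the maximizer. Thus the generalist maximizes the single strictly concave quadratic $h(\gamma_0) := \delta r^T\gamma_0 - \gamma_0^T C_0\gamma_0$ over the nonnegative orthant $\{\alpha_0 \geq 0,\ \beta_0 \geq 0\}$; strict concavity is guaranteed because the coefficient assumptions ($c_{0,\alpha\alpha},c_{0,\beta\beta} > 0$ and $c_{0,\alpha\beta} > -\sqrt{c_{0,\alpha\alpha}c_{0,\beta\beta}}$) make $C_0$ positive definite.

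The third step is the standard KKT analysis of a strictly concave quadratic over a two-dimensional orthant. The maximizer is the unconstrained stationary point $\nabla h = \delta r - 2C_0\gamma_0 = 0$, i.e.\ $\gamma_0 = \tfrac{\delta}{2}C_0^{-1}r$, whenever this point is itself nonnegative; otherwise it lies on an active face or at the corner. Restricting $h$ to $\alpha_0 = 0$ and optimizing over $\beta_0$ yields $\big[0,\ \tfrac{\delta r_\beta}{2c_{0\beta\beta}}\big]$; restricting to $\beta_0 = 0$ yields $\big[\tfrac{\delta r_\alpha}{2c_{0\alpha\alpha}},\ 0\big]$; and the corner gives $[0,0]$. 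These are exactly the four candidates in the statement, so the best response must be whichever of them is feasible and maximizes $U_G$. Finally I would record the feasibility and abstention conditions: $U_D \geq 0$ and $\alpha_1,\beta_1 \geq 0$ hold automatically by Step~1 (since $\gamma_1 = \gamma_0 + \Delta$ with $\gamma_0,\Delta \geq 0$), while if the best candidate yields $U_G < 0$ the generalist abstains, matching the constraints listed in the proposition.

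The main obstacle, and the crux of the argument, is Step~1: establishing that the specialist's incremental investment $\Delta$ is independent of the generalist's base $\gamma_0$. This separability is what prevents $G$'s objective from fragmenting into distinct quadratic pieces over different regions of $\gamma_0$-space (one per case of $D$'s response) and collapses everything to a single concave program. Care is needed to verify that the active constraint set in $D$'s projection never changes with $\gamma_0$ and that $U_D \geq 0$ holds throughout, since a region-dependent $\Delta$ would reintroduce kinks in $U_G$ and force a separate check of each region boundary.
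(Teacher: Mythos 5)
Your proposal is correct and takes essentially the same route as the paper: the paper's proof is a Lagrangian/complementary-slackness enumeration over the orthant $\{\alpha_0\geq 0,\beta_0\geq 0\}$ whose four cases yield exactly your four candidates, and it tacitly relies on your Step~1 (the paper differentiates $\delta r^T\gamma_1$ with respect to $\gamma_0$ and gets $\delta r$, which is precisely the statement that $D$'s best response is $\gamma_0$ plus a $\gamma_0$-independent increment $\Delta$ --- a fact you make explicit, and which follows most cleanly from the decoupling $U_D(\gamma_0+d) = (1-\delta)r^T\gamma_0 + \left[(1-\delta)r^T d - d^T C_1 d\right]$, so the argmax over $d\geq 0$ is $\gamma_0$-free and $d=0$ guarantees $U_D\geq 0$). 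One small overclaim: strict concavity of $h$ does \emph{not} follow from the paper's stated coefficient assumptions, which only require $c_{0,\alpha\beta} > -\sqrt{c_{0,\alpha\alpha}c_{0,\beta\beta}}$ and hence permit an indefinite $C_0$ in strongly interfering scenarios ($c_{0,\alpha\beta} \geq \sqrt{c_{0,\alpha\alpha}c_{0,\beta\beta}}$); this is harmless here, since $h$ remains bounded above on the orthant and the first-order (KKT) necessary conditions on each face still place any maximizer among the same four candidates, which is all the proposition asserts --- but your Step~3 should be phrased via necessary conditions rather than via concavity pinning down the maximizer.
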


    The proofs of the above two propositions are given in Appendix \ref{app:noregulation}. 
    The solutions offer intuition about the set of strategies players might opt to take. They may venture in the direction of some combination of performance and safety, that is, move to a point that does not reside on either constraint. Or, alternatively, they may creep along the axes constraining their strategy space, and invest minimally in either performance or safety.
    
    When do the players prefer one of these strategies over another? In general, our solutions are provided as sets of candidates because there are multiple intersecting constraints that must be checked to ensure a given candidate is optimal. 
However, our analysis reveals classes of games in which the market will lead players to invest in both safety and performance in conjunction under no regulation. We make this claim formal below. 

\begin{remark}
    Given the AI regulation game with quadratic costs, no regulation, and revenue-sharing parameter $\delta \in \left(0,1\right)$. If any player $p$'s cost interaction term satisfies the following inequalities: $$c_{p,\alpha\beta} < \min\left(\sqrt{c_{p,\alpha\alpha}c_{p,\beta\beta}},\frac{c_{p,\alpha\alpha} r_\beta}{r_\alpha}, \frac{c_{p,\beta\beta}r_\alpha}{r_\beta}\right),$$ then their best-response strategy includes non-zero investment in both performance and safety. 
\label{remark:unconstrained-condition}
\end{remark}

This claim is proven in Appendix \ref{app:non-zero-investment-no-reg}. The broad intuition is that the first of these inequalities establishes the costs are strictly convex, and the second two ensure that the player's cost interactions are not so positive that investing in both performance and safety is prohibitively expensive compared to investing in one or the other alone. The claim offers some intuition for when a player prefers to invest in both attributes together, even without regulation pushing them to invest in safety. It covers all games in which the cost interactions are negative, which we call the \textit{complementary scenario}, meaning it is cheaper to invest in both performance and safety together than to invest in each individually. It further covers all games in which the cost interactions are zero, which we call the \textit{separable scenario}, meaning there is no benefit or loss to investing in both attributes in conjunction. Finally, it covers certain instances where the cost interactions are positive, which we call the \textit{interfering scenario}, meaning safety investments make performance more costly, and vice versa.

\subsection{Subgame perfect equilibria strategies with regulation}
\label{subsec:multi-attribute-with-regulation-soln}

Here we provide the subgame perfect equilibria strategies of the two players in our two-attribute game, in the presence of regulation. Notice that the no-regulation gameplay can be derived from these solutions simply by plugging in $\theta_D=\theta_G=0$. Like the solutions in the prior section, these generalized solutions require checking a number of candidates, but this number has grown to account for the possible responses to regulation.

\begin{proposition}
    Given a two-attribute fine-tuning game with quadratic costs, regulatory constraints $\theta_G, \theta_D$,
    and bargaining parameter $\delta$, the domain specialist $D$'s subgame perfect equilibrium strategy is one of the values in the following set:
    $$\gamma_1^* \in \left\{\begin{array}{cc}
        \gamma_0 + \frac{(1-\delta)}{2}C_1^{-1} r, 
    \left[\begin{array}{c}
         \alpha_0 \\
         \beta_0+\frac{(1-\delta)r_\beta}{2c_{1\beta\beta}}
    \end{array}\right],  
         \left[\begin{array}{c}
         \alpha_0 +\frac{(1-\delta)r_\alpha}{2c_{1\alpha\alpha}} - \frac{c_{1\alpha\beta}}{c_{1\alpha\alpha}}\max(0,\theta_D - \beta_0)\\
         \max(\beta_0,\theta_D)
    \end{array}\right], \\
    \left[\begin{array}{c}
         \alpha_0 \\
         \max(\beta_0,\theta_D)
    \end{array}\right],
    \texttt{abstain}.
    \end{array}    
    \right\}$$
The strategy is the feasible candidate which maximizes $U_D$, subject to $U_D\geq 0, \alpha_1 \geq \alpha_0, \beta_1\geq \max(\beta_0,\theta_D)$.
\label{prop:D-strategy-regulation}
\end{proposition}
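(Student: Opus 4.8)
The plan is to read the statement as the solution to $D$'s best-response problem for a fixed generalist move $\gamma_0$ and fixed regulation $\theta_D$, namely the constrained program
$$\max_{\gamma_1} \ U_D(\gamma_1) \quad \text{s.t.} \quad \alpha_1 \geq \alpha_0,\ \beta_1 \geq \max(\beta_0,\theta_D),$$
with the outside option \texttt{abstain} yielding utility $0$. The two inequalities encode (i) that investment is sequential, so $D$ can only build on $G$'s attributes ($\gamma_1 \geq \gamma_0$ componentwise), and (ii) the regulatory floor $\beta_1 \geq \theta_D$; together they cut out a closed quadrant with corner $(\alpha_0, \max(\beta_0,\theta_D))$. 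Writing $u = \alpha_1 - \alpha_0$ and $v = \beta_1 - \beta_0$ turns $U_D$ into $(1-\delta)r^T\gamma_1$ minus the quadratic form $c_{1\alpha\alpha} u^2 + 2 c_{1\alpha\beta} u v + c_{1\beta\beta} v^2$, and I would work in these shifted coordinates throughout.

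First I would establish that the maximum is attained. Since the feasible set is closed, it suffices to show $U_D$ is coercive along every ray in the feasible cone, i.e. every direction $(d_\alpha,d_\beta)\geq 0$. The coefficient of the quadratic term along such a ray is $c_{1\alpha\alpha} d_\alpha^2 + 2 c_{1\alpha\beta} d_\alpha d_\beta + c_{1\beta\beta} d_\beta^2$, and the standing assumptions $c_{1\alpha\alpha}, c_{1\beta\beta} > 0$ and $c_{1\alpha\beta} > -\sqrt{c_{1\alpha\alpha}c_{1\beta\beta}}$ guarantee this form is strictly positive on the nonnegative orthant minus the origin — this is exactly the condition that no combination of investments has negative cost. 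Hence the quadratic cost dominates the linear revenue at infinity and the continuous $U_D$ attains its maximum over the quadrant; comparing that value against $0$ decides whether $D$ participates or plays \texttt{abstain}, which accounts for the final candidate.

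Next I would enumerate the faces of the quadrant, since any maximizer of a smooth function over a polyhedron satisfies the first-order conditions on the face containing it. There are exactly four faces: the two-dimensional interior, the two edges, and the corner. On the interior, $\nabla U_D = 0$ gives $\frac{1-\delta}{2} C_1^{-1} r$ in shifted coordinates, i.e. candidate 1. On the edge $\alpha_1 = \alpha_0$ ($u=0$), $U_D$ is a downward parabola in $v$ with vertex $v = \frac{(1-\delta)r_\beta}{2c_{1\beta\beta}}$, i.e. candidate 2. On the edge $\beta_1 = \max(\beta_0,\theta_D)$ (where $v = \max(0,\theta_D-\beta_0)$), it is a downward parabola in $u$ whose vertex, after substituting the active constraint into the $u$-stationarity equation $(1-\delta)r_\alpha - 2c_{1\alpha\alpha}u - 2c_{1\alpha\beta}v = 0$, produces the cross-term correction $-\frac{c_{1\alpha\beta}}{c_{1\alpha\alpha}}\max(0,\theta_D-\beta_0)$ of candidate 3. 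The corner itself is candidate 4. Because each edge restriction is a genuine downward parabola (leading coefficients $-c_{1\alpha\alpha}, -c_{1\beta\beta}$ are negative), its constrained maximum is either the vertex, if it lands in the relative interior of the edge, or the endpoint, which is the corner; so checking feasibility of candidates 2--4, together with candidate 1 and \texttt{abstain}, and keeping the best feasible one, exhausts the optimum.

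The main obstacle is that $U_D$ need not be concave: in the strongly interfering regime $c_{1\alpha\beta} > \sqrt{c_{1\alpha\alpha}c_{1\beta\beta}}$ the matrix $C_1$ is indefinite, so candidate 1 is a saddle point rather than a maximum and the usual ``KKT is sufficient under concavity'' shortcut fails. I would emphasize that the argument never invokes concavity of the full objective: existence comes from coercivity, which holds across the entire admissible parameter range (including the indefinite case) precisely because positivity of the cost on the orthant is a strictly weaker condition than positive-definiteness of $C_1$, while the candidate list comes from first-order necessity plus the fact that the \emph{edge} restrictions are always concave one-dimensional parabolas. Non-concavity only means some candidates may be saddles or minima, but these are harmless: the selection step retains the feasible candidate maximizing $U_D$, which by the existence argument is the true optimum. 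The remaining care is the degenerate boundary where a diagonal cost coefficient vanishes, so that $C_1^{-1}$ and hence candidate 1 are undefined; I would note that coercivity can fail along an axis there and treat it as a limiting case in which the interior candidate collapses onto the edges.
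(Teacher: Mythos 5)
Your proposal is correct and follows essentially the same route as the paper: the paper enumerates KKT/complementary-slackness cases for the constraints $\alpha_1\geq\alpha_0$, $\beta_1\geq\max(\beta_0,\theta_D)$, and $U_D\geq 0$, which is the same candidate enumeration as your face-by-face analysis of the quadrant (interior stationary point, the two edge vertices with the cross-term correction, the corner, and \texttt{abstain}), with your outside-option comparison replacing the paper's explicit $U_D\geq 0$ multiplier cases. Your additions---the coercivity argument for attainment on the nonnegative cone under $c_{1\alpha\beta}>-\sqrt{c_{1\alpha\alpha}c_{1\beta\beta}}$, and the observation that non-concavity of $U_D$ in the interfering regime is harmless because the edge restrictions are always concave parabolas and the selection step keeps the best feasible candidate---make rigorous a point the paper only asserts informally (``no solutions exist at infinity'').
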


\begin{proposition}
    Given a two-attribute, two-player fine-tuning game with quadratic costs, regulatory constraints $\theta_G, \theta_D$, and bargaining parameter $\delta$, $G$'s best-response is one of the following candidates:
    \begin{itemize}
        \item $\frac{\delta}{2}C_0^{-1} r, $
        \item $\left[\begin{array}{c}
         0 \\
         \frac{\delta r_\beta}{2c_{0\beta\beta}}
    \end{array}\right], $
    \item $\left[\begin{array}{c}
         \frac{\delta r_\alpha}{2c_{0\alpha\alpha}}-\frac{c_{0\alpha\beta}}{c_{0\alpha\alpha}}\theta_G\\
         \theta_G
    \end{array}\right], $
    \item $\left[\begin{array}{c}
         0\\
         \theta_G
    \end{array}\right], $
    \item \texttt{abstain},
    \item Three additional candidates along the $U_D=0$ constraint, which is given by the following quadratic equation:
    \begin{eqnarray*}
        (1-\delta)r_\alpha \alpha_0 + \left(\frac{(1-\delta)^2r_\alpha^2}{4c_{1\alpha\alpha}} + (1-r_\beta \theta_D - \frac{c_{1\alpha\beta}}{c_{1\alpha\alpha}}(1-\delta)\theta_D + \frac{c_{1\alpha\beta}^2}{c_{1\alpha\alpha}\theta_D^2}-c_{1\beta\beta}\theta_D^2\right) + \\ \left(\frac{c_{1\alpha\beta}}{c_{1\alpha\alpha}}(1-\delta)r_\alpha -2\frac{c_{1\alpha\beta}^2}{c_{1\alpha\alpha}}\theta_D+2c_{1\beta\beta}\theta_D\right)\beta_0 + \left(\frac{c_{1\alpha\beta}^2}{c_{1\alpha\alpha}}-c_{1\beta\beta}\right)\beta_0^2=0.
    \end{eqnarray*}

        \end{itemize}
The strategy is the candidate which maximizes $U_G$, subject to $U_G\geq0, U_D\geq 0, \alpha_1 \geq 0, \beta_1\geq \theta_G$.
\label{prop:G-strategy-regulation}
\end{proposition}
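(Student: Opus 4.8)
The plan is to solve $G$'s problem by backward induction. Since $G$ moves first and correctly anticipates the specialist's reaction, I substitute $D$'s subgame-perfect response $\gamma_1^*(\gamma_0)$ from Proposition~\ref{prop:D-strategy-regulation} into $U_G$ to obtain a reduced objective $\tilde U_G(\gamma_0) := \delta\, r^T \gamma_1^*(\gamma_0) - \gamma_0^T C_0 \gamma_0$ depending on $\gamma_0$ alone, and then maximize it over the feasible set $\{\alpha_0 \ge 0,\ \beta_0 \ge \theta_G\}$, with the convention that if the induced $U_D$ falls below zero then $D$ abstains and both players receive $0$.

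The first step is to partition the $(\alpha_0,\beta_0)$-plane according to which branch of $D$'s piecewise response is active. Proposition~\ref{prop:D-strategy-regulation} already divides $\gamma_0$-space into regions (interior climb, safety-floor-binding climb, pure floor, and abstain), and on each region $r^T \gamma_1^*(\gamma_0)$ is an explicit low-degree polynomial in $\gamma_0$ --- affine wherever $D$ plays the interior response $\gamma_0 + \tfrac{1-\delta}{2}C_1^{-1}r$. Consequently $\tilde U_G$ is continuous and piecewise quadratic, so its maximum over each region is attained either at an interior stationary point or on the region's boundary, and the global maximum lies among finitely many such points.

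Within the smooth regions the candidates follow from a routine first-order analysis. Setting $\nabla \tilde U_G = 0$ where $D$ plays the interior response gives $\delta r - 2 C_0 \gamma_0 = 0$, i.e.\ the candidate $\tfrac{\delta}{2}C_0^{-1}r$; restricting to the active constraint $\alpha_0 = 0$ and optimizing over $\beta_0$ gives the second candidate; restricting to $\beta_0 = \theta_G$ and optimizing over $\alpha_0$ (where the cross-term $c_{0\alpha\beta}$ enters, producing the $-\tfrac{c_{0\alpha\beta}}{c_{0\alpha\alpha}}\theta_G$ correction) gives the third; the corner $\alpha_0 = 0,\ \beta_0 = \theta_G$ gives the fourth; and the option to abstain, with value $0$, gives the fifth. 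Each is then screened by the feasibility conditions $U_G \ge 0$, $U_D \ge 0$, $\alpha_1 \ge 0$, $\beta_1 \ge \theta_G$.

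The delicate part, and the source of the three remaining candidates, is the curve $U_D = 0$ along which $D$ is driven to the brink of abstaining. Because $G$ can lower her own safety investment while still relying on the floor $\theta_D$ to force $D$ upward, it is frequently optimal for $G$ to push $\gamma_0$ exactly to $D$'s binding participation constraint; across this curve $\tilde U_G$ has a downward jump (as $D$ switches to abstaining just past it), so the supremum on the participation side is realized on the closure. To characterize these optima I substitute $D$'s floor-binding response $\gamma_1 = \big[\alpha_0 + \tfrac{(1-\delta)r_\alpha}{2c_{1\alpha\alpha}} - \tfrac{c_{1\alpha\beta}}{c_{1\alpha\alpha}}(\theta_D - \beta_0),\ \theta_D\big]$ into $U_D = 0$, which after simplification is the stated relation --- affine in $\alpha_0$ and quadratic in $\beta_0$. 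Solving it for $\alpha_0 = f(\beta_0)$ with $f$ quadratic and substituting into the (quadratic) $\tilde U_G$ reduces the constrained problem to maximizing a single-variable quartic in $\beta_0$; its derivative is a cubic, whose up to three real roots are exactly the three additional candidate points. I expect the main obstacle to be handling this discontinuity across $U_D = 0$ correctly and confirming that the floor-binding branch is the right one to substitute there; the remaining regional cases reduce to the same kind of quadratic optimization already used in the no-regulation propositions. Collecting all candidates, $G$'s best response is whichever feasible one maximizes $U_G$.
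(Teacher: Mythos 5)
Your overall route is the same as the paper's: both substitute $D$'s best response from Proposition~\ref{prop:D-strategy-regulation} into $U_G$ and then enumerate active-set/KKT candidates of the resulting piecewise-quadratic program (the paper organizes this as a Lagrangian with four multipliers and sixteen complementary-slackness cases, which is precisely the region-by-region case analysis you describe), and your derivation of the four explicit candidates plus \texttt{abstain}, together with the feasibility screening, is sound. The genuine gap is in your account of the three candidates on the participation boundary $U_D=0$. In the paper these three are of different kinds: two are \emph{corner} points where the curve $U_D=0$ intersects the other active constraints --- one satisfying $\alpha_0=0$ and one satisfying $\beta_0=\theta_G$, each obtained by solving a one-variable quadratic (cases 8 and 12 of the paper's enumeration) --- and only the third is the interior tangency point solving $\max_{\gamma_0} U_G$ subject to $U_D=0$, which the paper characterizes via a three-equation Lagrangian system with Bezout-bounded roots. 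Your reduction --- parametrize the curve as $\alpha_0=f(\beta_0)$ with $f$ quadratic, substitute into the quadratic reduced objective $\tilde U_G$, and take the up-to-three roots of the cubic derivative of the resulting quartic --- recovers exactly the tangency-type candidates (and does so rather more cleanly than the paper's system), but those roots are unconstrained critical points along the whole curve, not endpoints of the feasible arc. Identifying them with ``the three additional candidates'' therefore omits the two corner candidates entirely.

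This is a real failure mode, not bookkeeping: if all critical points of your quartic fall outside the feasible arc $\{\alpha_0\ge 0,\ \beta_0\ge\theta_G\}$, the maximum of $\tilde U_G$ on the $U_D\ge 0$ side is attained at an endpoint of that arc, i.e., at $\{U_D=0\}\cap\{\alpha_0=0\}$ or $\{U_D=0\}\cap\{\beta_0=\theta_G\}$. Your remaining candidates cannot stand in for these points: for instance, your edge candidate on $\alpha_0=0$ is the stationary point $\beta_0=\delta r_\beta/(2c_{0\beta\beta})$, and when $U_D<0$ there it is screened out as infeasible, leaving the true constrained optimum on that edge --- the corner where $U_D=0$ binds --- absent from your list, so your procedure could return a strictly suboptimal strategy for $G$. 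The fix is standard and small: include the active-set pairs $\{U_D=0,\alpha_0=0\}$ and $\{U_D=0,\beta_0=\theta_G\}$ in your enumeration (equivalently, evaluate your quartic at the endpoints of the feasible $\beta_0$-interval as well as at its interior critical points), after which your argument matches the proposition's candidate set. Your treatment of the discontinuity across $U_D=0$ and your choice of the floor-binding branch there are both correct and consistent with the paper, which likewise uses the minimally compliant response (via the $\max(\beta_0,\theta_D)$ terms) in its stationarity conditions.
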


The proof of the above propositions is provided in Appendix \ref{app:player-strategies-regulation}. 
We outline the intuition behind the proof as follows: Notice that the optimization is an inequality-constrained quadratic optimization problem. The problem has been set up so no solutions exist at infinity, that is, the solutions will either be local maxima or will reside on constraints. Therefore, we can find the critical points for the unconstrained problem, as well as the critical points for every possible combination of every constraint in our problem. This yields a set of candidates, which are worked out and listed in the set above.

There is a bit of additional subtlety in the process for arriving at the last three candidates along the constraint listed at the end of the Proposition. Two of the three candidates reside at the intersection of this constraint with the other constraints---that is, they satisfy the constraint listed and either $\alpha_0=0$ or $\beta_0=\theta_G$. Finding the point that satisfies these combinations of constraints is only as hard as solving the roots of a one-variable quadratic, at worst. The third one, however, is a bit more convoluted. This candidate can be described as the solution to the optimization problem $\max_{\gamma_0}U_G \ \ s.t. U_D=0$, where the other constraints are ignored. Although this is a (not necessarily convex) quadratic program, specifying the Lagrangian suggests that its solution must be the solution of a system of three distinct equations with three unknown variables $(\alpha_0,\beta_0,\lambda)\in \mathbb{R}^3$. Two of these equations are quadratic, and the other is linear: 
        \begin{itemize}
            \item $\delta r_\alpha - 2c_{0\alpha\alpha} \alpha_0 - 2c_{0\alpha\beta} \beta_0 - \lambda (1-\delta)r_\alpha = 0$,
            \item $\frac{\delta c_{1\alpha\beta}r_{\alpha}}{c_{1\alpha\alpha}}-2c_{0\beta\beta}\beta_0-2c_{0\alpha\beta}\alpha_0 - \lambda\left(\frac{c_{1\alpha\beta}}{c_{1\alpha\alpha}}(1-\delta)r_\alpha-2\frac{c_{1\alpha\beta}^2\theta_D}{c_{1\alpha\alpha}}+2\left(\frac{c_{1\alpha\beta}}{c_{1\alpha\alpha}}-c_{1\beta\beta}\right)\beta_0\right)=0$,
            \item The quadratic stated in the proposition.
        \end{itemize}
        Though there may be multiple roots satisfying the above equations, the roots are bounded in typical fashion by Bezout's Theorem. Further algebra for arriving at solutions is left to the computer.
\section{Computational results}

Here we describe a set of numerical tests and demonstrations to explore the strategies in our game, using the solved strategies from the previous section. 
Our analysis here is focused on the existence of a persistent facet of the model concerning the way the players shift their strategies in response to regulation. 
With the knowledge that one player or the other is required to meet a regulatory floor, agents can choose their strategies accordingly. In a variety of cases, we observe that the strategies shift in a way that \textit{lowers} the ultimate safety investment compared to safety attained under no regulation. This effect -- which we term \textit{backfiring} -- is observable in cases where the regulation is weak, meaning it imposes a floor that the players already meet under no regulation. 

This section starts by demonstrating the existence of this effect. We then discuss its persistence in cases where players can flexibly choose how they share revenue via a linear contract. Finally, in stark contrast to the observation that regulation can backfire, we find that regulation can act as a commitment device, unlocking strategy sequences that mutually benefit the players. 

\subsection{Regulation can backfire.}

\begin{figure}
    \centering
    \includegraphics[width=0.5\linewidth]{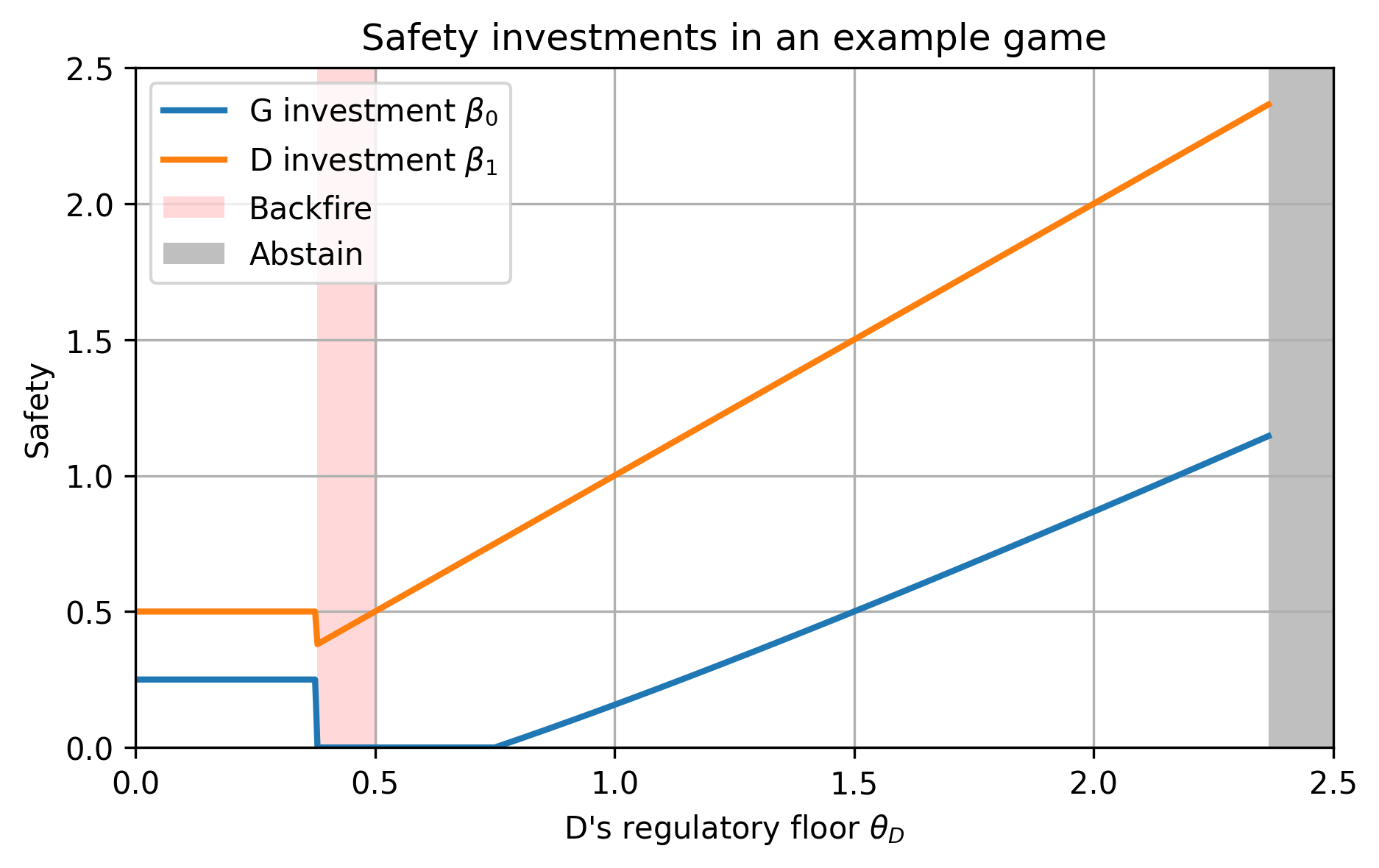}
    \caption{Backfiring observed in a basic two-player game where $\theta_G=0$ and $\theta_D$ is varied over the range $[0,2.5]$. As $\theta_D$ is swept upward from $0$, there is some value at which the generalist's score exhibits a discontinuity and the investment in safety lowers. In this example, the discontinuity occurs at a threshold value below the safety attained under no regulation ($0.5$). In response to this discontinuity in the generalist's strategy, the domain-specialist minimally complies with the regulation, meaning the ultimate safety is reduced for some non-zero regulations. This plot conveys information redundant with the first panel of Figure \ref{fig:fourpanel-separable}, where the regulation is swept only along the vertical line in which the generalist's threshold is $0$.}
    \label{fig:backfiring-examples.}
\end{figure}

Consider a basic game given by the following set of cost and revenue parameters: $C_1 = C_0 = I_2, r_\alpha=r_\beta=1, \delta=0.5$. 
This game is \textit{separable}, meaning there are no interaction effects between performance and safety, and it assumes the market without regulation places equal value on performance and safety. 
Figure \ref{fig:backfiring-examples.} depicts the players' strategies in this game, for varying levels of regulation targeting the Domain-specialist alone. For the lowest regulatory thresholds, we observe that the players stick to their no-regulation safety investments, since they already clear the threshold and their no-regulation investments remain optimal. As the regulatory floor is increased, however, the generalist's strategy exhibits a discontinuity. Crucially, this drop in G's safety investment occurs at a regulatory threshold \textit{lower than} the no-regulation safety strategy. 

Why does $G$ switch strategies? Here we attempt to provide some intuition. Abent regulation, there exists some typical best-response that $D$ will take, and $G$ must anticipate this best-response to choose an optimal strategy. Even if $G$ would theoretically prefer $D$ to invest more in safety, $G$ cannot fully control $D$'s actions. Regulation that targets $D$, however, does just this: it restricts $D$'s strategy space so $D$ must commit to certain safety investments, regardless of $G$'s strategy. 
Therefore, in the presence of regulation, $G$ is incentivized to select a new strategy sequence with lower investments in safety, because $G$ knows that $D$ will cover the gap in safety between $G$'s investment and $D$'s threshold. Put another way, $G$ is given the opportunity to engage in a kind of \textit{free riding} behavior. $G$ creates a gap in the safety investment as a cost-cutting exercise, knowing $D$ must bridge the gap to reap any reward in the game. 

 Our results here convey that backfiring occurs in one instance of the game. We have yet to give a clear characterization of how widespread this phenomenon is.
 The example we have shown so far assumes the revenue-sharing parameter is fixed at $\delta=0.5$. One might imagine that, instead of a fixed revenue sharing arrangement, players can collectively decide how to share the revenue. Does the ability to influence the revenue-sharing parameter prevent cases where weak regulation backfires? We turn to this question next.

\subsection{Bargaining does not suffice to prevent backfiring.}

\begin{figure}
    \centering
    \includegraphics[width=\linewidth]{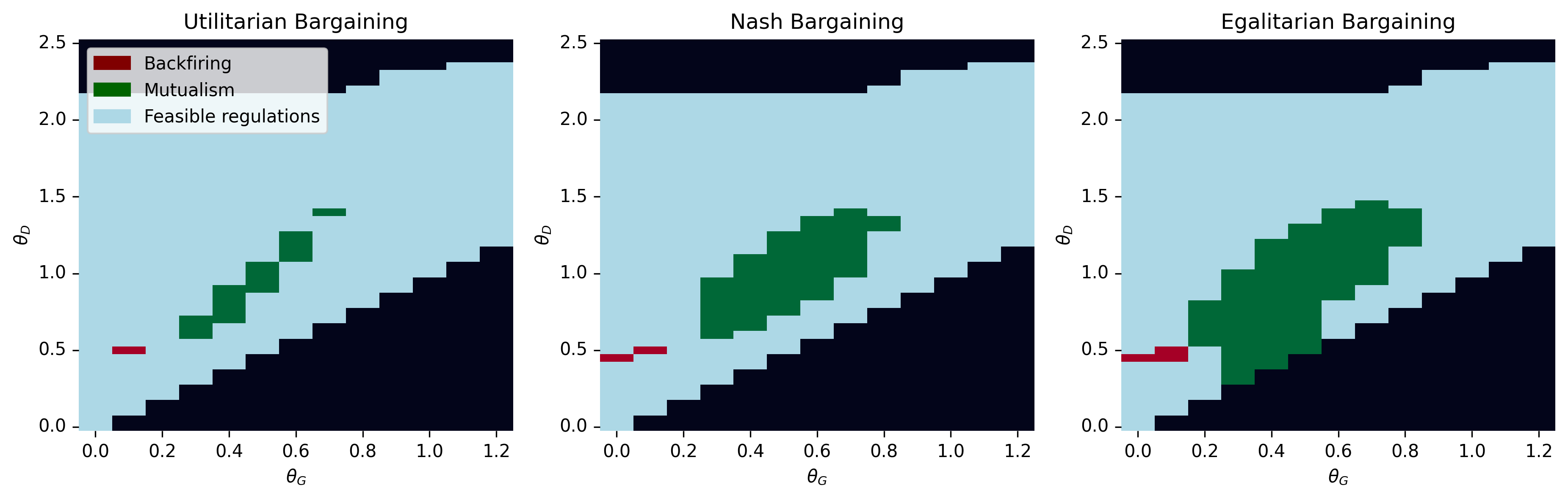}
    \caption{Results from numerical tests over the set of possible $(\theta_G,\theta_D)$ pairs in the two-attribute, two-player, separable quadratic-cost game. 
    Backfiring occurs in the weak regulatory regimes in which $\theta_D$ is just below $\beta_0^A$.  
    Regulations that mutually improve both players' utilities over anarchy are detected for all three bargaining solutions. The highest aggregate utility in this game is achieved at $\theta_G=0.5,\theta_D=1$.}
    \label{fig:bargaining}
\end{figure}

Here we provide evidence that the existence of backfiring persists in more games beyond the example portrayed in Figures \ref{fig:fourpanel-separable}. In particular, we relax the assumption that players share their revenue according to a constant revenue-sharing parameter $\delta=0.5$. Instead, we allow players to reach \textit{bargaining agreements} to distribute revenue --- and, correspondingly, profit --- in a way that maximizes their joint utility. Bargaining solutions are arrangements that maximize the players' joint utility.\footnote{The relevance of bargaining solutions to our setting is described in further depth by \citet{laufer2024fine}.} We provide evidence that even when players can distribute revenue in a way that maximizes the joint utility, these arrangements can still exhibit backfiring effects. We assume here that the players jointly agree on a bargaining solution \textit{before} either invests effort, but \textit{after} learning about the regulation.\footnote{The next sections will relax this assumption further, providing findings on the existence of backfiring and mutualism for every non-trivial linear revenue-sharing agreement $\delta \in (0,1)$.} Figure \ref{fig:bargaining} shows the numerical results for a variant of the separable game where we vary the value of $\delta$ over 98 values in the range $[0.01,0.99]$. We vary the regulatory setting for 13 $\theta_G$ values in $[0,1.2]$ and 51 $\theta_D$ values in $[0,2.5]$, for a total of 49,686 simulated games. The figure depicts three different processes for arriving at an optimal bargain: \textit{utilitarian}, which selects $\delta$ to maximize the sum of utilities, \textit{Nash}, which selects $\delta$ to maximize the product of utilities \cite{nash1950bargaining}, and \textit{egalitarian}, which sets $\delta$ to maximize the minimum of the utilities. In all scenarios, we observe at least one instance of a combination of regulations that backfire. Further, we observe a cluster of regulation regimes that yield mutual improvement to utility. The results suggest that even in instances where players can choose how to distribute revenue through revenue-sharing, these agreements are sub-optimal for engendering the right sort of commitment from each of the players, if their goal is to mutually benefit from their interaction. In the next section, we explore the idea that regulation can bring about a \textit{mutualistic} benefit, beyond what is achievable through bargaining. 

\subsection{Regulation can act as a commitment device.}
\label{subsec:commitment}

\begin{figure}
    \centering
    \includegraphics[width=0.4\linewidth]{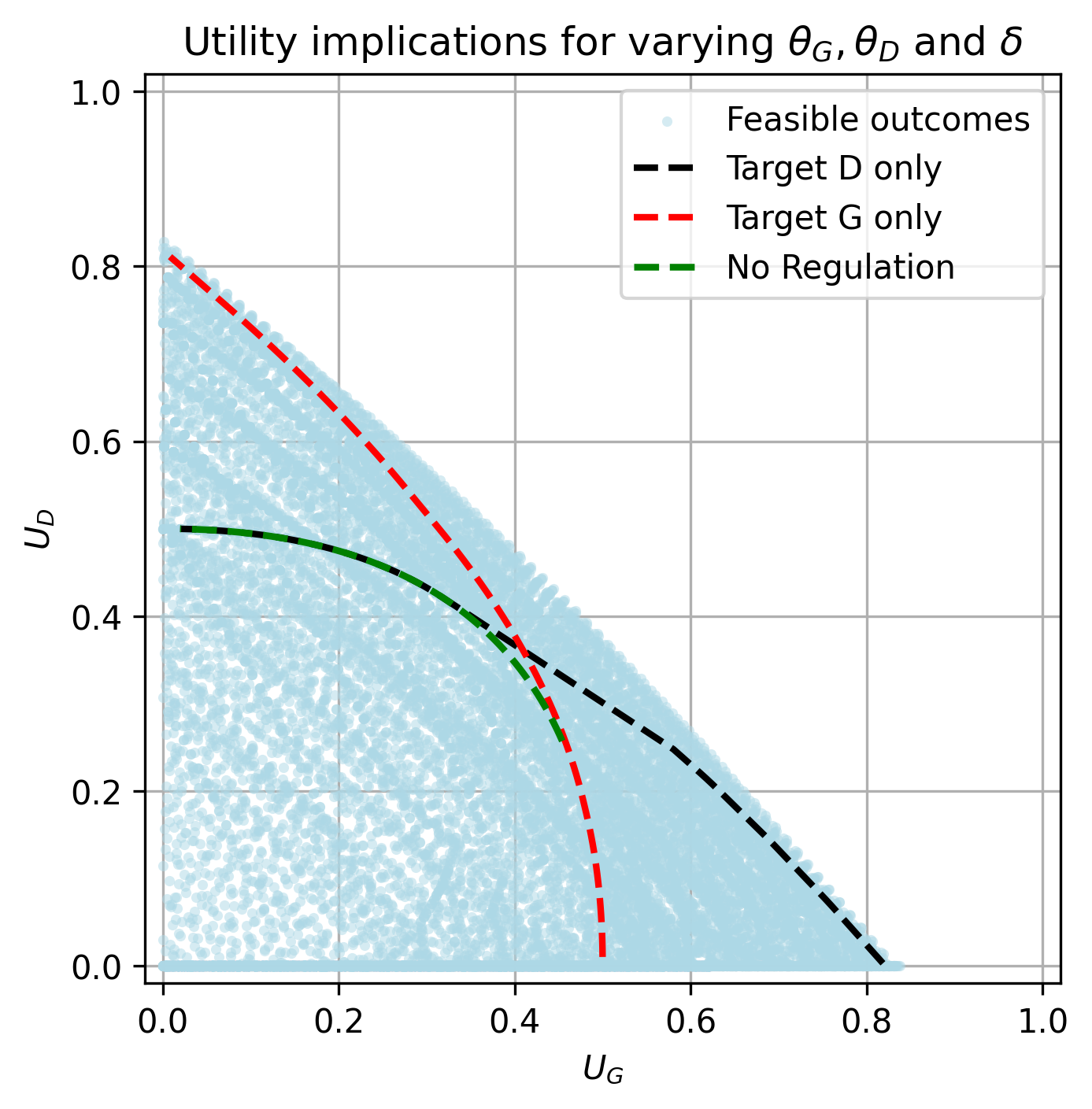}
    \caption{The set of attainable utility outcomes over a grid of possible regulation regimes and bargains for the two-attribute, two-player, separable quadratic-cost game. If we imagine that regulations are \textit{endogenous} to the game -- that is, regulations are decided collectively by the players, like $\delta$, then each of the blue points represents a possible game with utility implications for the two players. If the players are restricted to a particular regulatory regime -- targeting G only, D only, or neither -- then the utility they are able to achieve (depicted in dashed lines) suffers, compared to the regime where both players are subjected to regulation. The shape formed by the light blue dots represents the full set 
    of simulated games, suggesting that regulation targeting both players is at times needed to achieve outcomes that cannot be achieved by regulating just one.}
    \label{fig:pareto}
\end{figure}

Here we show that there exist cases where regulation can leave both players \textit{better off} than anarchy, while also benefiting the safety of the technology. Even though the regulation constrains the space of investments that players are able to achieve, it can nonetheless leave each player with higher utility than they are able to achieve under no regulation. 
To make this finding more clear, we depict the set of all achievable $(U_G, U_D)$ combinations in Figure \ref{fig:pareto}. The light blue cloud of points represents all attainable utility scenarios, over a grid of $\theta_D, \theta_G,$ and $\delta$ values. The dotted lines represent the convex hull (northeastern faces) of attainable utility implications for the following regimes: 1) neither player is targeted with regulation (depicted in green), 2) one player is targeted with regulation (depicted in red and black), and 3) both players are targeted with regulation (inferrable from the outermost feasible points). The figure suggests that a non-vacuous constraint on \textit{both} players achieves more preferable utility outcomes than regulations of individual players or bargaining alone are able to achieve.

These results suggest that, although regulation can backfire, it can also mutually serve the interests of both players while also improving the level of safety of the technology. This finding raises the following question: if it was possible to achieve higher utilities all around, why was this set of strategies not chosen by the players in the unregulated game? The players did not opt for this set of strategies because these strategies are dominated for at least one player in at least one subgame. As a hypothetical, imagine that under no regulation, the players sit down for a conversation before the game, and both say they will contribute $\epsilon$ additional investment in safety. When the game reaches the last step, however, $D$ finds he benefits \textit{more} from investing only $\frac{(1-\delta)}{2} C_1^{-1}$, rather than the agreed upon value of $\frac{(1-\delta)}{2} C_1^{-1} + 2\epsilon$. What's more, $G$ \textit{knows} that $D$ will do this, and so $G$'s decision will break the agreement before $D$ even gets the chance to respond. Without the regulation restricting $D$'s behavior away from changing strategies in the final hour, nothing prevents $D$ from pursuing the highest-utility strategy, even if it harms $G$. Thus, our model has a prisoner's dilemma dynamic baked into it: there are feasible strategies that leave both players better off, but these strategies are not equilibria.

Absent regulation, the players might \textit{wish} they could ensure the other will uphold their side of a verbal agreement, though they are unable to guarantee it. Regulation, therefore, can act as a \textit{commitment device}, which lends teeth to agreements that the players are able to enter prior to making their investments. This commitment device can be valuable in a formal sense: Both players would be willing to pay for it, as long as the price is less than the amount of utility they collectively gain under regulation.

\section{A General Characterization} 

In the previous sections, we arrived at closed-form solutions for the players' strategies and have demonstrated individual instances that exhibit the backfiring effect of regulation. 
We have not yet determined how widespread this phenomenon is. In this section, we provide analytical results that characterize when this phenomenon occurs. Our findings suggest that this effect is notably widespread. 
We find that for all quadratic-cost games, backfiring occurs as long as both of the technology's attributes (performance and safety) are sufficiently \textit{complementary} such that, under no regulation, the players will invest in some combination of them. Intuitively, if the players invested only in performance under no regulation, backfiring would be impossible as the baseline safety investment would be zero. Therefore, our condition for backfiring covers all games where the market prefers some non-zero baseline investment in performance and safety. The condition we rely on is precisely the condition introduced in Remark \ref{remark:unconstrained-condition}, which represents an upper bound on the cost interaction terms. This section will prove that both backfiring and mutualism occur in a range of scenarios that depend crucially on the cost interaction term, and will describe what this dependence looks like. 

\subsection{Backfiring occurs in all mixed-strategy games.}

Below we prove that for all AI regulation games in which the players invest a non-zero amount in safety and performance under no regulation, there is a non-empty set of regulatory regimes that exhibit a backfiring effect.

\begin{theorem}
    Given an AI regulation game with quadratic costs. If both players' cost interactions meet the following conditions: 
    $$c_{p,\alpha\beta} < \min\left(\sqrt{c_{p,\alpha\alpha}c_{p,\beta\beta}},\frac{c_{p,\alpha\alpha} r_\beta}{r_\alpha}, \frac{c_{p,\beta\beta}r_\alpha}{r_\beta}\right),$$
    then there exists an $\epsilon>0$ such that the regulatory regime $\theta_G = 0, \theta_D =  \beta_0^A - \epsilon$ backfires. 
\label{thm:backfiring}
\end{theorem}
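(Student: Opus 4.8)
The plan is to fix the regime $\theta_G=0$ with $\theta_D$ just below the no-regulation safety level and reduce $G$'s best response to a choice between \emph{complying}---playing so that $D$'s unconstrained climb already clears $\theta_D$---and \emph{free-riding}---playing a low safety investment that forces $D$ to absorb the entire gap up to $\theta_D$. I would then exhibit an explicit free-riding deviation that strictly beats compliance yet produces a strictly lower final safety, which is exactly backfiring.

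First I would record the no-regulation benchmark. The first inequality in the hypothesis gives $C_0,C_1\succ0$, and all three inequalities let me invoke Remark~\ref{remark:unconstrained-condition} for \emph{both} players, so the anarchy equilibrium is interior: $\gamma_0^A=\tfrac{\delta}{2}C_0^{-1}r$ and $\gamma_1^A=\gamma_0^A+\tfrac{1-\delta}{2}C_1^{-1}r$, with generalist safety $b:=(\gamma_0^A)_\beta>0$ and no-regulation safety level $\beta_0^A:=(\gamma_1^A)_\beta=b+\tfrac{1-\delta}{2}(C_1^{-1}r)_\beta$. Whenever $\theta_D\le\beta_0^A$, playing $\gamma_0^A$ leaves $D$ unconstrained, so $G$ can always secure $U_G^A:=U_G(\gamma_0^A)$. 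Using Proposition~\ref{prop:D-strategy-regulation} for $D$'s response, I would write $G$'s induced utility as a continuous, piecewise-quadratic function of $\gamma_0$ with seam $\beta_0=\bar\beta(\theta_D):=\theta_D-\tfrac{1-\delta}{2}(C_1^{-1}r)_\beta$: on region~I, $\{\beta_0\ge\bar\beta\}$, $D$ climbs freely and $G$'s utility is the anarchy objective $f_I$, maximized at $\gamma_0^A$ with value $U_G^A$; on region~II, $\{\beta_0<\bar\beta\}$, $D$ is pinned at $\beta_1=\theta_D$ and $G$'s utility is a concave quadratic $f_{II}$ in which the coefficient of $\beta_0$ is $\delta\tfrac{c_{1\alpha\beta}}{c_{1\alpha\alpha}}r_\alpha$ (the only safety-linked revenue channel left to $G$ is $\beta_0$'s cross-term effect on $D$'s performance) rather than $\delta r_\beta$. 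Crucially $f_{II}(\gamma_0;\theta_D)=f_{II}(\gamma_0;\beta_0^A)+\delta s\,(\theta_D-\beta_0^A)$ uniformly in $\gamma_0$, where $s:=r_\beta-\tfrac{c_{1\alpha\beta}}{c_{1\alpha\alpha}}r_\alpha>0$ by the second inequality of the hypothesis.

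The crux is a marginal computation at $\theta_D=\beta_0^A$, where the seam sits exactly at $\beta_0=b$ and $\gamma_0^A$ lies on it. Interior optimality of $\gamma_0^A$ for $f_I$ gives $\partial_{\beta_0}f_I|_{\gamma_0^A}=\delta r_\beta-2c_{0\alpha\beta}\alpha_0^A-2c_{0\beta\beta}b=0$, hence $\partial_{\beta_0}f_{II}|_{\gamma_0^A}=\delta\tfrac{c_{1\alpha\beta}}{c_{1\alpha\alpha}}r_\alpha-\delta r_\beta=-\delta s<0$: pushing $\beta_0$ down into region~II strictly increases $G$'s utility. To turn this into the claim I would take the deviation $\gamma_0^\eta:=(\alpha_0^A,\,b-\eta)$ and set $\theta_D=\beta_0^A-\epsilon$ with $0<\epsilon<\eta$, so that $\gamma_0^\eta$ lies in region~II (hence $D$ is pinned at $\theta_D$). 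Because $f_{II}$ is quadratic with $\beta_0$-Hessian $-2c_{0\beta\beta}$, an exact Taylor expansion gives $U_G(\gamma_0^\eta)=U_G^A+\delta s(\eta-\epsilon)-c_{0\beta\beta}\eta^2$, which exceeds $U_G^A$ for, say, $\epsilon=\eta/2$ and $\eta<\delta s/(2c_{0\beta\beta})$. Thus no region-I strategy is optimal for $G$, so $G$'s best response is a region-II (free-riding) profile, which pins final safety at $\beta_1=\theta_D=\beta_0^A-\epsilon<\beta_0^A$---the regulation backfires.

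Finally I would close the regime conditions. Because $\gamma_0^\eta$ and the play it induces can be made arbitrarily close to the anarchy equilibrium by shrinking $\eta$ (and $\epsilon<\eta$), and there $U_G^A,U_D^A>0$ with $D$ strictly participating, continuity guarantees that at $\gamma_0^\eta$ the specialist does not \texttt{abstain} and is genuinely pinned (its free climb from $b-\eta$ lands just below $\theta_D$), and that $G$ prefers this to \texttt{abstain}; this is why I work just inside the seam rather than at region~II's global maximizer $\tfrac{\delta}{2}C_0^{-1}(r_\alpha,\tfrac{c_{1\alpha\beta}}{c_{1\alpha\alpha}}r_\alpha)^{T}$, where participation would need a separate argument. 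I expect the main obstacle to be precisely this bookkeeping---confirming the two-region decomposition is exhaustive by ruling out $D$'s remaining candidate responses and the three $U_D=0$ boundary candidates for $G$ in Proposition~\ref{prop:G-strategy-regulation} over the relevant strategies---rather than the marginal inequality, which is short and uses exactly the hypothesis $\tfrac{c_{1\alpha\beta}}{c_{1\alpha\alpha}}r_\alpha<r_\beta$; the remaining two inequalities enter only to guarantee the interior anarchy benchmark and $C_0,C_1\succ0$.
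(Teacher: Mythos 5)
Your proposal is correct and follows the paper's overall skeleton: establish the interior anarchy benchmark via Remark~\ref{remark:unconstrained-condition}, observe that any profile in which $D$ over-complies was already feasible (and hence dominated by $\gamma_0^A$) in the unregulated game, reduce the problem to exhibiting one minimally-compliant deviation that $G$ strictly prefers, and close with the same critical inequality $r_\beta > \tfrac{c_{1\alpha\beta}}{c_{1\alpha\alpha}}r_\alpha$. Where you genuinely depart from the paper is in the final step. The paper (Appendix~\ref{app:backfiring-proof}) proposes one explicit candidate $\gamma_0'$ and evaluates $\lim_{\epsilon\searrow 0}\Delta U_G$ as an order-one positive quantity; as written, that candidate's $\alpha$-coordinate $\tfrac{\delta r_\alpha}{2c_{0,\alpha\alpha}}(\beta_0^A-2\epsilon)$ and the claim that $D$ is automatically pinned do not straightforwardly check out (the appendix also conflates $\beta_0^A$ with $\beta_1^A$, using the former in the theorem and the latter in Lemmas~\ref{obs:strat-subset}--\ref{lemma:backfiring-suffcond}). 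Your route instead works locally at the seam $\bar\beta(\theta_D)=\theta_D-\tfrac{1-\delta}{2}(C_1^{-1}r)_\beta$: the one-sided $\beta_0$-derivative of $G$'s induced utility jumps from $0$ (first-order condition in region I) to $-\delta s$ in region II, and the exact quadratic expansion $U_G(\gamma_0^\eta)=U_G^A+\delta s(\eta-\epsilon)-c_{0\beta\beta}\eta^2$ gives a verifiable $O(\eta)$ gain with $\epsilon=\eta/2$. This buys three things the paper's version lacks: the pinning of $D$ is checked rather than asserted (your $\gamma_0^\eta$ sits strictly below the seam, so $D$'s unconstrained climb lands at $\beta_1^A-\eta<\theta_D$); participation ($U_D>0$, no \texttt{abstain}) follows by continuity since the deviation is an arbitrarily small perturbation of the anarchy play; and your reading $\beta_0^A:=(\gamma_1^A)_\beta$ silently repairs the paper's notational inconsistency in the theorem statement, consistently with Lemma~\ref{lemma:backfiring-impossibility}. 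The bookkeeping you flag — verifying via strict concavity ($C_1\succ 0$) that $D$'s pinned KKT candidate with $\alpha_1>\alpha_0$ is the unique best response below the seam, and that $G$'s $U_D=0$ boundary candidates are irrelevant because the argument only needs some region-II point to beat $U_G^A$ — is routine and does go through, so I see no gap.
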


The proof of the above theorem is provided in Appendix \ref{app:backfiring-proof}. Here we provide an overview of the conceptual argument. We start by observing that the unregulated optimal strategies $\gamma_0^A,\gamma_1^A$ remain feasible in weak regulatory settings. These strategies dominate all alternative strategies in which the players contribute to safety \textit{beyond} their regulatory constraints, as any such strategy was available in the no regulation scenario, so they were already shown to be sub-optimal compared to $\gamma_0^A,\gamma_1^A$. The proof's task, therefore, is is to find some $\theta_D<\beta_1^A$ and some $\gamma_0' \neq \gamma_0^A$, such that $D$ \textit{minimally complies} with the regulation ($\beta_0' = \theta_D)$, and further, $U_G(\gamma_0'; \theta_D) > U_G(\gamma_0^A;\theta_D)$. For the proof to work, we choose a regulation of $\theta_G = 0, \theta_D=\beta_0^A -\epsilon$ for some small positive $\epsilon>0$, and generalist strategy $\gamma_0' = \left[\begin{array}{c}
     \frac{\delta r_\alpha}{2c_{0,\alpha\alpha}}\left(\beta_0^A - 2\epsilon\right) \\
        \beta_0^A-2\epsilon
\end{array}\right]$. For sufficiently small $\epsilon$, we find that the change to the utility of $G$ for using this strategy is positive as long as the following condition is met: $r_\beta > \frac{c_{1,\alpha\beta}}{c_{1,\alpha\alpha}}r_\alpha$. This inequality, given by the analysis in Appendix \ref{app:backfiring-proof}, is precisely the condition established in Remark \ref{remark:unconstrained-condition} for non-zero investment in safety under no regulation. 

The above results demonstrate that backfiring does not only exist in single degenerate cases: It occurs in a range of scenarios in which players share revenue and each contribute non-zero effort to the development of the technology. These scenarios include settings in which the two attributes are \textit{complementary}, as well as a range of settings where the two attributes are \textit{interfering}, up to a particular limit that we are able to specify. We note that further generalizations are open for broader functional forms, including more expressive polynomial costs and exponential costs. The generality of the backfiring effect in the quadratic case gives us reason to believe that the effect might hold for a broader set of forms, though we leave these directions to future work. 

\subsection{Mutualism occurs in sufficiently separable games.}
\label{subsec:mutulaism-proof}

So far, we have shown that a set of regulations backfire in a swath of two-attribute games. Here we provide a second result on a set of regulations that fare better. Using similar logic about games with bounded interaction effects between the the two attributes, we find that there exist combinations of regulatory thresholds that mutually improve the two players' utilities, as well as the safety level of the technology. 
We state this result below. 

\begin{theorem}
    Given a two-player AI regulation game with quadratic costs. If both players meet the following conditions:
    $$|c_{p,\alpha\beta}| < \min\left(\sqrt{c_{p,\alpha\alpha}c_{p,\beta\beta}},\frac{c_{p,\alpha\alpha} r_\beta}{r_\alpha}, \frac{c_{p,\beta\beta}r_\alpha}{r_\beta}\right),$$
    then there exists an $\epsilon>0$ such that the regulatory regime $\theta_G = \beta_0^A + \epsilon, \theta_D = \beta_1^A + 2\epsilon$ mutually improves both players' utilities.
\label{thm:mutualism}
\end{theorem}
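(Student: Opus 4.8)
The plan is to treat $\epsilon$ as a perturbation parameter, solve the regulated subgame-perfect equilibrium as a function of $\epsilon$, and show that both players' equilibrium utilities are strictly increasing in $\epsilon$ at $\epsilon = 0$ while coinciding with the no-regulation utilities there. First, observe that the two-sided bound $|c_{p,\alpha\beta}| < \min(\dots)$ makes both $C_0,C_1$ positive definite, so the interior critical points are genuine maxima and $C_0^{-1},C_1^{-1}$ exist; moreover it implies the one-sided hypothesis of Remark \ref{remark:unconstrained-condition}, so the no-regulation equilibrium is the interior pair $\gamma_0^A = \frac{\delta}{2}C_0^{-1}r$ and $\gamma_1^A = \gamma_0^A + \frac{1-\delta}{2}C_1^{-1}r$ (the first candidates of Propositions \ref{prop:D-strategy-noreg} and \ref{G-strategy-noreg}), with all four investments and both utilities strictly positive. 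Setting $\theta_G = \beta_0^A + \epsilon$ and $\theta_D = \beta_1^A + 2\epsilon$, I would determine the regulated equilibrium $(\gamma_0',\gamma_1')$ and compare $U_G',U_D'$ against $U_G^A,U_D^A$.

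Next, pin down the regulated equilibrium by backward induction. Given any $\gamma_0$ with $\beta_0 = \theta_G = \beta_0^A + \epsilon$, the domain specialist's interior response would reach safety $\beta_0 + \frac{1-\delta}{2}(C_1^{-1}r)_\beta = \beta_1^A + \epsilon < \theta_D$, so $D$'s safety floor binds and $D$ plays the third candidate of Proposition \ref{prop:D-strategy-regulation} with $\beta_1' = \theta_D$; the factor $2$ in $\theta_D$ is chosen precisely so this floor still binds after $G$'s own floor raises $\beta_0$ by $\epsilon$. For the generalist, in this binding-$D$ regime $\beta_1$ is pinned at $\theta_D$ independently of $\beta_0$, so the marginal value of $\beta_0$ along $G$'s performance-optimal ridge, evaluated near the anarchy level, is $\delta\left(\frac{c_{1,\alpha\beta}}{c_{1,\alpha\alpha}}r_\alpha - r_\beta\right) < 0$ under the hypothesis. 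Thus $G$ \emph{wants} to lower $\beta_0$ (the free-riding incentive behind backfiring), but the floor $\theta_G$ forbids it, so $G$ minimally complies, playing the third candidate of Proposition \ref{prop:G-strategy-regulation} with $\beta_0' = \theta_G$ and $\alpha_0'$ on its performance first-order condition. At $\epsilon = 0$ this pair reduces exactly to $(\gamma_0^A,\gamma_1^A)$, so $U_G'(0) = U_G^A$ and $U_D'(0) = U_D^A$. The main obstacle is making this regime identification rigorous for a full neighborhood of small $\epsilon > 0$: I must rule out, uniformly, the abstention and axis candidates and, crucially, the branch where $G$ jumps to $\beta_0 \geq \beta_0^A + 2\epsilon$ so that $D$ reverts to its interior response. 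Since at $\epsilon = 0$ the interior anarchy solution strictly dominates every competing candidate and has strictly positive investments and utilities, a continuity argument over the finite candidate list of Propositions \ref{prop:D-strategy-regulation}--\ref{prop:G-strategy-regulation} secures the claim for $\epsilon$ small.

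Once the regime is fixed, the rest is a first-order computation that collapses cleanly. Minimal compliance gives $\frac{d\gamma_0'}{d\epsilon} = \left(-\frac{c_{0,\alpha\beta}}{c_{0,\alpha\alpha}},\, 1\right)$, while $D$'s increment $w = \gamma_1' - \gamma_0'$, whose safety component is $\theta_D - \beta_0' = (\beta_1^A - \beta_0^A) + \epsilon$, satisfies $\frac{dw}{d\epsilon} = \left(-\frac{c_{1,\alpha\beta}}{c_{1,\alpha\alpha}},\, 1\right)$. The key simplification is an envelope-type identity at the interior anarchy point: the first-order conditions there read $2C_0\gamma_0^A = \delta r$ and $2C_1 w_0 = (1-\delta)r$, so the cost-derivative terms telescope and one obtains $\frac{dU_G'}{d\epsilon}\big|_0 = \delta\, r^{T}\frac{dw}{d\epsilon} = \delta\left(r_\beta - \frac{c_{1,\alpha\beta}}{c_{1,\alpha\alpha}}r_\alpha\right)$ and $\frac{dU_D'}{d\epsilon}\big|_0 = (1-\delta)\, r^{T}\frac{d\gamma_0'}{d\epsilon} = (1-\delta)\left(r_\beta - \frac{c_{0,\alpha\beta}}{c_{0,\alpha\alpha}}r_\alpha\right)$. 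Note the cross-structure worth remarking on: $G$'s gain is governed by $D$'s interaction term and $D$'s gain by $G$'s. Both derivatives are strictly positive precisely because the hypothesis includes $c_{p,\alpha\beta} < \frac{c_{p,\alpha\alpha}r_\beta}{r_\alpha}$ for each player. Combined with $U_G'(0) = U_G^A$ and $U_D'(0) = U_D^A$, this yields $U_G' > U_G^A$ and $U_D' > U_D^A$ for all sufficiently small $\epsilon > 0$; since $\beta_1' = \beta_1^A + 2\epsilon$, safety strictly improves as well. Choosing $\epsilon$ small enough that every continuity conclusion of the previous step holds simultaneously completes the argument.
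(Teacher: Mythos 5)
Your proposal is correct and follows the paper's high-level route---perturb the anarchy equilibrium, put both players at minimal compliance, and compare utilities to first order in $\epsilon$---but your execution differs in ways worth recording. The paper simply evaluates the utilities at the candidate pair $(\gamma_0',\gamma_1')$ and computes $\Delta(\text{G's cost}) = (\Delta\gamma_0)^T C_0 (\Delta\gamma_0)$, treating all cost changes as $O(\epsilon^2)$; since $\gamma_0^A \neq 0$ this drops the first-order cross term $2(\gamma_0^A)^T C_0\,\Delta\gamma_0 = \delta r^T \Delta\gamma_0$, and likewise $D$'s cost change is genuinely first order because the anarchy increment $w^A = \frac{1-\delta}{2}C_1^{-1}r$ is nonzero. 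Your envelope identities $2C_0\gamma_0^A = \delta r$ and $2C_1 w^A = (1-\delta)r$ account for these terms correctly and collapse the derivatives to $\delta\left(r_\beta - \frac{c_{1,\alpha\beta}}{c_{1,\alpha\alpha}}r_\alpha\right)$ and $(1-\delta)\left(r_\beta - \frac{c_{0,\alpha\beta}}{c_{0,\alpha\alpha}}r_\alpha\right)$, which are positive under only the one-sided bounds $c_{p,\alpha\beta} < c_{p,\alpha\alpha}r_\beta/r_\alpha$; the paper's limit $\delta\epsilon\left(2r_\beta + \left(\frac{c_{0,\alpha\beta}}{c_{0,\alpha\alpha}} - \frac{c_{1,\alpha\beta}}{c_{1,\alpha\alpha}}\right)r_\alpha\right)$, an artifact of the dropped term (plus sign slips in its $\Delta\alpha_0$ algebra), is what forces its two-sided $|c_{p,\alpha\beta}|$ hypothesis and the ``$>-2$'' argument. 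Your proof also closes a gap the paper leaves open: the paper never verifies that $(\gamma_0',\gamma_1')$ is the equilibrium, which is harmless for $G$ (whose equilibrium utility weakly exceeds that at any feasible candidate) but not for $D$, whose payoff depends on what $G$ actually plays; your regime identification---$D$'s floor binds by the construction of the factor $2$, and $G$'s floor binds because the marginal value of $\beta_0$ in the binding-$D$ regime is $\delta\left(\frac{c_{1,\alpha\beta}}{c_{1,\alpha\alpha}}r_\alpha - r_\beta\right) < 0$---supplies exactly what is needed. One small repair: your continuity claim that at $\epsilon = 0$ the anarchy solution strictly dominates every competing candidate does not cover the branch where $G$ jumps to $\beta_0 \geq \beta_0^A + 2\epsilon$ so that $D$'s floor goes slack, since the value of that branch converges to $U_G^A$ as $\epsilon \searrow 0$ and leaves no strict gap to exploit; it is ruled out instead by noting (as in the paper's backfiring appendix) that any such ``contribute'' outcome was already feasible in the unregulated game and hence yields $U_G \leq U_G^A$, whereas the minimal-compliance candidate yields $U_G^A + \Theta(\epsilon)$.
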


The proof of the above theorem is given in Appendix \ref{app:mutualism}. The proof follows a similar strategy to the backfiring proof. We are focused on the set of games where the players arrive at unconstrained solutions in the case of no regulation, and we perturb the regulation by a small positive $\epsilon$ value and see the implications for the players' utilities. Here, instead of targeting only the domain-specialist and specifying a threshold slightly below the unconstrained optimal strategy, we set the regulation to target \textit{both players} using a threshold slightly \textit{above} their unconstrained strategies. Instead of measuring the impact on safety, we measure the impact on the players' utilities and find that, under the specified condition, the utilities both improve.  

The results suggest that, similar to the characterization of backfiring, the mutualism effect is observable in a range of quadratic-cost games, including in \textit{separable scenarios} and a range of \textit{complementary} and \textit{interfering} scenarios. Notice, however, that our condition for establishing when mutualism occurs is slightly different than the condition in the backfiring theorem. Instead of a one-sided bound on the players' cost interaction terms, our proof relies on a two-sided bound. The analysis suggests there may be certain games where the two attributes are \textit{strongly complementary} where slightly increasing the regulation in the manner proposed does not increase players' utilities. In other words, if the market already sufficiently incentivizes joint investments in safety and performance, then forcing safety requirements on both players in equal proportion may not benefit players' utilities. In these cases, a linear contract may suffice to serve the utilities of the players, and so regulation would only be needed for achieving the goal of advancing safety, and would not serve the additional role in enforcing commitments from players. 

\section{Conclusion}

Proposals for AI regulation have made use of the idea that different entities contribute to these technologies in succession. This work provides a model for reasoning about the effects of targeting AI safety regulation along the development chain. Our findings suggest that \textit{weak} safety regulation predominantly targeted at the domain specialist can backfire, yielding lower investments in safety than in the alternative case of no regulation. Our findings further suggest that regulation appropriately targeted at both upstream producers and downstream specialists can exhibit a mutualism effect in which both entities benefit. After demonstrating instances of the backfiring and mutualism effects through a numerical simulation, we provide analysis showing these phenomena are not just degenerate cases but hold in a range of scenarios. 

Our results reveal natural directions for future research. In the setting we have put forward, it would be interesting to move beyond showing the existence of backfiring and mutualism regions and characterize the shape of these regions and the magnitude of their effects. Certain segments of the boundaries of these regions are straightforward but others seem to require solving higher-order polynomials to express in closed-form. 

Generalizations beyond the quadratic-cost games might be interesting. For instance, it may be possible to show that backfiring and Pareto-improvement effects occur for any convex cost and concave revenue games meeting where there exist some marginal conditions on the functions' marginal conditions including their slopes and intercepts. 

We have predominantly focused on the case where there is one domain-specialist, but in many real-world settings the development of AI technologies involve multiple domains, and each domain may involve many entities who compete. To what extent does competition between multiple entities change the backfiring and Pareto-improving impacts of regulation? Pursuing questions about multiple domain-specialists would require further specifying the structure of $G$'s contract with each specialist, which might reasonably be conceived as a constant revenue share across domains, a constant fixed price across domains, or a variable price across domains. Relatedly, approaches to regulating different specialists may be conceived of as domain-specific (different requirements for each domain) or domain-agnostic (requirements for all domains). Pursuing questions about multiple generalists may also illuminate interesting directions. In particular, if different domains have different preferences over attributes, there may be scenarios where general providers \textit{specialize} their investments to capture some domains and cede others to their competitors. Such dynamics raise new questions about how to design regulation to account for these rich constellations of interacting actors.

\begin{acks}
    
\end{acks}

\bibliography{bibliography}

\section{Game Solving}

\subsection{Player's strategies without regulation}
\label{app:noregulation}

\textbf{The domain-specialist's strategy.} The proof for Proposition \ref{prop:D-strategy-noreg} is given below.
\begin{proof}
    $D$'s best-response strategy is the value $\gamma_1^*$ that maximizes $D$'s utility. 
    \begin{eqnarray*}
    \gamma_1^*(\gamma_0, \delta) &= & \arg\max_{\gamma_1}U_D(\gamma_0, \gamma_1 \delta) \text{ s.t. } U_D\geq 0, \alpha_1 \geq \alpha_0, \beta_1 \geq \beta_0\\
    \end{eqnarray*}
    Observe that $D$ will not abstain because zero-investment ($\gamma_1=\gamma_0$) is cost-free, yielding non-negative utility, so we can safely ignore the constraint.
    To solve the optimization, we specify the Lagrangian as follows for some multipliers $\lambda_1 \in \mathbb{R}, \lambda_2 \in \mathbb{R}$ and a slack variables $s_1 \in \mathbb{R}, s_2 \in \mathbb{R}$. By construction, we assert that the slack variables are only non-zero when the multipliers are zero, and the multipliers are non-zero only if the slack variables are zero.
    $$\mathcal{L}  :=   (1-\delta) r^T \gamma_1 - (\gamma_1-\gamma_0)^T C_1 (\gamma_1-\gamma_0) - \lambda_1(\alpha_1-\alpha_0-s_1^2) - \lambda_2 (\alpha_1-\alpha_0-s_2^2). $$
    We partially differentiate with respect to each decision variable and each multiplier.
    \begin{eqnarray*}
    \frac{\partial}{\partial \alpha_1} \mathcal{L} &=& 0\\
     \iff (1-\delta) r_\alpha - 2c_{1,\alpha\alpha} (\alpha_1-\alpha_0) + 2c_{1,\alpha\beta} (\beta_1 - \beta_0)- \lambda_1 &=& 0 \\
    \frac{\partial}{\partial \beta_1} \mathcal{L} &=& 0\\
     \iff (1-\delta) r_\beta - 2c_{1,\beta\beta} (\beta_1 - \beta_0) + 2c_{1,\alpha\beta} (\alpha_1-\alpha_0)- \lambda_2 &=& 0 \\
    \frac{\partial}{\partial \lambda_1} \mathcal{L} &=& 0 \\
    \iff -\alpha_1 + \alpha_0 + s_1^2  &=& 0 \\
    \frac{\partial}{\partial \lambda_2} \mathcal{L} &=& 0 \\
    \iff -\beta_1 + \beta_0 + s_2^2  &=& 0 \\
\end{eqnarray*}
Using complementary slackness, we have four possible options:
\begin{enumerate}
    \item $s_1=0, \lambda_1 > 0, s_2 = 0, \lambda_2>0 \rightarrow \beta_1^*=\beta_0, \alpha_1^*=\alpha_0.$
    \item $s_1\neq0, \lambda_1 = 0, s_2 = 0, \lambda_2>0 \rightarrow \beta_1^*=\beta_0$, and we can plug into our first of four equations above:
    \begin{eqnarray*}
    (1-\delta) r_\alpha - 2c_{1,\alpha\alpha} (\alpha_1-\alpha_0) + 2c_{1,\alpha\beta} (\beta_1 - \beta_0) -\lambda_1&=& 0 \\     
    \rightarrow  (1-\delta) r_\alpha - 2c_{1,\alpha\alpha} (\alpha_1-\alpha_0) &=& 0 \\     
    \rightarrow \alpha_1^* = \alpha_0 + \frac{(1-\delta)r_\alpha}{2c_{1,\alpha\alpha}}.
    \end{eqnarray*}
    \item $s_1=0, \lambda_1 > 0, s_2 \neq 0, \lambda_2=0 \rightarrow \alpha_1^*=\alpha_0,$ and we can plug in to equation 2:
    \begin{eqnarray*}
        (1-\delta) r_\beta - 2c_{1,\beta\beta} (\beta_1 - \beta_0) + 2c_{1,\alpha\beta} (\alpha_1-\alpha_0)- \lambda_2 &=& 0 \\
        \rightarrow (1-\delta) r_\beta - 2c_{1,\beta\beta} (\beta_1 - \beta_0) - \lambda_2 &=& 0\\
        \rightarrow \beta_1^* = \beta_0 + \frac{(1-\delta)r_\beta}{2c_{1,\beta\beta}}.
    \end{eqnarray*}
    \item $s_1\neq0, \lambda_1 = 0, s_2 \neq 0, \lambda_2=0 \rightarrow $ This is the unconstrained critical point, and is solved via the first two systems of equations:
    \begin{eqnarray*}
        \nabla U_D = (1-\delta)r - 2C_1(\gamma_1-\gamma_0) &=& 0\\
        \rightarrow \gamma_1^* = \gamma_0 + \frac{(1-\delta)}{2}C_1^{-1} r.
    \end{eqnarray*}
\end{enumerate}
Thus we have established our four candidates in the proposition statement.
\end{proof}

\textbf{The Generalist's strategy.} The proof for Proposition \ref{G-strategy-noreg} is given below.
\begin{proof}
    $G$'s best-response strategy is the value $\gamma_0^*$ that maximizes $G$'s utility.
    \begin{eqnarray*}
    \gamma_0^*(\delta) &= & \arg\max_{\gamma_1}U_G(\gamma_0, \delta) \text{ s.t. } U_G \geq 0, \alpha_0 \geq 0, \beta_0 \geq 0.\\
    \end{eqnarray*}
    Following the same steps as the proof of Proposition \ref{prop:D-strategy-noreg}, we specify the Lagrangian as follows for multipliers $\lambda_1 \in \mathbb{R}, \lambda_2 \in \mathbb{R}$ and a slack variables $s_1 \in \mathbb{R}, s_2 \in \mathbb{R}$. 
    $$\mathcal{L}  :=   \delta r^T \gamma_1 - \gamma_0^T C_0 \gamma_0 - \lambda_1(\alpha_0-s_1^2) - \lambda_2 (\alpha_0-s_2^2). $$
    We partially differentiate with respect to each decision variable and each multiplier.
    \begin{eqnarray*}
    \frac{\partial}{\partial \alpha_0} \mathcal{L} &=& 0\\
     \iff \delta r_\alpha - 2c_{0,\alpha\alpha} \alpha_0 + 2c_{0,\alpha\beta} \beta_0- \lambda_1 &=& 0, \\
    \frac{\partial}{\partial \beta_1} \mathcal{L} &=& 0\\
     \iff \delta r_\beta - 2c_{0,\beta\beta}  \beta_0 + 2c_{1,\alpha\beta} \alpha_0- \lambda_2 &=& 0, \\
    \frac{\partial}{\partial \lambda_1} \mathcal{L} &=& 0 \\
    \iff -\alpha_0 + s_1^2  &=& 0, \\
    \frac{\partial}{\partial \lambda_2} \mathcal{L} &=& 0 \\
    \iff -\beta_0 + s_2^2  &=& 0. \\
\end{eqnarray*}
Using complementary slackness, we have four possible options:
\begin{enumerate}
    \item $s_1=0, \lambda_1 > 0, s_2 = 0, \lambda_2>0 \rightarrow \beta_0^*=0, \alpha_0^*=0.$
    \item $s_1\neq0, \lambda_1 = 0, s_2 = 0, \lambda_2>0 \rightarrow \beta_0^*=0$, and we can plug into our first of four equations above:
    \begin{eqnarray*}
    \delta r_\alpha - 2c_{0,\alpha\alpha} \alpha_0 + 2c_{0,\alpha\beta}  \beta_0 -\lambda_1&=& 0 \\     
    \rightarrow  \delta r_\alpha - 2c_{0,\alpha\alpha} \alpha_0 &=& 0 \\     
    \rightarrow \alpha_0^* = \frac{\delta r_\alpha}{2c_{0,\alpha\alpha}}.
    \end{eqnarray*}
    \item $s_1=0, \lambda_1 > 0, s_2 \neq 0, \lambda_2=0 \rightarrow \alpha_0^*=0,$ and we can plug in to equation 2:
    \begin{eqnarray*}
        \delta r_\beta - 2c_{0,\beta\beta}  \beta_0 + 2c_{0,\alpha\beta} \alpha_0- \lambda_2 &=& 0 \\
        \delta r_\beta - 2c_{0,\beta\beta}  \beta_0 &=& 0\\
        \rightarrow \beta_0^*=  \frac{\delta r_\beta}{2c_{0,\beta\beta}}.
    \end{eqnarray*}
    \item $s_1\neq0, \lambda_1 = 0, s_2 \neq 0, \lambda_2=0 \rightarrow $ This is the unconstrained critical point, and is solved via the first two systems of equations:
    \begin{eqnarray*}
        \nabla U_G = \delta r - 2C_0 \gamma_0 &=& 0\\
        \rightarrow \gamma_0^* =  \frac{\delta}{2}C_0^{-1} r.
    \end{eqnarray*}
\end{enumerate}
Thus we have established our four candidates.   
\end{proof}

\subsection{Condition for non-zero performance and safety investment}
\label{app:non-zero-investment-no-reg}

\textbf{Condition establishing non-zero investment}. Below we prove Remark \ref{remark:unconstrained-condition}.
\begin{proof}
    The first of the three inequalities establishes that the player's costs are strictly convex: 
    \begin{eqnarray*}
        c_{\alpha\beta}< \sqrt{c_{p,\alpha\alpha}c_{p,\beta\beta}} \iff c_{p,\alpha\alpha}c_{p,\beta\beta}-c_{\alpha\beta}^2>0 \iff \det C_p>0.
    \end{eqnarray*}
    By the spectral theorem, we know a 2x2 matrix is positive definite if and only if its determinant and trace are both positive, which is now established.
    By Lemma \ref{lemma:pd-iff-concave}, the utility is strictly concave for our setting if and only if the cost is strictly convex. Thus the unconstrained solution is the global optimum as long as it is feasible. Thus, the necessary and sufficient condition for optimality is the condition for feasibility. 
    \begin{itemize}
        \item 
    For the generalist: $$\frac{\delta}{2}C_0^{-1}r > 0 \iff \frac{\delta}{2\det C_0}\left[\begin{array}{c}
    c_{0,\beta\beta}r_\alpha-c_{0,\alpha\beta}r_\beta \\
          -c_{0,\alpha\beta}r_\alpha+c_{0,\alpha\alpha}r_\beta
    \end{array}\right]>\left[\begin{array}{c}
         0 \\
         0
    \end{array}\right]$$
    Using the same positive definiteness identity above, we know the determinant is positive. We are given $\delta>0$. 
    Thus we can cancel the positive constant term $\frac{\delta}{2\det C_0}$. The two inequalities simplify to those stated in the proposition.
    \item For the specialist, the proof proceeds identically. Observe that $(1-\delta)\geq 0$ and the unconstrained contribution is given by:
    $\frac{1-\delta}{2} C_1^{-1}r$. 
    \end{itemize}
\end{proof}

\subsection{Proof for Player Strategies with Regulation}
\label{app:player-strategies-regulation}

Here we provide proofs for our propositions establishing best-response strategies for the players. 

\textbf{Domain-specialist best-response under regulation.} Here we provide the proof of Proposition \ref{prop:D-strategy-regulation}, the domain specialist's best response under regulatory requirement $\theta_D$.

\begin{proof}
    $D$'s best-response strategy is the value $\gamma_1^*$ that maximizes $D$'s utility. $D$ will abstain if and only if the best option yields negative utility.
    \begin{eqnarray*}
    \gamma_1^*(\gamma_0, \delta, \theta_D) &= & \arg\max_{\gamma_1}U_D(\gamma_0, \delta, \theta_D) \text{ s.t. } U_D \geq 0, \alpha_1 \geq \alpha_0, \beta_1 \geq \max\left(\beta_0,\theta_D\right).\\
    \end{eqnarray*}
    Define $\kappa=\max(\beta_0,\theta_D).$ To solve the optimization, we specify the Lagrangian as follows for some multipliers $\lambda \in \mathbb{R}^3$ and a slack variables $s\in \mathbb{R}^3$.
    $$\mathcal{L}  :=   (1-\delta) r^T \gamma_1 - (\gamma_1-\gamma_0)^T C_1 (\gamma_1-\gamma_0) - \lambda_1 (\alpha_1-\alpha_0-s_1^2) - \lambda_2(\beta_1 - \kappa - s_2^2) - \lambda_3(U_D-s_3^2). $$
    We partially differentiate with respect to each decision variable and each multiplier.
    \begin{eqnarray*}
    \frac{\partial}{\partial \alpha_1} \mathcal{L} &=& 0\\
     \iff (1-\delta) r_\alpha - 2c_{1,\alpha\alpha} (\alpha_1-\alpha_0) + 2c_{1,\alpha\beta} (\beta_1 - \kappa)- \lambda_1 - \lambda_3 \frac{\partial U_D}{\partial \alpha_1} &=& 0 \\
     \iff (1-\lambda_3)\left((1-\delta) r_\alpha - 2c_{1,\alpha\alpha} (\alpha_1-\alpha_0) + 2c_{1,\alpha\beta} (\beta_1 - \kappa) \right) - \lambda_1&=& 0 \\
    \frac{\partial}{\partial \beta_1} \mathcal{L} &=& 0\\
     \iff (1-\delta) r_\beta - 2c_{1,\beta\beta} (\beta_1 - \kappa) + 2c_{1,\alpha\beta} (\alpha_1-\alpha_0)- \lambda_2 - \lambda_3 \frac{\partial U_D}{\partial \beta_1} &=& 0 \\
     \iff (1-\lambda_3) \left((1-\delta) r_\beta - 2c_{1,\beta\beta} (\beta_1 - \kappa) + 2c_{1,\alpha\beta} (\alpha_1-\alpha_0)\right)- \lambda_2  &=& 0 \\
    \frac{\partial}{\partial \lambda_1} \mathcal{L} &=& 0 \\
    \iff -\alpha_1 + \alpha_0 + s_1^2  &=& 0 \\
    \frac{\partial}{\partial \lambda_2} \mathcal{L} &=& 0 \\
    \iff -\beta_1 + \kappa + s_2^2  &=& 0 \\
    \frac{\partial}{\partial \lambda_3} \mathcal{L} &=& 0 \\
    \iff -U_D + s_2^2  &=& 0 \\
    \iff -(1-\delta) r^T \gamma_1 + \left(\gamma_1-\gamma_0\right)^T C_1 \left(\gamma_1-\gamma_0\right) + s_2^2  &=& 0 \\
\end{eqnarray*}
Using complementary slackness, we have eight possible options:
\begin{enumerate}
    \item $s_1=0, \lambda_1 > 0, s_2 = 0, \lambda_2>0, s_3\neq0, \lambda_3 = 0 \rightarrow \beta_1^*=\kappa, \alpha_1^*=\alpha_0.$
    \item $s_1=0, \lambda_1 > 0, s_2 = 0, \lambda_2>0, s_3=0, \lambda_3 > 0 \rightarrow \beta_1^*=\kappa, \alpha_1^*=\alpha_0.$ This offers the same candidate as (1).
    \item $s_1=0, \lambda_1 > 0, s_2 \neq  0, \lambda_2=0, s_3 \neq 0, \lambda_3 = 0 \rightarrow \alpha_1^* = \alpha_0, $ solve equations (1) and (2) for $\beta_1^*$ and $\lambda_1$. Omitting the algebra, this yields:
    $$\gamma_1^* = \left[\begin{array}{c}
         \alpha_0 \\
         \beta_0+\frac{(1-\delta)r_\beta}{2c_{1\beta\beta}}
    \end{array}\right]$$
    \item $s_1=0, \lambda_1 > 0, s_2 \neq  0, \lambda_2=0, s_3=0, \lambda_3 > 0 \rightarrow \alpha_1^* = \alpha_0, $ solve equations (1) and (2) for $\beta_1^*$ and $\lambda_1$. This solution, if it is distinct from the previous solution (3), will always be dominated because it is characterized by 0 utility for $G$. 
    \item $s_1\neq0, \lambda_1 = 0, s_2 = 0, \lambda_2>0, s_3=0, \lambda_3 > 0 \rightarrow \beta_1^* = \kappa \rightarrow$ solve equations (1) and (2) for $\lambda_1$ and $\alpha_1^*$. Omitting algebra, this yields:
    $$\gamma_1^* = \left[\begin{array}{c}
         \alpha_0 +\frac{(1-\delta)r_\alpha}{2c_{1\alpha\alpha}} - \frac{c_{1\alpha\beta}}{c_{1\alpha\alpha}}\max(0,\theta_D - \beta_0)\\
         \max(\beta_0,\theta_D)
    \end{array}\right]$$
    \item $s_1\neq0, \lambda_1 = 0, s_2 = 0, \lambda_2>0, s_3\neq0, \lambda_3 = 0 \rightarrow $ this solution, if it is distinct from the previous one (5), will always be dominated because it is characterized by 0 utility for $G$. 
    \item $s_1\neq0, \lambda_1 = 0, s_2 \neq  0, \lambda_2=0, s_3 \neq 0, \lambda_3 = 0 \rightarrow \alpha_1^* = \alpha_0, $ solve equations (1) and (2) for $\alpha_1^*, \beta_1^*$. This is the unconstrained solution. Omitting algebra, this yields:
    $\gamma_1^* = \gamma_0+\frac{(1-\delta)}{2}C_1^{-1}r$.
    \item $s_1\neq0, \lambda_1 = 0, s_2 \neq  0, \lambda_2=0, s_3=0, \lambda_3 > 0 \rightarrow \beta_1^*=\kappa \rightarrow$ solve equations (1) and (2) for $\alpha_1^*,\beta_1^*$. This solution, if it is distinct from (7), will always be dominated by (7) because it is characterized by 0 utility for $G$.
\end{enumerate}
Thus we have established our four candidates in the proposition statement.
To handle the \texttt{abstain} scenario, we check each candidate produced in the process above by plugging the strategy to our formula for $U_D$. If none yield positive utility, then the domain specialist prefers to \texttt{abstain}.
\end{proof}

\textbf{The generalist's subgame perfect equilibrium strategy under regulation}. Here we prove Proposition \ref{prop:G-strategy-regulation}. 

\begin{proof}
    $G$'s best-response strategy is the value $\gamma_0^*$ that maximizes $G$'s utility.
    \begin{eqnarray*}
    \gamma_0^*(\delta, \theta_G, \theta_D) &= & \arg\max_{\gamma_0}U_G(\gamma_0; \delta, \theta_G, \theta_D) \text{ s.t. } U_G \geq 0, U_D \geq 0, \alpha_0 \geq 0, \beta_0 \geq \theta_G.\\
    \end{eqnarray*}
    To solve the optimization, we specify the Lagrangian as follows for some multipliers $\lambda \in \mathbb{R}^4$ and a slack variables $s\in \mathbb{R}^4$.
    \begin{eqnarray*}
        \mathcal{L}  := &  \delta r^T \gamma_1 - \gamma_0^T C_0 \gamma_0 - \lambda_1 (\alpha_0-s_1^2)  - \lambda_2(\beta_1 - \theta_G - s_2^2) - \lambda_3(U_D-s_3^2) - \lambda_4(U_G-s_4^2).
    \end{eqnarray*}
    We partially differentiate with respect to each decision variable and each multiplier.
    \begin{eqnarray*}
    \frac{\partial}{\partial \alpha_1} \mathcal{L} &=& 0\\
     \iff \delta r_\alpha - 2c_{0,\alpha\alpha}\alpha_0 + 2c_{0,\alpha\beta} \beta_0- \lambda_1 - \lambda_3 \frac{\partial U_G}{\partial \alpha_0}- \lambda_4 \frac{\partial U_D}{\partial \alpha_0} &=& 0 \\
     \iff (1-\lambda_3)\left(\delta r_\alpha - 2c_{0,\alpha\alpha}\alpha_0 + 2c_{0,\alpha\beta} \beta_0 \right) - \lambda_1 &&\\- \lambda_4\left((1-\delta) r_\alpha - 2c_{1,\alpha\alpha} (\alpha_1-\alpha_0) + 2c_{1,\alpha\beta} (\beta_1 - \max(\beta_0,\theta_D))\right)&=& 0, \\
    \frac{\partial}{\partial \beta_1} \mathcal{L} &=& 0\\
     \iff \delta r_\beta - 2c_{0,\beta\beta} \beta_0 + 2c_{0,\alpha\beta} \alpha_0- \lambda_2 - \lambda_3 \frac{\partial U_G}{\partial \beta_0}  - \lambda_4 \frac{\partial U_D}{\partial \beta_0} &=& 0 \\
     \iff (1-\lambda_3) \left(\delta r_\beta - 2c_{0,\beta\beta} \beta_0 + 2c_{0,\alpha\beta} \alpha_0\right)- \lambda_2 &&\\
     - \lambda_4\left((1-\delta) r_\beta - 2c_{1,\beta\beta} (\beta_1 - \max(\beta_0,\theta_D) + 2c_{1,\alpha\beta} (\alpha_1-\alpha_0)\right)&=& 0, \\
    \frac{\partial}{\partial \lambda_1} \mathcal{L} &=& 0 \\
    \iff -\alpha_0 + s_1^2  &=& 0, \\
    \frac{\partial}{\partial \lambda_2} \mathcal{L} &=& 0 \\
    \iff -\beta_0 + s_2^2  &=& 0, \\
    \frac{\partial}{\partial \lambda_3} \mathcal{L} &=& 0 \\
    \iff -U_G + s_2^2  &=& 0 \\
    \iff -\delta r^T \gamma_1 + \gamma_0^T C_1 \gamma_0 + s_2^2  &=& 0, \\
    \frac{\partial}{\partial \lambda_3} \mathcal{L} &=& 0 \\
    \iff -U_D + s_2^2  &=& 0 \\
    \iff -(1-\delta) r^T \gamma_1 + \left(\gamma_1-\gamma_0\right)^T C_1 \left(\gamma_1-\gamma_0\right) + s_2^2  &=& 0. \\
\end{eqnarray*}

Using complementary slackness, we have sixteen possible options. For brevity, we refer to these options by the constraints they satisfy, where \textbf{bold} corresponds to the constraints being activated. The algebra is omitted for exposition; only the candidates yielded are noted for each constraint setting.
\begin{enumerate}
    \item $\boldsymbol{\alpha_0,\beta_0,U_G, U_D} \rightarrow [0,\theta_G]$.
    \item $\boldsymbol{\alpha_0,\beta_0,U_G}, U_D \rightarrow [0,\theta_G]$  
    \item $\boldsymbol{\alpha_0,\beta_0},U_G, U_D \rightarrow [0,\theta_G]$
    \item $\boldsymbol{\alpha_0,\beta_0},U_G, \boldsymbol{U_D} \rightarrow [0,\theta_G]$ 
    \item $\boldsymbol{\alpha_0},\beta_0,\boldsymbol{U_G, U_D} \rightarrow \left[\begin{array}{c}
         0 \\
         \frac{\delta r_\beta}{2c_{0\beta\beta}}
    \end{array}\right]$ 
    \item $\boldsymbol{\alpha_0},\beta_0,\boldsymbol{U_G}, U_D \rightarrow \left[\begin{array}{c}
         0 \\
         \frac{\delta r_\beta}{2c_{0\beta\beta}}
    \end{array}\right]$ 
    \item $\boldsymbol{\alpha_0},\beta_0,U_G, U_D \rightarrow \left[\begin{array}{c}
         0 \\
         \frac{\delta r_\beta}{2c_{0\beta\beta}}
    \end{array}\right]$ 
    \item $\boldsymbol{\alpha_0},\beta_0,U_G, \boldsymbol{U_D} \rightarrow $ One of three along $U_D=0$ curve.
    \item $\alpha_0,\boldsymbol{\beta_0,U_G, U_D} \rightarrow \gamma_0^*=\left[\begin{array}{c}
         \frac{\delta r_\alpha}{2c_{0\alpha\alpha}}-\frac{c_{0\alpha\beta}}{c_{0\alpha\alpha}}\theta_G\\
         \theta_G
    \end{array}\right]$.
    \item $\alpha_0,\boldsymbol{\beta_0,U_G}, U_D \rightarrow \left[\begin{array}{c}
         \frac{\delta r_\alpha}{2c_{0\alpha\alpha}}-\frac{c_{0\alpha\beta}}{c_{0\alpha\alpha}}\theta_G\\
         \theta_G
    \end{array}\right]$  
    \item $\alpha_0,\boldsymbol{\beta_0},U_G, U_D \rightarrow \left[\begin{array}{c}
         \frac{\delta r_\alpha}{2c_{0\alpha\alpha}}-\frac{c_{0\alpha\beta}}{c_{0\alpha\alpha}}\theta_G\\
         \theta_G
    \end{array}\right]$ 
    \item $\alpha_0,\boldsymbol{\beta_0},U_G, \boldsymbol{U_D} \rightarrow $ Two of three along the $U_D=0$ curve. 
    \item $\alpha_0,\beta_0,\boldsymbol{U_G, U_D} \rightarrow \frac{\delta}{2}C_0^{-1} r$ .
    \item $\alpha_0,\beta_0,\boldsymbol{U_G}, U_D \rightarrow \frac{\delta}{2}C_0^{-1} r$  
    \item $\alpha_0,\beta_0,U_G, U_D \rightarrow \frac{\delta}{2}C_0^{-1} r$
    \item $\alpha_0,\beta_0,U_G, \boldsymbol{U_D} \rightarrow $ Three of three along the $U_D=0$ curve. 
\end{enumerate}
Thus we have established our four candidates in the proposition statement.
To handle the \texttt{abstain} scenario, we check each candidate produced in the process above by plugging the strategy to our formula for $U_G$. If none yield positive utility, then the generalist prefers to \texttt{abstain}.
\end{proof}

\section{Helper Lemmas and Analysis}

Here we write out helper Lemmas and analysis for our proofs concerning backfiring and mutualism.

\begin{lemma}
    In the AI regulation game with quadratic costs, any player's utility is strictly concave if and only if their cost matrix is positive definite. 
    \label{lemma:pd-iff-concave}
\end{lemma}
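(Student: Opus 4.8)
The plan is to reduce the claim to the standard fact that a quadratic function $f(x) = -x^\top A x + b^\top x + c$ is strictly concave if and only if $A$ is positive definite, by computing the Hessian of each player's utility with respect to that player's own decision variable and showing it equals $-2C_p$. Since each utility is a quadratic (or, after substituting the downstream best response, reduces to a quadratic) in the relevant investment vector, its Hessian is constant, and strict concavity is equivalent to negative definiteness of that Hessian.

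I would treat the two players separately. For the domain specialist $D$, the decision variable is $\gamma_1$ and $U_D = (1-\delta) r^\top \gamma_1 - (\gamma_1-\gamma_0)^\top C_1 (\gamma_1-\gamma_0)$ with $\gamma_0$ held fixed. The revenue term is linear in $\gamma_1$, so the only second-order contribution comes from the cost, giving $\nabla^2_{\gamma_1} U_D = -2 C_1$. For the generalist $G$, the decision variable is $\gamma_0$ and $U_G = \delta r^\top \gamma_1 - \gamma_0^\top C_0 \gamma_0$; the subtlety is that $\gamma_1$ is itself the specialist's best response to $\gamma_0$. In the interior regime (the regime in which concavity is actually invoked, cf.\ the unconstrained critical point in Proposition \ref{G-strategy-noreg}), the specialist plays $\gamma_1 = \gamma_0 + \frac{(1-\delta)}{2} C_1^{-1} r$, so $r^\top \gamma_1 = r^\top \gamma_0 + \text{const}$ is affine in $\gamma_0$ and contributes nothing to the Hessian. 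Hence $\nabla^2_{\gamma_0} U_G = -2 C_0$.

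It then remains to translate negative definiteness of $-2C_p$ into positive definiteness of $C_p$: for any nonzero $v$ we have $v^\top(-2C_p)v < 0 \iff v^\top C_p v > 0$, so $-2C_p \prec 0 \iff C_p \succ 0$. Combining this with the equivalence between strict concavity of a quadratic and negative definiteness of its constant Hessian yields both directions of the biconditional simultaneously, since no step is lossy.

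The main obstacle is the generalist's case: because $U_G$ depends on $\gamma_0$ both directly through the cost and indirectly through the specialist's response $\gamma_1(\gamma_0)$, I must justify that the indirect dependence is affine so that it drops out of the second-order term. I would make this precise by restricting attention to the interior best response, where the argument is used, and noting that the remaining $\max$/boundary branches of Proposition \ref{prop:D-strategy-regulation} either pin some coordinate of $\gamma_1$ to a constant or to $\gamma_0$ plus a constant (again leaving an affine dependence in $\gamma_0$), so that in every relevant piece the curvature in $\gamma_0$ is governed solely by $-\gamma_0^\top C_0 \gamma_0$ and the Hessian is exactly $-2C_0$.
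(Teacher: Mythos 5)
Your proposal follows essentially the same route as the paper's own proof: compute the (constant) Hessian of each player's utility, observe it equals $-2C_p$ because the revenue term contributes no curvature, and translate negative definiteness of $-2C_p$ into positive definiteness of $C_p$. If anything, your treatment is slightly more careful than the paper's, which silently drops the dependence of $\gamma_1$ on $\gamma_0$ when differentiating $U_G$, whereas you explicitly verify that in each relevant branch of the specialist's best response this dependence is affine and hence curvature-free; you also correctly restrict the ``strict concavity $\iff$ negative definite Hessian'' equivalence to quadratics, where it actually holds.
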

\begin{proof}
    The generalist utility function is given by $U_G = \delta r^T \gamma_1 - \gamma_0^T C_0 \gamma_0$. Observe this is twice differentiable everywhere. Thus the function is strictly concave in $\alpha_0,\beta_0$ if and only if its Hessian derivative is negative definite. We compute the Hessian as follows:
$$
H := \left[\begin{array}{cc}
\frac{\partial^2 U_G}{\partial \alpha_0^2} & \frac{\partial^2 U_G}{\partial \alpha_0 \partial \beta_0} \\
\frac{\partial^2 U_G}{\partial \beta_0 \partial \alpha_0} & \frac{\partial^2 U_G}{\partial \beta_0^2}
\end{array}\right] = -2C_0.
$$
This matrix is negative definite if and only if $C_0$ is positive definite.

The proof for the domain specialist follows the same steps.
\end{proof}

\begin{lemma}
    In any AI regulation game with separable quadratic costs, if there is no regulation, both players will invest a non-zero amount in each attribute.
\end{lemma}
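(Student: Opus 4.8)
The plan is to obtain this lemma as a direct specialization of Remark~\ref{remark:unconstrained-condition}, which already characterizes when a player's best response invests in both attributes at once. First I would record what \emph{separable} means at the level of the cost matrices: the cross-terms vanish, $c_{0,\alpha\beta} = c_{1,\alpha\beta} = 0$. The task then reduces to checking that the hypothesis of Remark~\ref{remark:unconstrained-condition} holds for each player $p \in \{0,1\}$.

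Substituting $c_{p,\alpha\beta}=0$ into the remark's inequality
$$c_{p,\alpha\beta} < \min\left(\sqrt{c_{p,\alpha\alpha}c_{p,\beta\beta}},\ \frac{c_{p,\alpha\alpha} r_\beta}{r_\alpha},\ \frac{c_{p,\beta\beta}r_\alpha}{r_\beta}\right)$$
turns it into $0 < \min(\cdots)$, so all that remains is to show every term inside the $\min$ is strictly positive. Here I would invoke the standing well-posedness requirement: in the separable case the cost matrix $C_p = \mathrm{diag}(c_{p,\alpha\alpha}, c_{p,\beta\beta})$ is positive definite exactly when $c_{p,\alpha\alpha}>0$ and $c_{p,\beta\beta}>0$, and positive definiteness is precisely what makes each player's utility strictly concave (Lemma~\ref{lemma:pd-iff-concave}) and the unconstrained optimization bounded. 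Together with positive revenue coefficients $r_\alpha, r_\beta > 0$ and a non-degenerate split $\delta \in (0,1)$, this makes $\sqrt{c_{p,\alpha\alpha}c_{p,\beta\beta}}$, $c_{p,\alpha\alpha}r_\beta/r_\alpha$, and $c_{p,\beta\beta}r_\alpha/r_\beta$ all strictly positive, so the $\min$ is strictly positive and the inequality holds for both players.

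With the hypothesis verified, I would apply Remark~\ref{remark:unconstrained-condition} twice. For the generalist ($p=0$) this yields $\alpha_0 > 0$ and $\beta_0 > 0$; for the specialist ($p=1$) it yields strictly positive marginal investment in both coordinates, $\alpha_1 > \alpha_0$ and $\beta_1 > \beta_0$, whence the final levels are strictly positive as well. I would note that the specialist's conclusion is unaffected by the dependence on $\gamma_0$: in the separable case the unconstrained marginal contribution is $\frac{1-\delta}{2}C_1^{-1}r = \frac{1-\delta}{2}\left[r_\alpha/c_{1,\alpha\alpha},\ r_\beta/c_{1,\beta\beta}\right]^{\top}$, which is simply added to $\gamma_0$ and is positive in both components regardless of $\gamma_0$.

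I do not anticipate a genuine obstacle, since the content is carried entirely by Remark~\ref{remark:unconstrained-condition} and the separable hypothesis makes its bound hold trivially as $0 < (\text{positive})$. The only point requiring care is justifying strict positivity of the diagonal cost terms rather than mere non-negativity: if either $c_{p,\alpha\alpha}$ or $c_{p,\beta\beta}$ were zero, then $C_p$ would fail to be positive definite, the $\min$ would collapse to $0$ and the strict inequality would fail, and the player's utility would be unbounded in that coordinate. Hence a well-posed separable game necessarily has $c_{p,\alpha\alpha}, c_{p,\beta\beta} > 0$, and $\delta \in (0,1)$ is needed so that both $\delta$ and $1-\delta$ are strictly positive.
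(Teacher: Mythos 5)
Your proposal is correct, but it takes a genuinely different route from the paper. The paper proves this lemma directly and self-containedly: using Lemma~\ref{lemma:pd-iff-concave} it establishes strict concavity, then checks that each player's utility is non-negative at the zero-investment corner and that the gradient points into the interior along each boundary constraint (e.g.\ $\partial U_G/\partial \alpha_0 = \delta r_\alpha > 0$ on the boundary, and similarly for $\beta$ and for $D$'s marginal investment), so the maximizer cannot sit on any axis and must be interior. You instead derive the lemma as a specialization of Remark~\ref{remark:unconstrained-condition}: with $c_{p,\alpha\beta}=0$ the remark's hypothesis collapses to $0 < \min(\cdot)$, which holds once the diagonal cost entries and revenue coefficients are strictly positive. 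Your route is logically sound and arguably the more economical one --- indeed the paper itself observes in the main text that the remark ``covers all games in which the cost interactions are zero,'' so the lemma really is the cross-term-zero instance of the remark, and there is no circularity since the remark's proof does not use this lemma. What the paper's direct argument buys is independence from the explicit formula $\frac{\delta}{2}C_0^{-1}r$ and a transparent view of the interiority mechanism (inward-pointing boundary gradients), which generalizes more readily beyond quadratic costs; what your argument buys is brevity and an explicit reduction that makes the relationship between the lemma and the remark visible. Your care on one point deserves note: the paper's model section only states that the diagonal cost entries are \emph{non-negative}, so your observation that strict positivity is actually required (else the $\min$ collapses to zero and the utility becomes unbounded in that coordinate) patches an ambiguity that the paper's own proof also silently relies on, since its boundary-gradient evaluations presume the same well-posedness.
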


\begin{proof}
    By Lemma \ref{lemma:pd-iff-concave}, we are given that the utilities are strictly concave. Thus, the proof consists of showing that 1) the utility function is greater than or equal to 0 at the origin point of zero investment and 2) the gradient points towards the interior of the feasible set everywhere along the boundaries.

    Here we prove the two conditions for $U_G$:
    \begin{enumerate}
        \item $U_G(\alpha_0=0,\beta_0=0) = \delta r^T \vec{0} - 0 = 0$
        \item We prove this for each constraint, $\alpha_0\geq 0, \beta_0\geq 0$:
        \begin{itemize}
            \item $\frac{\partial U_G}{\partial \alpha_0}\big|_{\beta_0=0} = \delta r_\alpha - 2c_{0,\alpha\alpha}\alpha_0 = \delta r_\alpha-0 > 0$.
            \item $\frac{\partial U_G}{\partial \beta_0}\big|_{\alpha_0=0} = \delta r_\beta - 2c_{0,\beta\beta}\beta_0 = \delta r_\beta-0 > 0$.
        \end{itemize}
    \end{enumerate}

    Here we prove the two conditions for $U_D$:
    \begin{enumerate}
        \item $U_D(\alpha_i=0,\beta_i=0) = (1-\delta) r^T \gamma_0 - 0 \geq 0$
        \item We prove this for each constraint, $\alpha_i\geq \alpha_0, \beta_i\geq \beta_0$:
        \begin{itemize}
            \item $\frac{\partial U_D}{\partial \alpha_i}\big|_{\beta_i=\beta_0} = (1-\delta) r_\alpha - 2c_{i,\alpha\alpha}(\alpha_1-\alpha_0) = (1-\delta) r_\alpha > 0$.
            \item $\frac{\partial U_G}{\partial \beta_0}\big|_{\alpha_i=\alpha_0} = (1-\delta) r_\beta - 2c_{i,\beta\beta}(\beta_i-\beta_0) = \delta r_\beta-0 > 0$.
        \end{itemize}
    \end{enumerate}
\end{proof}

\section{Proving the Backfiring Result}
\label{app:backfiring-proof}

Below we prove the Theorem \ref{thm:backfiring}. 

\begin{proof}
    Assume $\theta_G = 0$ for the entire proof. By Remark \ref{remark:unconstrained-condition}, we're given that the players commit to their unconstrained strategy in equilibrium. These were solved in Propositions \ref{prop:D-strategy-regulation} and \ref{prop:G-strategy-regulation}. Thus we have the following player's strategies under no regulation for this setting:
    \begin{equation}
        \gamma_0^A = \frac{\delta}{2}C_0^{-1}r,\ \ \  \gamma_1^A = \frac{1-\delta}{2}C_1^{-1}r.
    \label{eqn:backfiring-noreg-strategies}
    \end{equation}
    Our strategy is to show that G's unconstrained, no-regulation optimum becomes dominated in the presence of regulation targeting D, which we choose to be arbitrarily close to $\beta_1^A$.

    \textbf{Notation.} Before we proceed, we introduce some additional notation. Define the set $S$ to be all feasible pairs of strategies $(\gamma_0,\gamma_1)$. `Feasible' here means those strategies which leave both $G$ and $D$ with non-negative utility. We use the subscript $S_{\theta_D}$ to track the particular regulatory threshold. The feasible pairs of strategies in the unregulated game is given by $S_0$, and the feasible pairs of strategies in a game with threshold $\theta_D=1.5$ is denoted $S_{1.5}$. We may refer to the unregulated game with the superscript $A$ (for anarchy), e.g. $\beta_1^A$ refers to the unregulated safety level. Observe that any set of tuples $S_\theta$ can be separated into two mutually exclusive and collectively exhaustive sets: 
\begin{itemize}
    \item $S_\theta^{\text{MC}}$ (for minimally compliant) is the set of all tuples where $D$'s best response has safety $\beta_1^*=\theta_D$.
    \item $S_\theta^{\text{C}}$ (for contribute) is the set of all tuples where $D$'s best response has safety $\beta_1^*>\theta_D$.
\end{itemize}
Now, we provide a sequence of lemmas, with the purpose of establishing the intuition that \textit{all we must do is find some $\epsilon>0$ and some strategy $\beta_0^R\neq \beta_0^A$ such that G prefers $\beta_0^R$ to $\beta_0^A$ and D minimally complies}. 
\begin{lemma}
    For any threshold $\theta_D>0$, $S_\theta^{\text{C}} \subset S_0$.
    \label{obs:strat-subset}
\end{lemma}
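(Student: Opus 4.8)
The plan is to show that whenever $D$'s regulated best response contributes strictly above the safety floor, the floor must be slack, so that same best response is also optimal in the unregulated game, which places the pair in $S_0$. First I would make the feasible-set comparison explicit. For a fixed $\gamma_0$, $D$ maximizes $U_D$ over $F_{\theta_D} := \{\gamma_1 : \alpha_1 \geq \alpha_0,\ \beta_1 \geq \max(\beta_0,\theta_D)\}$ under regulation and over $F_0 := \{\gamma_1 : \alpha_1 \geq \alpha_0,\ \beta_1 \geq \beta_0\}$ without it. Since $\max(\beta_0,\theta_D) \geq \beta_0$, we always have $F_{\theta_D} \subseteq F_0$, so any regulated best response is automatically feasible for the unregulated problem. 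What remains is to argue that it is in fact \emph{optimal} for the unregulated problem as well.

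The crux is upgrading this set containment to optimality. The conditions assumed in Theorem \ref{thm:backfiring} force $\det C_1 > 0$, so $C_1$ is positive definite and, by Lemma \ref{lemma:pd-iff-concave}, $U_D$ is concave. Let $\gamma_1$ be a regulated maximizer with $\beta_1 > \theta_D$, and suppose toward a contradiction that some $\gamma_1' \in F_0$ satisfies $U_D(\gamma_1') > U_D(\gamma_1)$. Writing $\gamma_1^t := (1-t)\gamma_1 + t\gamma_1'$, concavity gives $U_D(\gamma_1^t) \geq (1-t)U_D(\gamma_1) + t\,U_D(\gamma_1') > U_D(\gamma_1)$ for every $t \in (0,1]$. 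The convex combination keeps $\alpha_1^t \geq \alpha_0$ for all $t$, and because $\beta_1 > \theta_D$ \emph{strictly}, continuity guarantees $\beta_1^t > \theta_D$ — hence $\gamma_1^t \in F_{\theta_D}$ — for all sufficiently small $t > 0$. This produces a feasible point strictly improving on $\gamma_1$ inside $F_{\theta_D}$, contradicting optimality of $\gamma_1$ there. Therefore $\gamma_1$ maximizes $U_D$ over $F_0$, i.e. it is the unregulated best response to $\gamma_0$. Since the utilities $U_G, U_D$ depend only on the (unchanged) pair $(\gamma_0,\gamma_1)$ and not on $\theta_D$, their non-negativity transfers verbatim, so $(\gamma_0,\gamma_1) \in S_0$.

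I expect the small-perturbation feasibility check to be the main subtlety rather than a deep obstacle: everything hinges on the strict inequality $\beta_1 > \theta_D$ defining $S_\theta^{\text{C}}$, which is exactly what makes the safety floor inactive and lets the interior step survive. An alternative, more computational route would enumerate the best-response candidates of Proposition \ref{prop:D-strategy-regulation} and observe that each one with $\beta_1 > \theta_D$ coincides termwise with a no-regulation candidate (the $\theta_D$-dependent terms cancel precisely when the floor is slack); I would keep the concavity argument as the primary proof since it avoids the case split on $\max(\beta_0,\theta_D)$. Finally, I would note the inclusion is proper: for any $\theta_D > 0$, $S_0$ contains best-response pairs in which $D$'s unconstrained safety lies strictly below $\theta_D$, and these cannot appear in $S_\theta^{\text{C}}$, giving $S_\theta^{\text{C}} \subsetneq S_0$.
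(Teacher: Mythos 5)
Your proposal is correct, and it is in fact substantially more careful than the paper's own proof, which disposes of the lemma in a single line of set algebra: $S_0 = S_0^{\text{MC}} \cup S_0^{\text{C}} = S_0^{\text{MC}} \cup \bigl(\bigcup_{t=0}^\infty S_t^{\text{C}}\bigr) \supset S_\theta^{\text{C}}$. That chain tacitly asserts $\bigcup_{t\geq 0} S_t^{\text{C}} \subseteq S_0^{\text{C}}$, i.e.\ precisely the claim you isolate and prove: when the safety floor is slack at $D$'s regulated best response, that same point is also $D$'s best response in the unregulated game. Your concavity-plus-perturbation argument is the right way to justify this, and it is not a vacuous formality --- if $U_D$ were multimodal, a regulated maximizer with slack floor could coexist with a strictly better unregulated point below the floor, and the containment of \emph{best-response} pairs would fail; strict concavity of $U_D$ (via $\det C_1 > 0$ from the theorem's hypothesis together with the paper's standing assumption $c_{1,\alpha\beta} > -\sqrt{c_{1,\alpha\alpha}c_{1,\beta\beta}}$, and Lemma \ref{lemma:pd-iff-concave}) is exactly what rules this out. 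Your alternative route --- checking that the candidates of Proposition \ref{prop:D-strategy-regulation} with $\beta_1 > \theta_D$ coincide termwise with the no-regulation candidates --- is closer in spirit to how the paper implicitly leans on its closed-form solutions. One small patch to your perturbation step: feasibility of $\gamma_1^t$ in $F_{\theta_D}$ requires $\beta_1^t \geq \max(\beta_0,\theta_D)$, so in addition to $\beta_1^t > \theta_D$ for small $t$ you should note $\beta_1^t \geq \beta_0$, which follows by the same convexity reasoning you apply to $\alpha_1^t \geq \alpha_0$, since both endpoints satisfy $\beta \geq \beta_0$. With that line added, the argument is complete, and your closing observation that the inclusion is proper is consistent with the paper's use of strict containment.
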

\begin{proof}
$S_0 = S_0^{\text{MC}} \cup S_0^{\text{C}} = S_0^{\text{MC}} \cup \left(\bigcup_{t=0}^\infty S_t^{\text{C}} \right) \supset S_\theta^{\text{C}}$.   
\end{proof}

\begin{lemma}
    If $\theta_D \geq \beta_1^A$, backfiring is impossible.
    \label{lemma:backfiring-impossibility}
\end{lemma}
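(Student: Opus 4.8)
The plan is to observe that this lemma is an almost immediate consequence of the domain specialist's feasibility constraint, once we pin down the meaning of ``backfiring'' in this setting. First I would recall that, in the one-specialist quadratic model, the total safety investment of the deployed technology is simply $\beta_1$ (the generalist raises safety from $0$ to $\beta_0$ and the specialist from $\beta_0$ to $\beta_1$, so the aggregate is $\beta_1$). Hence backfiring means that the regulated equilibrium produces a product whose safety $\beta_1^*$ is strictly below the no-regulation level $\beta_1^A$. I would state explicitly that the comparison is between two scenarios in which a product is actually brought to market: the case in which regulation drives the players to \texttt{abstain} yields no deployed technology and is treated as a separate outcome, not as backfiring, consistent with how \texttt{abstain} is kept distinct from backfiring throughout the paper.

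With this in place, the core argument is one line and uses only $D$'s floor. By Proposition \ref{prop:D-strategy-regulation}, every feasible domain-specialist response, and in particular the equilibrium response, satisfies $\beta_1^* \geq \max(\beta_0^*, \theta_D)$. Since we assume $\theta_D \geq \beta_1^A$, this gives $\beta_1^* \geq \theta_D \geq \beta_1^A$ whenever the specialist participates. Thus the regulatory floor acts as a hard lower bound on the final safety, and the deployed safety cannot fall below the no-regulation baseline no matter how the generalist re-optimizes $\gamma_0$. In particular there is no free-riding configuration available to $G$ that lowers $\beta_1$, because $\beta_1$ is pinned from below by $\theta_D$ directly; the generalist's own floor $\beta_0 \geq \theta_G = 0$ can only push the final safety higher, so it does not interfere with the bound.

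The only point requiring genuine care, and the step I expect to be the main obstacle, is the boundary and \texttt{abstain} bookkeeping rather than any computation. I would make sure to argue that (i) $\beta_1^A$ is itself a well-defined produced-product level (under the no-regulation conditions of Remark \ref{remark:unconstrained-condition}, both players invest non-zero amounts, so the baseline corresponds to an actual deployment), and (ii) the weak inequality $\beta_1^* \geq \beta_1^A$ is exactly what rules out the strict decrease required for backfiring. I do not expect to need Lemma \ref{obs:strat-subset} in this direction: the feasibility floor alone suffices here, and the strategy-subset observation is more naturally used in the constructive direction of Theorem \ref{thm:backfiring} rather than in this impossibility claim.
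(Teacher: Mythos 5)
Your proposal is correct and takes essentially the same route as the paper: the paper's proof is a one-line contradiction observing that backfiring would require $\beta_1 < \beta_1^A$ while compliance forces $\beta_1 \geq \theta_D \geq \beta_1^A$, which is exactly your feasibility-floor argument phrased contrapositively. Your explicit bookkeeping that \texttt{abstain} is a separate outcome rather than an instance of backfiring is left implicit in the paper but is consistent with how it treats \texttt{abstain} throughout (e.g., in Figure \ref{fig:fourpanel-separable}).
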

\begin{proof}
    Assume for contradiction that $\theta_D^* \geq \beta_1^A$ and backfiring occurs. Backfiring would imply $\beta_1(\theta_D=\theta_D^*) < \beta_1(\theta_D=0)= \beta_1^A$. However, this would violate the regulation, which we're given is greater than $\beta_1^A$. Hence we've already established the contradiction.
\end{proof}
\begin{lemma}
    Given a threshold $\theta$, backfiring can occur only if the strategies $(\gamma_0,\gamma_1) \in S_{\theta_D}^{\text{MC}}$. 
\end{lemma}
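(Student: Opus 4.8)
The plan is to prove the contrapositive: if the equilibrium profile lies in $S_{\theta_D}^{\text{C}}$ rather than $S_{\theta_D}^{\text{MC}}$, then no backfiring occurs. So I would let $(\gamma_0^R,\gamma_1^R)$ denote the regulated subgame-perfect equilibrium and assume $(\gamma_0^R,\gamma_1^R)\in S_{\theta_D}^{\text{C}}$, which by definition means $\beta_1^R>\theta_D$. The immediate consequence I would extract is that the regulatory constraint $\beta_1\geq\theta_D$ is slack at this equilibrium: since $D$'s optimization in the regulated and unregulated games differ only by this one inequality (the constraint $\beta_1\geq\beta_0$ being present in both), $D$'s best response to $\gamma_0^R$ is identical with or without regulation. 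Writing $V^{\mathrm{reg}}(\gamma_0):=U_G(\gamma_0,\gamma_1^*(\gamma_0;\theta_D))$ and $V^{\mathrm{unreg}}(\gamma_0):=U_G(\gamma_0,\gamma_1^*(\gamma_0;0))$ for $G$'s induced objective under $D$'s best-response map, I would record $V^{\mathrm{reg}}(\gamma_0^R)=V^{\mathrm{unreg}}(\gamma_0^R)$, and note that by Lemma~\ref{obs:strat-subset} the whole pair $(\gamma_0^R,\gamma_1^R)$ is feasible in the unregulated game.

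Next I would split on whether the threshold exceeds the anarchy safety level. If $\theta_D>\beta_1^A$, backfiring is already ruled out by Lemma~\ref{lemma:backfiring-impossibility} (equivalently, $\beta_1^R\geq\theta_D>\beta_1^A$ directly), so I may assume $\theta_D\leq\beta_1^A$. Under this assumption the key observation is that the regulation is also slack at the anarchy point $\gamma_0^A$: by definition $D$'s unconstrained response to $\gamma_0^A$ has safety $\beta_1^A\geq\theta_D$, so $V^{\mathrm{reg}}(\gamma_0^A)=V^{\mathrm{unreg}}(\gamma_0^A)=:U_G^A$. I would then chain the relevant (in)equalities: optimality of $\gamma_0^A$ in the unregulated game gives $V^{\mathrm{unreg}}(\gamma_0^R)\leq U_G^A$; slackness at both points gives $V^{\mathrm{reg}}(\gamma_0^R)=V^{\mathrm{unreg}}(\gamma_0^R)$ and $V^{\mathrm{reg}}(\gamma_0^A)=U_G^A$; and optimality of $\gamma_0^R$ in the regulated game gives $V^{\mathrm{reg}}(\gamma_0^R)\geq V^{\mathrm{reg}}(\gamma_0^A)$. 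Stringing these together forces $U_G^A\leq V^{\mathrm{reg}}(\gamma_0^R)=V^{\mathrm{unreg}}(\gamma_0^R)\leq U_G^A$, so all quantities are equal and $\gamma_0^R$ is itself an unregulated optimum for $G$.

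To finish I would invoke strict concavity of $U_G$ (Lemma~\ref{lemma:pd-iff-concave}, using the positive-definiteness guaranteed by the hypotheses) to conclude that the unregulated optimizer is unique, hence $\gamma_0^R=\gamma_0^A$; since $D$'s response agrees with its unregulated response, $\gamma_1^R=\gamma_1^A$ and in particular $\beta_1^R=\beta_1^A$, which is not strictly below $\beta_1^A$, so no backfiring. The main obstacle I anticipate is precisely the case $\theta_D\leq\beta_1^A$: the argument that $V^{\mathrm{reg}}$ and $V^{\mathrm{unreg}}$ coincide \emph{at both} $\gamma_0^R$ and $\gamma_0^A$ must be set up carefully, because away from these two points the two objectives can genuinely differ (regulation binds $D$ for generalist strategies with low $\beta_0$), and it is only their coincidence at the two optima, together with uniqueness, that pins down $\gamma_0^R=\gamma_0^A$. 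I would also verify that $D$'s separate constraint $\beta_1\geq\beta_0$ is handled correctly, so that slackness in the $\theta_D$ constraint really does render the two $D$-subproblems identical.
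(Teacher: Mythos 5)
Your proof is correct, and its skeleton is the same as the paper's: the decisive facts in both arguments are that any strategy pair in $S_{\theta_D}^{\text{C}}$ was already feasible in the unregulated game (Lemma~\ref{obs:strat-subset}) and that the anarchy point remains feasible when $\theta_D \leq \beta_1^A$, so $G$ cannot strictly prefer a contribute strategy to $\gamma_0^A$. Where you diverge is in execution: the paper disposes of the case $\theta_D < \beta_1^A$ with a one-line dominance statement ($(\gamma_0^A,\gamma_1^A)$ maximizes $U_G$ over $S_0 \supset S_{\theta_D}^{\text{C}}$, hence weakly dominates all contribute strategies), whereas you run the contrapositive through an explicit value-function chain $V^{\text{reg}}$/$V^{\text{unreg}}$ with slackness established at \emph{both} $\gamma_0^R$ and $\gamma_0^A$, and then invoke strict concavity to get uniqueness of the unregulated optimizer. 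This buys you two things the paper's version leaves implicit. First, the paper's dominance argument is only weak ($\succeq_G$), so on its face a contribute equilibrium tied with anarchy but with different safety is not excluded; your uniqueness step closes exactly that loophole (and is where the hypotheses of Theorem~\ref{thm:backfiring} enter, since $c_{p,\alpha\beta} < \sqrt{c_{p,\alpha\alpha}c_{p,\beta\beta}}$ gives the strict concavity via Lemma~\ref{lemma:pd-iff-concave}). Second, you prove something strictly stronger: a contribute equilibrium must coincide with the anarchy outcome entirely, $\gamma_0^R = \gamma_0^A$ and $\gamma_1^R = \gamma_1^A$, not merely fail to backfire. One point to tighten: Lemma~\ref{lemma:pd-iff-concave} gives strict concavity of $U_G$ in $\gamma_0$ with $\gamma_1$ held fixed, but your uniqueness claim is about the \emph{induced} objective $V^{\text{unreg}}(\gamma_0) = U_G(\gamma_0, \gamma_1^*(\gamma_0))$; you should note that under the standing conditions (Remark~\ref{remark:unconstrained-condition} for player $1$, i.e.\ $C_1^{-1}r \geq 0$) $D$'s unregulated best response is the affine map $\gamma_0 + \frac{1-\delta}{2}C_1^{-1}r$ for every $\gamma_0$, so the revenue term is linear in $\gamma_0$ and $V^{\text{unreg}}$ inherits strict concavity from the cost term. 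With that observation supplied, your argument is complete and somewhat more rigorous than the paper's.
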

\begin{proof}
We have established $S_{\theta_D} = S_{\theta_D}^{\text{MC}} \cup S_{\theta_D}^{\text{C}}$, so the proof will show that the strategies in $S_\theta^{\text{C}}$ can never exhibit backfiring. This would imply, if backfiring occurs over the feasible set of strategies $S_\theta$, it is only possible for strategies in $S_\theta^{\text{MC}}$. The proof proceeds, first for all values $\theta_D \geq \beta_1^A$, and then for all values $\theta < \beta_1^A$.
\begin{itemize}
    \item For $\theta_D \geq \beta_1^A$, backfiring is impossible generally, as established in  Lemma \ref{lemma:backfiring-impossibility}.
    \item For $\theta_D < \beta_1^A$, start by observing that the anarchy solution  $(\gamma_0^A,\gamma_1^A)$ is always feasible. This solution is the strategy tuple that maximizes the utility of $G$ over $S_0$.  Lemma \ref{obs:strat-subset} tells us that this set, $S_0$, contains all sets of regulated strategies where the players contribute: $S_{\theta_D}^\text{contribute} \subset S_0$. Thus: $(\gamma_0^A,\gamma_1^A) := \sup_{U_G}S_0 \succeq_G S_0\supset  S_{\theta_D}^\text{contribute} \rightarrow (\gamma_0^A,\gamma_1^A) \succeq_G S_{\theta_D}^\text{C}$. Thus the anarchy solution is feasible and dominates all strategies in $S_{\theta_D}^{\text{contribute}}$.
\end{itemize}
This completes the proof, and demonstrates that if backfiring is ever to occur, it will exhibit strategies that are \textit{minimally compliant} with the regulation.
\end{proof}

Backfiring is a regulation yielding lower safety than $\beta_0^A$. The claims above state that backfiring cannot occur if $\theta_D>\beta_1^A$ and can only occur if the domain specialist minimally complies. As an immediate corollary, we can claim that backfiring occurs \textit{if and only if} there is a regulation $\theta_D<\beta_1^A$ such that the strategies $(\gamma_0(\theta_D), \gamma_1(\theta_D)) \in S_{\theta_D}^{MC}$. 

\begin{lemma}
    For a given regulation $\theta_D<\beta_1^A$ in our setting, if $G$ prefers \textit{any} minimally compliant strategy $\gamma_0'$ to $\gamma_0^A$, then $G$'s optimal strategy $\gamma_0^* \in S_{\theta_D}^{MC}$ and the regulation backfires.
\label{lemma:backfiring-suffcond}
\end{lemma}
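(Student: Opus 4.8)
The plan is to turn the statement into a comparison of $G$'s best attainable payoff over the two pieces of the partition $S_{\theta_D} = S_{\theta_D}^{\text{MC}} \cup S_{\theta_D}^{\text{C}}$, leaning entirely on the dominance fact proved in the immediately preceding lemma. The only substantive input I need is that the anarchy profile is feasible and satisfies $(\gamma_0^A,\gamma_1^A) \succeq_G S_{\theta_D}^{\text{C}}$, i.e.\ $U_G(\gamma_0^A) \geq \sup_{S_{\theta_D}^{\text{C}}} U_G$. I would first note that since $\theta_D < \beta_1^A$, the regulatory floor is slack at $D$'s unconstrained response $\gamma_1^A$ to $\gamma_0^A$, so $(\gamma_0^A,\gamma_1^A)$ itself lies in $S_{\theta_D}^{\text{C}}$ and this bound is actually attained; but for the argument I really only need it as an upper bound on $U_G$ over the contribute region.

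Next I would invoke the hypothesis directly: there is a minimally compliant $\gamma_0' \in S_{\theta_D}^{\text{MC}}$ with $U_G(\gamma_0') > U_G(\gamma_0^A)$. Chaining this against the dominance bound yields $U_G(\gamma_0') > U_G(\gamma_0^A) \geq \sup_{S_{\theta_D}^{\text{C}}} U_G$, so no strategy in the contribute region can match $\gamma_0'$. Because $U_G(\gamma_0') > U_G(\gamma_0^A) \geq 0$, the generalist strictly prefers $\gamma_0'$ to abstaining, so $G$'s optimum $\gamma_0^*$ exists among the candidate points of Proposition \ref{prop:G-strategy-regulation} and obeys $U_G(\gamma_0^*) \geq U_G(\gamma_0') > \sup_{S_{\theta_D}^{\text{C}}} U_G$. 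This strict inequality excludes $\gamma_0^* \in S_{\theta_D}^{\text{C}}$, forcing $(\gamma_0^*,\gamma_1^*) \in S_{\theta_D}^{\text{MC}}$, which is the first claim.

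For backfiring I would appeal to the corollary stated just before the lemma: membership in $S_{\theta_D}^{\text{MC}}$ means $D$'s equilibrium safety is exactly $\beta_1^* = \theta_D$, and by hypothesis $\theta_D < \beta_1^A$, so the regulated safety strictly undercuts the no-regulation level $\beta_1^A$, which is the definition of backfiring.

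I do not expect a genuine technical obstacle here; the lemma is a clean logical bridge whose whole content is that the strict preference in the hypothesis propagates through the partition. The single point needing care is the direction of the inequalities: the dominance lemma supplies only an upper bound on $U_G$ over $S_{\theta_D}^{\text{C}}$, so I must ensure the hypothesis is a \emph{strict} inequality against $U_G(\gamma_0^A)$ specifically (rather than against some arbitrary contribute strategy) in order to knock out the entire contribute region in one step. The heavy lifting, namely actually exhibiting such a $\gamma_0'$ through the explicit $\epsilon$-perturbation, is deferred to the proof of Theorem \ref{thm:backfiring}, for which this lemma serves as the sufficient condition.
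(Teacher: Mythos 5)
Your proposal is correct and follows essentially the same route as the paper's proof: chaining the hypothesis $U_G(\gamma_0') > U_G(\gamma_0^A)$ against the dominance fact $\gamma_0^A \succeq_G g$ for all $g \in S_{\theta_D}^{\text{C}}$ (which rests on $S_{\theta_D}^{\text{C}} \subset S_0$ and the feasibility of the anarchy profile when $\theta_D < \beta_1^A$) to force $\gamma_0^* \in S_{\theta_D}^{\text{MC}}$. If anything, you are slightly more careful than the paper, since you explicitly rule out abstention via $U_G(\gamma_0') > 0$ and spell out the final step $\beta_1^* = \theta_D < \beta_1^A$, which the paper leaves to the corollary preceding the lemma.
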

\begin{proof}
    We're given $\gamma_0^A$ is optimal over $S_0$. By Lemma \ref{obs:strat-subset}, $S_{\theta_D}^C \subset S_0$. Since $\theta_D<\beta_1^A$, $\gamma_0^A$ remains feasible. The only new strategies available to $G$ are those in $S_{\theta_D}^{MC}$. Thus, if we denote utility-domination using $\succ$, we have $\gamma_0' \succ \gamma_0^A \succeq g \forall g\in S_{\theta_D}^C$. This implies G's optimal strategy $\gamma_0^*$ is either $\gamma_0'$ or otherwise belongs to $S_{\theta_D}^{MC}$.
\end{proof}

Thus our task is to find some regulation $\theta_D$ and some strategy $\gamma_0'$ such that $U_G(\gamma_0')>U_G(\gamma_0^A)$. 

\begin{lemma}
    For small $\epsilon>0$, if the given conditions are met, the following $G$ strategy dominates no regulation:
    $$\gamma_0' = \left[\begin{array}{c}
        \frac{\delta r_\alpha}{2c_{0,\alpha\alpha}}\left(\beta_0^A - 2\epsilon\right) \\
        \beta_0^A-2\epsilon
    \end{array}\right]$$
\end{lemma}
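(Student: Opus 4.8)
By Lemma~\ref{lemma:backfiring-suffcond} it suffices to exhibit a minimally compliant generalist strategy that beats the anarchy strategy $\gamma_0^A$, so the target inequality is $U_G(\gamma_0') > U_G(\gamma_0^A)$ under the weak regime $\theta_G=0$, $\theta_D=\beta_1^A-\epsilon$ (a floor placed just below the unregulated aggregate safety). The plan is to (i) pin down $D$'s response in each case, (ii) collapse $G$'s problem to a single value function of its safety coordinate and the floor, and (iii) expand the utility gap to first order in $\epsilon$, where the sign is governed precisely by the hypothesis of Remark~\ref{remark:unconstrained-condition}.

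First I would identify the two downstream regimes. Since $\theta_D=\beta_1^A-\epsilon<\beta_1^A$, the anarchy pair $(\gamma_0^A,\gamma_1^A)$ stays feasible: with $G$ at $\gamma_0^A$, $D$'s unconstrained optimum has safety $\beta_1^A\ge\theta_D$, so $D$ plays it and $G$ collects exactly its no-regulation utility $U_G(\gamma_0^A)$. If instead $G$ drops its safety to $\beta_0^A-2\epsilon$, then $D$'s unconstrained safety would be $(\beta_0^A-2\epsilon)+(\beta_1^A-\beta_0^A)=\beta_1^A-2\epsilon<\theta_D$, so the floor binds and $D$ selects the minimally compliant candidate of Proposition~\ref{prop:D-strategy-regulation}, with $\beta_1=\theta_D$ and $\alpha_1=\alpha_0+\frac{(1-\delta)r_\alpha}{2c_{1\alpha\alpha}}-\frac{c_{1\alpha\beta}}{c_{1\alpha\alpha}}(\theta_D-\beta_0)$. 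Here I would also dispatch the remaining candidates of Proposition~\ref{prop:D-strategy-regulation}: the interior point is infeasible, and for small $\epsilon$ abstention is dominated by continuity (at $\epsilon=0$ the play coincides with anarchy, where $U_D>0$), so minimal compliance is genuinely $D$'s best response.

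Next I would collapse $G$'s choice to $V(\beta_0,\theta_D):=\max_{\alpha_0}U_G$ with $D$ minimally complying, optimizing out performance. The first-order condition in $\alpha_0$ returns $\alpha_0^*(\beta_0)=\frac{\delta r_\alpha}{2c_{0\alpha\alpha}}-\frac{c_{0\alpha\beta}}{c_{0\alpha\alpha}}\beta_0$, which at $\beta_0=\beta_0^A-2\epsilon$ is exactly the performance coordinate of $\gamma_0'$; hence $U_G(\gamma_0')=V(\beta_0^A-2\epsilon,\beta_1^A-\epsilon)$. A short substitution shows $\alpha_0^*(\beta_0^A)=\alpha_0^A$ and that $D$'s minimally compliant response reduces to $\gamma_1^A$ when $\theta_D=\beta_1^A$, so $V(\beta_0^A,\beta_1^A)=U_G(\gamma_0^A)$. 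The quantity to sign is therefore $\Delta(\epsilon):=V(\beta_0^A-2\epsilon,\beta_1^A-\epsilon)-V(\beta_0^A,\beta_1^A)$, which vanishes at $\epsilon=0$.

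Finally I would expand $\Delta$ to first order. Because $\alpha_0$ is optimized out, the envelope theorem gives $\partial V/\partial\beta_0$ and $\partial V/\partial\theta_D$ as the partials of the pre-optimized objective at $\alpha_0^*$. Evaluated at the anarchy point and using the closed form $\beta_0^A=\frac{\delta(c_{0\alpha\alpha}r_\beta-c_{0\alpha\beta}r_\alpha)}{2\det C_0}$, the generalist cross-terms in $c_{0\alpha\beta}$ cancel, leaving $\partial V/\partial\beta_0=-\delta\big(r_\beta-\frac{c_{1\alpha\beta}}{c_{1\alpha\alpha}}r_\alpha\big)$ and $\partial V/\partial\theta_D=+\delta\big(r_\beta-\frac{c_{1\alpha\beta}}{c_{1\alpha\alpha}}r_\alpha\big)$. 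Since $\beta_0$ shifts by $-2\epsilon$ and $\theta_D$ by $-\epsilon$, these combine to $\Delta(\epsilon)=\epsilon\,\delta\big(r_\beta-\frac{c_{1\alpha\beta}}{c_{1\alpha\alpha}}r_\alpha\big)+O(\epsilon^2)$. The hypothesis $c_{1,\alpha\beta}<\frac{c_{1,\alpha\alpha}r_\beta}{r_\alpha}$ is exactly positivity of this coefficient, so $\Delta(\epsilon)>0$ for all sufficiently small $\epsilon>0$, proving $\gamma_0'$ dominates no regulation. I expect the main obstacle to lie not in the expansion but in the bookkeeping of the first step---certifying minimal compliance against the full candidate list and nailing the identity $V(\beta_0^A,\beta_1^A)=U_G(\gamma_0^A)$ so that $\Delta$ measures the intended difference; the cancellation of the $c_{0\alpha\beta}$ terms is the simplification that isolates the sign to $D$'s cost geometry.
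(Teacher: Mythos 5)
Your proposal is correct and follows the same conceptual route as the paper's own proof: place a weak floor just below the unregulated safety level, let $G$ deviate downward so that $D$'s best response becomes minimally compliant, invoke Lemma~\ref{lemma:backfiring-suffcond}, and expand $\Delta U_G$ to first order in $\epsilon$, with the sign governed exactly by the Remark~\ref{remark:unconstrained-condition} condition $r_\beta > \frac{c_{1,\alpha\beta}}{c_{1,\alpha\alpha}} r_\alpha$. What is worth recording is that your execution quietly repairs three infelicities in the paper's version. First, you place the floor at $\theta_D = \beta_1^A - \epsilon$ (just below $D$'s \emph{final} anarchy safety), whereas Theorem~\ref{thm:backfiring} writes $\theta_D = \beta_0^A - \epsilon$; with the latter floor, $D$'s unconstrained continuation from base safety $\beta_0^A - 2\epsilon$ still clears the constraint, minimal compliance fails, and by Lemma~\ref{obs:strat-subset} the deviation pair was already feasible at anarchy and so cannot beat $\gamma_0^A$ --- your placement is the one under which the lemma is true. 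Second, you read the performance coordinate of $\gamma_0'$ as the conditionally optimal $\frac{\delta r_\alpha}{2c_{0,\alpha\alpha}} - \frac{c_{0,\alpha\beta}}{c_{0,\alpha\alpha}}\left(\beta_0^A - 2\epsilon\right)$, matching the third candidate of Proposition~\ref{prop:G-strategy-regulation}; the literal product form in the statement appears to be a typo, and your FOC/envelope treatment is what justifies this reading. Third, your asymptotics $\Delta(\epsilon) = \epsilon\,\delta\bigl(r_\beta - \frac{c_{1,\alpha\beta}}{c_{1,\alpha\alpha}}r_\alpha\bigr) + O(\epsilon^2)$ is the correct one: since the deviation coincides with anarchy at $\epsilon = 0$, the gap must vanish there, whereas the paper reports a nonzero constant limit $\frac{\delta(1-\delta)r_\beta}{2c_{1\beta\beta}}\bigl(r_\beta - \frac{c_{1,\alpha\beta}}{c_{1,\alpha\alpha}}r_\alpha\bigr)$; even reinterpreted as a first-order coefficient, its prefactor does not survive a direct check (in the separable example $C_0 = C_1 = I$, $r_\alpha = r_\beta = 1$, $\delta = \nicefrac{1}{2}$, one computes $\Delta = \tfrac{1}{2}\epsilon - 4\epsilon^2$, matching your coefficient $\delta = \nicefrac{1}{2}$ rather than the paper's $\nicefrac{1}{8}$). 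The sign factor $\bigl(r_\beta - \frac{c_{1,\alpha\beta}}{c_{1,\alpha\alpha}}r_\alpha\bigr)$ agrees in both, so the theorem is unaffected, and the cancellation of the $c_{0,\alpha\beta}$ terms via $G$'s anarchy first-order conditions in your envelope computation is correct. Two minor points to tighten: to run the continuity argument against abstention you should note $U_D^A > 0$ strictly (the unconstrained increment strictly improves on zero investment, which already yields $(1-\delta)r^T\gamma_0' \geq 0$), and the axis-constrained candidates of Proposition~\ref{prop:D-strategy-regulation} are dispatched by observing that the compliant $\alpha$-increment tends to $D$'s interior anarchy increment, which is positive under the hypotheses.
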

\begin{proof}
    Equation \ref{eqn:backfiring-noreg-strategies} give us $G$ and $D$'s strategies under no regulation. Given $G$'s candidate strategy stated in the Lemma, we compute $D$'s best response. Observe this must be a minimally-compliant best response, because $G$'s strategy was constructed to be a difference $\beta_0^A + \epsilon$ from D's regulatory floor. Thus, by Proposition \ref{prop:D-strategy-regulation}, we have:
$$\gamma_1' = \left[\begin{array}{c}
    \alpha_0' + \frac{(1-\delta)}{2c_{1\alpha\alpha}}-\frac{c_{1,\alpha\beta}}{c_{1\alpha\alpha}}\theta_D  \\
    \theta_D
\end{array}\right]$$
We compare G's utility in the two scenarios:
\begin{enumerate}
    \item $(\gamma_0',\gamma_1') \rightarrow     U_G' = \delta \left( r_\alpha \alpha_1' + r_\beta \beta_1' \right) - c_{0,\alpha\alpha} (\alpha_0')^2 - 2c_{0,\alpha\beta} \alpha_0' \beta_0' - c_{0,\beta\beta} (\beta_0')^2$
    \item $(\gamma_1^A, \gamma_1^A) \rightarrow U_G^A = \delta \left( r_\alpha \alpha_1^A + r_\beta \beta_1^A \right) - c_{0,\alpha\alpha} (\alpha_0^A)^2 - 2c_{0,\alpha\beta} \alpha_0^A \beta_0^A - c_{0,\beta\beta} (\beta_0^A)^2$ 
\end{enumerate}
We compute the difference $\Delta U_G = U_G'-U_G^A$. We expand both terms and take the limit as $\epsilon \searrow 0$ to get the following:
$$\lim_{\epsilon \searrow 0}\Delta U_G = \frac{\delta (1-\delta)r_\beta}{2c_{1\beta\beta}}\left(r_{\beta} - \frac{c_{1,\alpha\beta}}{c_{1,\alpha\alpha}}r_\alpha \right)$$

A sufficient condition for this quantity being positive is stated below. The reason is all terms outside the parentheses are given as positive.
$$r_\beta > \frac{c_{1,\alpha\beta}}{c_{1,\alpha\alpha}}r_\alpha.$$
Notice the above condition is given as it is one of the conditions in remark \ref{remark:unconstrained-condition}.\footnote{This is also the condition for having a non-zero safety investment when costs are convex, and intuitively, backfiring is impossible when safety investment is zero.}
\end{proof}
     Thus, we have shown that for small positive $\epsilon$, the generalist prefers the backfiring strategy to the unconstrained optimum $\gamma_0^A$. By Lemma \ref{lemma:backfiring-suffcond}, the optimal regulated strategy is an element in $S_{\theta_D}^{MC}$ and the regulation backfires.
\end{proof}

\section{Proof of the mutualism result}
\label{app:mutualism}

Here we prove Theorem \ref{thm:mutualism}, that for a swath of games there exists a set of regluations that mutually improve the player's utilities.
\begin{proof}
    Observe that we only have to provide a single instance of regulation that does better than the unregulated optimal $\gamma_0^A,\gamma_1^A$ to show that there exists a Pareto improvement effect of regulation. We consider the following minimal-compliance strategies (using Proposition \ref{prop:D-strategy-regulation} and \ref{prop:G-strategy-regulation}):
$$\gamma_0' = \left[\begin{array}{c}
    \frac{\delta r_\alpha}{2c_{0,\alpha\alpha}}-\frac{c_{0,\alpha\beta}}{c_{0,\alpha\alpha}}\theta_G\\
     \theta_G
\end{array}\right], \gamma_1' = \left[\begin{array}{c}
    \alpha_0 + \frac{(1-\delta)r_\alpha}{2c_{1,\alpha\alpha}}-\frac{c_{1,\alpha\beta}}{c_1{\alpha\alpha}}\theta_D\\
     \theta_D
\end{array}\right].$$
Observe these are feasible because they are compliant and, for small $\epsilon$, the performance investment is positive.
    \begin{lemma}
        For the specified conditions, $U_G(\gamma_0',\gamma_1')> U_G(\gamma_0^A,\gamma_1^A)$
    \end{lemma}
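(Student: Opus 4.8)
The plan is to show directly that $\Delta U_G := U_G(\gamma_0',\gamma_1') - U_G(\gamma_0^A,\gamma_1^A) > 0$ for all sufficiently small $\epsilon>0$ under the stated hypothesis. First I would confirm that $\gamma_1'$ is genuinely $D$'s best response to $\gamma_0'$, i.e.\ that the floor $\theta_D$ binds so $D$ minimally complies. Since $\gamma_0'$ has $\beta_0' = \theta_G = \beta_0^A+\epsilon$ and $D$'s unconstrained response adds $\frac{1-\delta}{2}(C_1^{-1}r)_\beta = \beta_1^A - \beta_0^A$ on top of $\beta_0'$, $D$'s unconstrained safety in response to $\gamma_0'$ is $\beta_1^A+\epsilon$, which is strictly below $\theta_D = \beta_1^A + 2\epsilon$. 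Hence the floor binds, and Proposition \ref{prop:D-strategy-regulation} yields precisely the minimally-compliant $\gamma_1'$ written in the lemma (with $\max(0,\theta_D-\beta_0')=\theta_D-\beta_0'$, valid because $\beta_1^A>\beta_0^A$ and $\epsilon$ is small).

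The core of the argument is to decompose the perturbation into two effects, exploiting that $\gamma_0^A$ is the interior, stationary maximizer of $G$'s effective utility $\tilde U_G(\gamma_0) := \delta r^T\bigl(\gamma_0 + \tfrac{1-\delta}{2}C_1^{-1}r\bigr) - \gamma_0^T C_0\gamma_0$ (the value $G$ would obtain if $D$ responded unconstrained). Writing $\gamma_0' = \gamma_0^A + \epsilon v$ with $v = (-c_{0,\alpha\beta}/c_{0,\alpha\alpha},\,1)^T$, effect (a) is the change $\tilde U_G(\gamma_0') - \tilde U_G(\gamma_0^A) = -\epsilon^2\, v^T C_0 v$, which is exactly $O(\epsilon^2)$ and negative precisely because $\nabla\tilde U_G(\gamma_0^A)=0$ and the Hessian is $-2C_0$. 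Effect (b) is the extra revenue $G$ captures because the regulation forces $D$'s safety exactly $\epsilon$ above its unconstrained level; by Proposition \ref{prop:D-strategy-regulation} this forced increment of $\epsilon$ carries a performance adjustment of $-\frac{c_{1,\alpha\beta}}{c_{1,\alpha\alpha}}\epsilon$, so $G$'s revenue share rises by exactly $\delta\epsilon\bigl(r_\beta - \frac{c_{1,\alpha\beta}}{c_{1,\alpha\alpha}}r_\alpha\bigr)$, which is first-order in $\epsilon$. Combining, $\Delta U_G = \delta\epsilon\bigl(r_\beta - \frac{c_{1,\alpha\beta}}{c_{1,\alpha\alpha}}r_\alpha\bigr) - \epsilon^2\, v^T C_0 v$.

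I would then conclude: the stated condition forces $r_\beta - \frac{c_{1,\alpha\beta}}{c_{1,\alpha\alpha}}r_\alpha > 0$ (this is the inequality $|c_{1,\alpha\beta}| < c_{1,\alpha\alpha}r_\beta/r_\alpha$), so the positive linear term dominates the negative quadratic term, and $\Delta U_G>0$ for every $0<\epsilon<\delta\bigl(r_\beta - \frac{c_{1,\alpha\beta}}{c_{1,\alpha\alpha}}r_\alpha\bigr)/(v^T C_0 v)$. The hard part will be purely the bookkeeping: in a brute-force expansion of $\Delta U_G$, the $O(\epsilon)$ contributions generated by sliding $G$'s own strategy along $v$ (the ones involving $c_{0,\alpha\beta}/c_{0,\alpha\alpha}$) must cancel exactly, and it is the stationarity of $\gamma_0^A$ that guarantees this. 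I would therefore lean on the $(a)+(b)$ decomposition above to make that cancellation transparent rather than grinding through the algebra. A secondary point worth flagging is that the resulting sign condition here is \emph{one-sided} ($r_\beta > \frac{c_{1,\alpha\beta}}{c_{1,\alpha\alpha}}r_\alpha$), exactly as in the backfiring proof; the \emph{two-sided} bound appearing in Theorem \ref{thm:mutualism} should only be required for the companion lemma that certifies the simultaneous improvement of $U_D$.
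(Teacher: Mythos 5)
Your proof is correct, and it reaches the conclusion by a genuinely different (and cleaner) route than the paper. The paper grinds out $\Delta\alpha_0$, $\Delta\alpha_1$, $\Delta\beta_1$ componentwise and then compares revenue and cost changes term by term; you instead exploit the stationarity of $\gamma_0^A$ for the effective utility $\tilde U_G$, which makes the cancellation of all $O(\epsilon)$ terms generated by $G$'s own move automatic rather than an algebraic accident. What this buys is significant: your decomposition exposes a slip in the paper's own computation. The paper evaluates the change in $G$'s cost as $(\Delta\gamma_0)^T C_0 (\Delta\gamma_0)$, but the true change is
$$\Delta\bigl(\gamma_0^T C_0\gamma_0\bigr) \;=\; 2(\Delta\gamma_0)^T C_0\,\gamma_0^A + (\Delta\gamma_0)^T C_0(\Delta\gamma_0) \;=\; \delta\, r^T\Delta\gamma_0 + O(\epsilon^2),$$
using $C_0\gamma_0^A=\tfrac{\delta}{2}r$; the omitted cross term is first order in $\epsilon$. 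Once it is restored, the $r_\alpha\Delta\alpha_0$ contributions and one copy of $\delta r_\beta\epsilon$ cancel against the revenue side, and the exact change is your
$$\Delta U_G=\delta\epsilon\Bigl(r_\beta-\frac{c_{1,\alpha\beta}}{c_{1,\alpha\alpha}}r_\alpha\Bigr)-\epsilon^2\,v^T C_0 v,$$
not the paper's $\delta\epsilon\bigl(r_\alpha(\frac{c_{0,\alpha\beta}}{c_{0,\alpha\alpha}}-\frac{c_{1,\alpha\beta}}{c_{1,\alpha\alpha}})+2r_\beta\bigr)$. (Relatedly, the paper's $\Delta\alpha_0=+\frac{c_{0,\alpha\beta}}{c_{0,\alpha\alpha}}\epsilon$ carries a sign slip; the correct value is $-\frac{c_{0,\alpha\beta}}{c_{0,\alpha\alpha}}\epsilon$, matching your direction $v$.) Both expressions happen to be positive under the theorem's two-sided hypothesis, so the lemma and theorem stand either way, but yours is the correct first-order expansion, your explicit admissible range for $\epsilon$ is exact since the utilities are quadratic, and your closing observation is right: only the one-sided condition $r_\beta>\frac{c_{1,\alpha\beta}}{c_{1,\alpha\alpha}}r_\alpha$ is needed for the $U_G$ half, and the paper's apparent reliance on the two-sided bound for this lemma stems from the spurious $c_{0,\alpha\beta}/c_{0,\alpha\alpha}$ term in its flawed expression.

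Two minor points. First, your verification that the floor actually binds ($D$'s unconstrained response reaches only $\beta_1^A+\epsilon<\theta_D=\beta_1^A+2\epsilon$, so minimal compliance follows from strict concavity) is more careful than the paper's one-line feasibility remark and is worth keeping. Second, your stated upper bound on $\epsilon$ certifies $\Delta U_G>0$, but $\epsilon$ must additionally be small enough that the candidate pair stays feasible ($\alpha_1'\geq\alpha_0'$ and $U_D\geq 0$); you assume this implicitly, as does the paper, and it should be said explicitly.
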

    \begin{proof}
     Start by computing the change in the generalist's performance and safety investments between these strategies. The change in safety investment is simply $\Delta \beta_0 = \theta_G - \beta_0^A = \epsilon$. The change in performance investment is given by:
    \begin{eqnarray*}
        \Delta \alpha_0 =& \frac{\delta r_\alpha}{2c_{0,\alpha\alpha}}- \frac{c_{0,\alpha\beta}}{c_{0,\alpha\alpha}}\left(\frac{\delta}{2\det C_0} (-c_{0,\alpha\beta}r_\alpha + c_{0,\alpha\alpha}r_\beta + \epsilon)\right) - \frac{\delta}{2\det C_0}\left(c_{0\beta\beta} r_\alpha - c_{0\alpha\beta} r_\beta\right)\\ 
        = & \frac{\delta r_\alpha}{2c_{0,\alpha\alpha}} + \frac{c_{0,\alpha\beta}^2}{c_{0,\alpha\alpha}}\frac{\delta r_\alpha}{2\det C_0} - \cancel{c_{0,\alpha\beta}\frac{\delta}{2\det C_0}r_\beta}+ \frac{c_{0,\alpha\beta}}{c_{0,\alpha\alpha}}\epsilon - \frac{\delta}{2\det C_0}c_{0,\beta\beta}r_{\alpha} + \cancel{c_{0,\alpha\beta}\frac{\delta}{2\det C_0}r_\beta}\\
        = & \left(\frac{\delta}{2c_{0,\alpha\alpha}}+ \frac{c_{0,\alpha\beta}^2\delta}{c_{0,\alpha\alpha} 2 \det C_0}- \frac{\delta c_{0,\beta\beta}}{2 \det C_0 }\right)r_\alpha + \frac{c_{0,\alpha\beta}}{c_{0,\alpha\alpha}}\epsilon \\
        = & \frac{\delta r_\alpha}{2}\left(\frac{1}{c_{0,\alpha\alpha}}+\frac{c_{0,\alpha\beta}^2}{c_{0,\alpha\alpha}(c_{0,\alpha\alpha}c_{0,\beta\beta}-c_{0,\alpha\beta}^2)}-\frac{c_{0,\beta\beta}}{c_{0,\alpha\alpha}}\right) + \frac{c_{0,\alpha\beta}}{c_{0,\alpha\alpha}}\epsilon\\
        = &  \frac{\delta r_\alpha}{2}\cancel{\left(\frac{\det C_0 + c_{\alpha\beta}^2 - c_{0,\alpha\alpha}c_{0,\beta\beta}}{c_{0,\alpha\alpha}\det C_0}\right)} + \frac{c_{0,\alpha\beta}}{c_{0,\alpha\alpha}}\epsilon\\
        = &\frac{c_{0,\alpha\beta}}{c_{0,\alpha\alpha}}\epsilon.
    \end{eqnarray*}
    By the same logic, we solve for the change in the players' strategies. First, $\Delta \beta_1 = \beta_1' - \beta_1^A = \beta_1^A + 2\epsilon - \beta_1^A = 2\epsilon$. The change in performance is given by: 

        \begin{eqnarray*}
            \Delta \alpha_1 = & \alpha_1' - \alpha_1^A \\
            = & \left[\alpha_0' + \frac{(1-\delta)r_\alpha}{2c_{1,\alpha\alpha}}-\frac{c_{1,\alpha\beta}}{c_{1,\alpha\alpha}}\left(\theta_D - \beta_0' \right)\right]- \left[\alpha_0^A + \frac{(1-\delta)}{2\det C_1}(c_{1,\beta\beta}r_\alpha - c_{1,\alpha\beta}r_\beta)\right]\\
            = & \Delta \alpha_0 + \frac{(1-\delta)r_\alpha}{2c_{1,\alpha\alpha}}-\frac{c_{1,\alpha\beta}}{c_{1,\alpha\alpha}}\left(\frac{(1-\delta)}{2\det C_1}\left(-c_{1,\alpha\beta}r_\alpha + c_{1,\alpha\alpha}r_\beta\right) + \epsilon\right) - \frac{(1-\delta)}{2\det C_1}\left(c_{1,\beta\beta}r_\alpha - c_{1,\alpha\beta}r_\beta\right)\\
            = & \frac{c_{0,\alpha\beta}}{c_{0,\alpha\alpha}}\epsilon + \frac{(1-\delta)r_\alpha}{2c_{1,\alpha\alpha}}-\frac{c_{1,\alpha\beta}}{c_{1,\alpha\alpha}}\left(\frac{(1-\delta)}{2\det C_1}\left(-c_{1,\alpha\beta}r_\alpha + c_{1,\alpha\alpha}r_\beta\right) + \epsilon\right) - \frac{(1-\delta)}{2\det C_1}\left(c_{1,\beta\beta}r_\alpha - c_{1,\alpha\beta}r_\beta\right)\\
            = & \epsilon\left(\frac{c_{0,\alpha\beta}}{c_{0,\alpha\alpha}} - \frac{c_{1,\alpha\beta}}{c_{1,\alpha\alpha}}\right).
        \end{eqnarray*}
    The change in $G$'s cost is given by:
    \begin{eqnarray*}
        \Delta (\text{G's cost}) = & \left[\begin{array}{c}
            \Delta \alpha_0 \\
            \Delta \beta_0
        \end{array}\right]^T C_0 \left[\begin{array}{c}
            \Delta \alpha_0 \\
            \Delta \beta_0
        \end{array}\right] \\  
        = & c_{0,\alpha\alpha}(\frac{c_{0,\alpha\beta}}{c_{0,\alpha\alpha}})^2 \epsilon^2 + 2\left(\frac{c_{0,\alpha\beta}}{c_{0,\alpha\alpha}}\epsilon\right)\epsilon+c_{0\beta\beta}\epsilon^2
    \end{eqnarray*}
    Notice these are all $\epsilon^2$ terms, meaning as $\epsilon$ is brought to very small positive values, they approach zero at an exponential rate. The contribution to $G$'s revenue is given by:
    \begin{eqnarray*}
        \Delta (\text{G's revenue}) = & \delta(r_\alpha\Delta \alpha_1 + r_\beta\Delta \beta_1)\\
        = & \delta \left(r_\alpha \epsilon\left(\frac{c_{0,\alpha\beta}}{c_{0,\alpha\alpha}} - \frac{c_{1,\alpha\beta}}{c_{1,\alpha\alpha}}\right) + r_\beta 2\epsilon\right)
    \end{eqnarray*}
    Notice these are terms of $\epsilon$, whereas the cost effects are solely terms of $\epsilon^2$. Therefore, for sufficiently small $\epsilon$, we say:
    $$\lim_{\epsilon \searrow 0} \Delta U_G =\delta\left(r_\alpha \epsilon \left(\frac{c_{0,\alpha\beta}}{c_{0,\alpha\alpha}}-\frac{c_{1,\alpha\beta}}{c_{1,\alpha\alpha}}+2r_{\beta } \epsilon\right) \right)$$
    Using the given conditions, we know:
    \begin{eqnarray*}
        \lim_{\epsilon \searrow 0} \Delta U_G  >0 \iff & \delta\left(r_\alpha \epsilon \left(\frac{c_{0,\alpha\beta}}{c_{0,\alpha\alpha}}-\frac{c_{1,\alpha\beta}}{c_{1,\alpha\alpha}}\right)+2r_{\beta } \epsilon\right) >0\\
        \iff & r_\alpha  \left(\frac{c_{0,\alpha\beta}}{c_{0,\alpha\alpha}}-\frac{c_{1,\alpha\beta}}{c_{1,\alpha\alpha}}\right) +2r_{\beta } >0 \\
        \iff & \frac{r_\alpha c_{0,\alpha\beta}}{r_\beta c_{0,\alpha\alpha}} - \frac{r_\alpha c_{1,\alpha\beta}}{r_\beta c_{1,\alpha\alpha}} > -2.
    \end{eqnarray*}
    Our conditions strictly bound the absolute value of both terms on the left hand side below 1, so this completes the Lemma's proof.
    \end{proof}
    \begin{lemma}
        For the specified conditions, $U_D(\gamma_0',\gamma_1')> U_D(\gamma_0^A,\gamma_1^A)$
    \end{lemma}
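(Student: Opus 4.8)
The plan is to mirror the preceding lemma for $U_G$, reusing the strategy-change quantities $\Delta\alpha_0 = \frac{c_{0,\alpha\beta}}{c_{0,\alpha\alpha}}\epsilon$, $\Delta\beta_0 = \epsilon$, $\Delta\alpha_1 = \epsilon\left(\frac{c_{0,\alpha\beta}}{c_{0,\alpha\alpha}} - \frac{c_{1,\alpha\beta}}{c_{1,\alpha\alpha}}\right)$, and $\Delta\beta_1 = 2\epsilon$ already computed there. Since $U_D = (1-\delta)r^T\gamma_1 - (\gamma_1-\gamma_0)^T C_1(\gamma_1-\gamma_0)$, I would write $\Delta U_D$ as a revenue change minus a cost change and track each to first order in $\epsilon$.

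The revenue change is immediate: $\Delta(\text{D's revenue}) = (1-\delta)(r_\alpha\Delta\alpha_1 + r_\beta\Delta\beta_1)$, which is linear in $\epsilon$. The cost change is the step requiring care, since $D$'s cost depends on the gap $\gamma_1-\gamma_0$ rather than on $D$'s own strategy alone. I would first form the change in this gap by telescoping the two players' perturbations: the performance component collapses to $\Delta(\alpha_1-\alpha_0) = -\frac{c_{1,\alpha\beta}}{c_{1,\alpha\alpha}}\epsilon$ and the safety component to $\Delta(\beta_1-\beta_0) = \epsilon$. Writing $v := \gamma_1^A-\gamma_0^A = \frac{1-\delta}{2}C_1^{-1}r$ and $w := \Delta(\gamma_1-\gamma_0)$, the cost change expands as $(v+w)^T C_1(v+w) - v^T C_1 v = 2v^T C_1 w + w^T C_1 w$.

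Here the crucial simplification is the first-order optimality identity $C_1 v = \frac{1-\delta}{2}r$ at the unregulated optimum (from $\nabla U_D = 0$), which gives $v^T C_1 = \frac{1-\delta}{2}r^T$ and hence $2v^T C_1 w = (1-\delta)\epsilon\left(r_\beta - \frac{c_{1,\alpha\beta}}{c_{1,\alpha\alpha}}r_\alpha\right)$. The remaining term $w^T C_1 w$ is $O(\epsilon^2)$ and drops out in the limit. Subtracting the cost change from the revenue change and dividing by $(1-\delta)\epsilon$, the $\frac{c_{1,\alpha\beta}}{c_{1,\alpha\alpha}}r_\alpha$ contributions cancel, leaving
$$\lim_{\epsilon\searrow 0}\frac{\Delta U_D}{(1-\delta)\epsilon} = r_\beta + \frac{c_{0,\alpha\beta}}{c_{0,\alpha\alpha}}r_\alpha.$$
This is strictly positive precisely when $\frac{r_\alpha c_{0,\alpha\beta}}{r_\beta c_{0,\alpha\alpha}} > -1$, which is guaranteed by the hypothesis $|c_{0,\alpha\beta}| < \frac{c_{0,\alpha\alpha}r_\beta}{r_\alpha}$; hence $U_D(\gamma_0',\gamma_1') > U_D(\gamma_0^A,\gamma_1^A)$ for sufficiently small $\epsilon$.

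The main obstacle is exactly the cost term: because $D$'s cost is built on the gap $\gamma_1-\gamma_0$, both players' perturbations enter it, so a naive expansion is messy. The payoff from the optimality identity $C_1 v = \frac{1-\delta}{2}r$ is twofold—it linearizes the quadratic cross term, and, together with the telescoping of the gap, it forces the dependence on $D$'s own interaction term $c_{1,\alpha\beta}$ to cancel, so that positivity reduces to a single condition on $G$'s cost curvature. Verifying this cancellation carefully, and confirming that the surviving $\epsilon^2$ terms are genuinely negligible in the limit, is the delicate part of the argument.
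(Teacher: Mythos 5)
Your proof is correct, and it is in fact more complete than the paper's own argument, which disposes of this lemma in a single sentence: the limiting effect on $D$'s revenue ``is calculated exactly the same way as above,'' with $\delta$ replaced by $(1-\delta)$. That remark addresses only the revenue term and tacitly treats $D$'s cost change as negligible in the limit. You correctly observe that this is not automatic: $D$'s cost is a quadratic form in the gap $\gamma_1-\gamma_0$, the anarchy gap $v=\frac{1-\delta}{2}C_1^{-1}r$ is nonzero, and the gap itself shifts at first order, $w=\left(-\frac{c_{1,\alpha\beta}}{c_{1,\alpha\alpha}}\epsilon,\ \epsilon\right)$, so the cross term $2v^TC_1w=(1-\delta)\epsilon\left(r_\beta-\frac{c_{1,\alpha\beta}}{c_{1,\alpha\alpha}}r_\alpha\right)$ survives as $\epsilon\searrow 0$. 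Your use of the first-order optimality identity $C_1v=\frac{1-\delta}{2}r$ to linearize this term is exactly the right move, and the resulting cancellation of the $\frac{c_{1,\alpha\beta}}{c_{1,\alpha\alpha}}r_\alpha$ contributions gives $\lim_{\epsilon\searrow0}\Delta U_D/\left((1-\delta)\epsilon\right)=r_\beta+\frac{c_{0,\alpha\beta}}{c_{0,\alpha\alpha}}r_\alpha$, which is positive under $|c_{0,\alpha\beta}|<\frac{c_{0,\alpha\alpha}r_\beta}{r_\alpha}$, one of the theorem's hypotheses; so the lemma follows. What your route buys is a verified first-order expansion in place of an analogy: it shows the conclusion holds not because $D$'s cost change is second order (it is not), but because the first-order cost term is dominated by the first-order revenue gain. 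As a side observation, the same envelope identity applied to $G$ (where $C_0\gamma_0^A=\frac{\delta}{2}r$) reveals a first-order piece $\delta r^T\Delta\gamma_0$ in $G$'s cost change that the paper's preceding lemma drops when it equates the cost change with $\Delta\gamma_0^TC_0\Delta\gamma_0$; your more careful bookkeeping (which would change that lemma's limiting expression to $\delta\epsilon\left(r_\beta-\frac{c_{1,\alpha\beta}}{c_{1,\alpha\alpha}}r_\alpha\right)$, still positive under the stated conditions) would be the natural fix there as well.
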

        The limiting effect on $D$'s revenue is calculated exactly the same way as above, except that the revenue expression is multiplied by $(1-\delta)$ instead of $\delta$.

    This completes the proof, as both players are better off under the regulation.
\end{proof}

\end{document}